\newtheorem{theorem}{Theorem}
\newtheorem{lemma}{Lemma}
\newtheorem{definition}{Definition}
\newtheorem{corollary}{Corollary}
\newtheorem{remark}{Remark}
\newtheorem{example}{Example}
\newtheorem{applemma}{Lemma}[section]
\begin{document}

\title{Robust Joint Message and State Communication Under Arbitrarily Varying Jamming}

\author{
  \IEEEauthorblockN{Yiqi Chen,
  Holger Boche}\\
  \IEEEauthorblockA{
Technical University of Munich,  
80333 Munich, Germany \\
\{yiqi.chen, boche\}@tum.de}
}

\maketitle
\thispagestyle{empty}
\pagestyle{empty}

\begin{abstract}
Joint message and state transmission under arbitrarily varying jamming is investigated in this paper. The problem is modeled as the transmission over a channel with random states with a fixed distribution and jamming that varies in an unknown manner. We consider cases in which the channel state is known at the encoder strictly causally and noncausally. For each case, both the average error criterion and the maximal error criterion of the message transmission are adopted. The main results of this paper are lower bounds of the capacity--distortion function of the aforementioned scenarios. Some capacity-achieving cases are also provided. The proposed coding schemes are deterministic, and no correlated randomness is needed to achieve reliable communication and estimation. It turns out that the performance of the system under the average error can strictly outperform the maximal error case, which is in accordance with normal communication over arbitrarily varying channels.
\end{abstract}

\begin{IEEEkeywords}
Arbitrarily varying channels, average error criterion, maximal error criterion, deterministic coding, joint message and state communication
\end{IEEEkeywords}

\section{Introduction}
Integrated sensing and communication (ISAC) has shown great application prospects in the next generation of wireless communication networks, including drone communications, indoor localization, autonomous vehicle networks, and robotics \cite{fettweis20216g,liu2022survey,schwenteck20236g}. It combines two functionalities for future communication networks including information transmission via Shannon's channel model and environment detection via radar systems, and allows the two systems to share the same frequency band and hardware to improve the spectrum efficiency and reduce the energy cost\cite{liu2022survey}. Yet, a lot of research has been conducted from a signal processing perspective investigating the optimal beamforming design\cite{liu2021cramer}, the fundamental limits of the ISAC models in different scenarios remain open problems.

Research towards those unsolved problems for joint message communication and state sensing traces back to the joint message and state transmission \cite{sutivong2005channel,kim2008state,choudhuri2013causal}, which is also known as the \emph{state amplification} problem. The encoder observes the channel state in noncausal \cite{sutivong2005channel}\cite{kim2008state}, or causal \cite{choudhuri2013causal} manners and communicates to the receiver, aiming to reconstruct the state under a given distortion level. The cases in which the encoder has no access to the channel state are studied in \cite{zhang2011joint}, where the sensing receiver is integrated with the receiver, and in \cite{chen2025fundamental}, where the sensing receiver is separated from the receiver.

Another scenario, named mono-static ISAC, is the case in which the sensor is physically close to the encoder and hence can share all the information with the encoder. This line of work is initiated in \cite{kobayashi2018joint} for point-to-point cases and then extended to multi-user cases in \cite{ahmadipour2022information,li2024achievable,ahmadipour2023information}. Models beyond traditional communication and distortion-constrained sensing are also investigated, see \cite{gunlu2023secure,welling2024transmitter,chen2024distribution,wang2024covert} for ISAC with security constraint, \cite{joudeh2022joint,chang2023rate} for state detection, \cite{chen2025integrated} for ISAC with helpers and \cite{labidi2024joint}\cite{zhao2024identification} for joint identification and sensing.

Although significant effort has been invested in the research of ISAC, one of the most important properties of a communication system, robustness, has not yet been well studied. Robustness evaluates the performance of a communication system in the presence of jamming signals that are produced by a jammer who tries to corrupt the communication. Since the encoder and decoder are not aware of the jammer's jamming strategy, they have to make the worst assumption that the system can work reliably under any possible jamming strategy. Such a communication scenario is captured by the model arbitrarily varying channel (AVC), which was first studied in \cite{blackwell1959capacity}. The interference generated by the jammer is, of course, intentional and truly hostile. However, the AVC channel model also directly captures the influence of unintentional interference caused by a lack of coordination between different communication links. For these scenarios, communication should also work robustly independently of other uncoordinated communication links. This behavior is particularly required for application areas such as autonomous driving, drones, and autonomous robots. 

  The behavior of AVCs is significantly different from other stochastic channels with or without states in terms of capacity in the following two aspects: 1). The capacity of AVCs under the maximal error criterion can be smaller than that under the average error criterion, even for point-to-point channels\cite{ahlswede1980method}\cite{csiszar1981capacity}. 2). The capacity of AVCs without correlated randomness (CR) can be strictly smaller than the capacity with correlated randomness\cite{csiszar1988capacity}. These properties are caused by the zero capacity phenomenon of AVCs, which are the results of the symmetrizability under average error criterion\cite{ericson1985exponential}\cite{csiszar1988capacity} and complete graph under maximal error criterion\cite{ahlswede1980method}\cite{csiszar1981capacity}. In \cite{ahlswede1978elimination}, the capacities of AVCs with different available randomness are partially solved. It turns out that when there are no jamming constraints, AVCs without correlated randomness have either zero average error capacity or the same average error capacity as with correlated randomness.\cite{csiszar1988capacity} further extends the results by showing that in the presence of state constraints, the average error capacity of the AVC can be positive but strictly smaller than the correlated randomness-assisted capacity. The general capacity formula of AVCs under the maximal error criterion is still an open problem, and only the capacity of channels with binary output alphabets has been fully characterized. More recent results on AVCs can be found in \cite{chen2021strong,chen2022strong,yadav2022new,sarwate2010rateless,dey2024codes,pereg2018arbitrarily,pereg2019arbitrarily}. An interesting interpretation to differentiate the average and maximal error cases is the knowledge that the jammer has. A general assumption is that the jammer knows the communication standard because we are looking at public communication systems, and the communication standard is then generally known. However, the jammer can also know more, for example, the message, and then he has more options for choosing his jamming strategy. When the jammer does not know the transmitted message, the jamming strategy needs to be effective for most of the messages to corrupt the communication, which corresponds to the average error criterion. The diagram in Fig. \ref{fig:jammer diagram} illustrates the message transmission coding strategies depending on the jammer's knowledge. We consider both the average and maximal error criteria in our communication model, and then we see immediately that these different criteria must also be mapped exactly in the coding procedure. 
\begin{figure}
    \centering
    \includegraphics[scale=0.4]{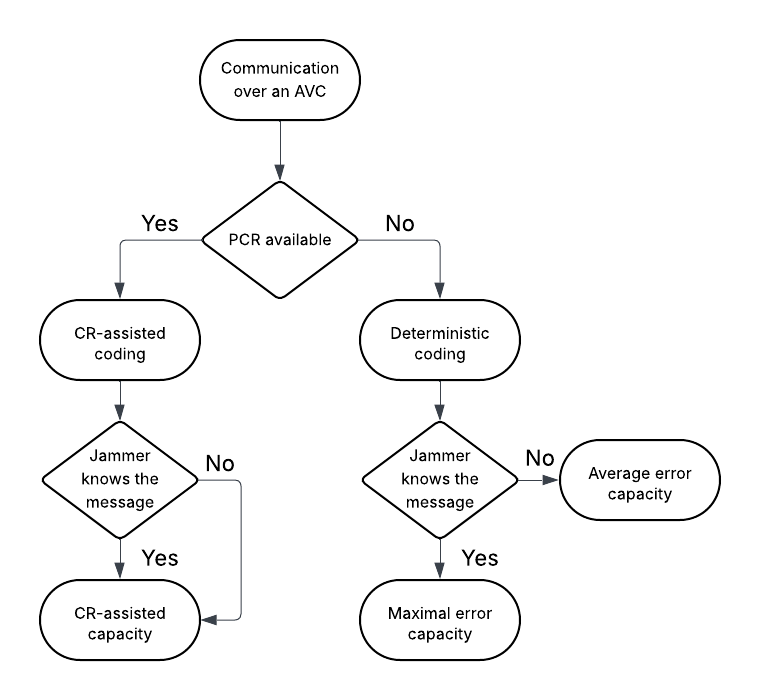}
    \caption{\footnotesize Message transmission coding strategies depending on the jammer's knowledge: The private correlated randomness (PCR) is available when the correlated randomness is only available at the encoder and decoder.}
    \label{fig:jammer diagram}
\end{figure}

This paper focuses on the robust joint message and state transmission problem. We consider a bi-static ISAC setting in which the state estimator coincides with the message receiver. Further, we assume the encoder has access to the channel state of interest. State transmission with different levels of encoder accessibility to the state has a wide range of applications in modern wireless communication systems. For example, in a watermarking system, the covering signal is the channel state that is noncausally available at the encoder, and the message is the watermark. The decoder is required to decode the watermark correctly and recover the covering signal with a given fidelity. In drone communication networks, drones are sent, as detectors, to detect the surrounding environment and communicate the results back to the base station. Based on the detection capability of the drone, it may detect the environment with some delay, resulting in a strictly causal observation of the channel state. The models studied in this paper capture practical and important applications of ISAC in 6G networks and other scenarios. Bounds of the discussed models without arbitrarily varying jamming attacks are provided in \cite{sutivong2005channel,kim2008state,choudhuri2013causal,ahmadipour2023integrated}. A weakness of drone communications is that they are always vulnerable due to the communication distance, limitations in transmission power, etc. Hence, it is of great importance to ensure the capability of the devices to communicate under unknown jamming. We consider channels with random states that need to be estimated, arbitrarily varying jamming under both average and maximal criteria, and different types of observation of the state sequences. As we discussed in the previous paragraph, the maximal error capacity of an AVC can be strictly smaller than the average error capacity, even when the channel is a point-to-point channel. We want to examine in this paper if the same property holds for the joint message and state transmission problem. Specifically, we consider the case where correlated randomness is not available. The main results of this paper are deterministic coding lower bounds of the capacity--distortion functions and some capacity-achieving special cases. It turns out that for the same level of distortion, the deterministic code capacity under the maximal error criterion can be strictly smaller than that under the average error criterion, which is in accordance with normal message communications over AVCs. For both strictly causal and noncausal cases, we provide tight bounds when the state sequences are required to be reconstructed losslessly. 

The rest of this paper is organized as follows. The definitions of the channel model and main results are provided in Section \ref{sec: models and results}. Special cases for binary-output channels are discussed in Section \ref{sec: binary case}. Section \ref{sec: conclusion} concludes this paper. All the proofs are provided in the appendices.

\section{Models and results}\label{sec: models and results}
Throughout this paper, random variable, sample value and its alphabet are denoted by capital, lowercase letter and calligraphic letter, respectively, e.g. $X$, $x$ and $\mathcal{X}$. Symbols $X^n$ and $x^n$ represent a random sequence and its sample value with length $n$. The distribution of a random variable $X$ is denoted by $Q_X$, and the joint distribution of a pair of random variables $(X,Y)$ is denoted by $Q_{XY}$. The expectation of a function of the random variable $X$ is written as $\mathbb{E}_X\left[ f(X) \right]$. The set of integers from $1$ to $N$ is denoted by $[1:N]$. 

For a given distribution $Q_X$, we define the $\delta-$typical set $\mathcal{T}^n_{Q_X,\delta}$ as
\begin{align}
    \mathcal{T}^n_{Q_X,\delta}=\{x^n\in\mathcal{X}^n: |N(x|x^n) - nQ_X(x)|\leq n\delta\}\;\;\forall x\in\mathcal{X},
\end{align}
where $N(x|x^n)$ is the number of occurrences of symbol $x$ in the sequence $x^n$. For simplicity, we use \emph{typical set} to denote $\delta-$typical set.  The definition of the joint typical set $Q_{XY}$ is defined similarly:
\begin{align}
    \mathcal{T}^n_{Q_{XY},\delta}=\{(x^n,y^n)\in\mathcal{X}^n\times\mathcal{Y}^n: |N(x,y|x^n,y^n) - nQ_{XY}(x,y)|\leq n\delta\}\;\;\forall (x,y)\in\mathcal{X}\times\mathcal{Y}.
\end{align}
The conditional typical set $\mathcal{T}^n_{Q_{XY},\delta}[x^n]$ given $x^n$ is
\begin{align}
    \mathcal{T}^n_{Q_{XY},\delta}[x^n] =\{y^n\in\mathcal{Y}^n: (x^n,y^n)\in \mathcal{T}^n_{Q_{XY},\delta}\}.
\end{align}
Properties of typical sets  can be found in \cite{el2011network}. Setting $\delta=0$ gives a special set of sequences
\begin{align}
    \mathcal{T}^n_{Q_X} = \{x^n\in\mathcal{X}^n: N(x|x^n) = nQ_X(x)\}\;\;\forall x\in\mathcal{X}.
\end{align}
We call $\mathcal{T}^n_{Q_X}$ a \emph{type set} and say the type of $x^n\in\mathcal{T}^n_{Q_X}$ is $Q_X$. The definitions of joint type sets and conditional type sets follow similarly. More properties of type sets can be found in \cite{csiszar2011information}.

To distinguish between the general distribution and type, for a given sequence $x^n$, we use $P_{x^n}$ to denote the \emph{type} of the sequence such that $P_{x^n}(x)$ is the relative frequency of the symbol $x\in\mathcal{X}$ in sequence $x^n.$ On some occasions, we also define a dummy random variable $X$ according to $x^n$ such that the distribution $P_X$ of $X$ is the type $P_{x^n}$. In this case, the following expressions are equivalent:
\begin{align}
    H(P_{x^n})=H(x^n)=H(X),
\end{align}
where $X$ is the dummy random variable defined by $x^n$.
Similarly, for a pair of sequences $(x^n,y^n)$ with joint type $P_{x^n,y^n}$, the following expressions are equivalent:
\begin{align}
    I(P_{x^n};P_{y^n|x^n})=I(X;Y)=I(x^n;y^n),
\end{align}
where $(X,Y)$ are dummy random variables defined by $(x^n,y^n)$. Generally speaking, we use $Q_X\in\mathcal{P}(\mathcal{X})$ to denote a general distribution of $X$ and use $P_X$ to denote the distribution defined by the type of a given sequence $x^n$. We use $\overset{\cdot}{=}$ to denote the equality in the exponential scale, i.e. $a\overset{\cdot}{=}b$ implies $\lim_{n\to\infty}\frac{1}{n}\log \frac{a}{b}\to 0$.

Throughout this paper, we consider random variables with finite alphabet sizes.
\subsection{Channel Model}
\begin{figure}[h]
    \centering
    \includegraphics[scale=0.6]{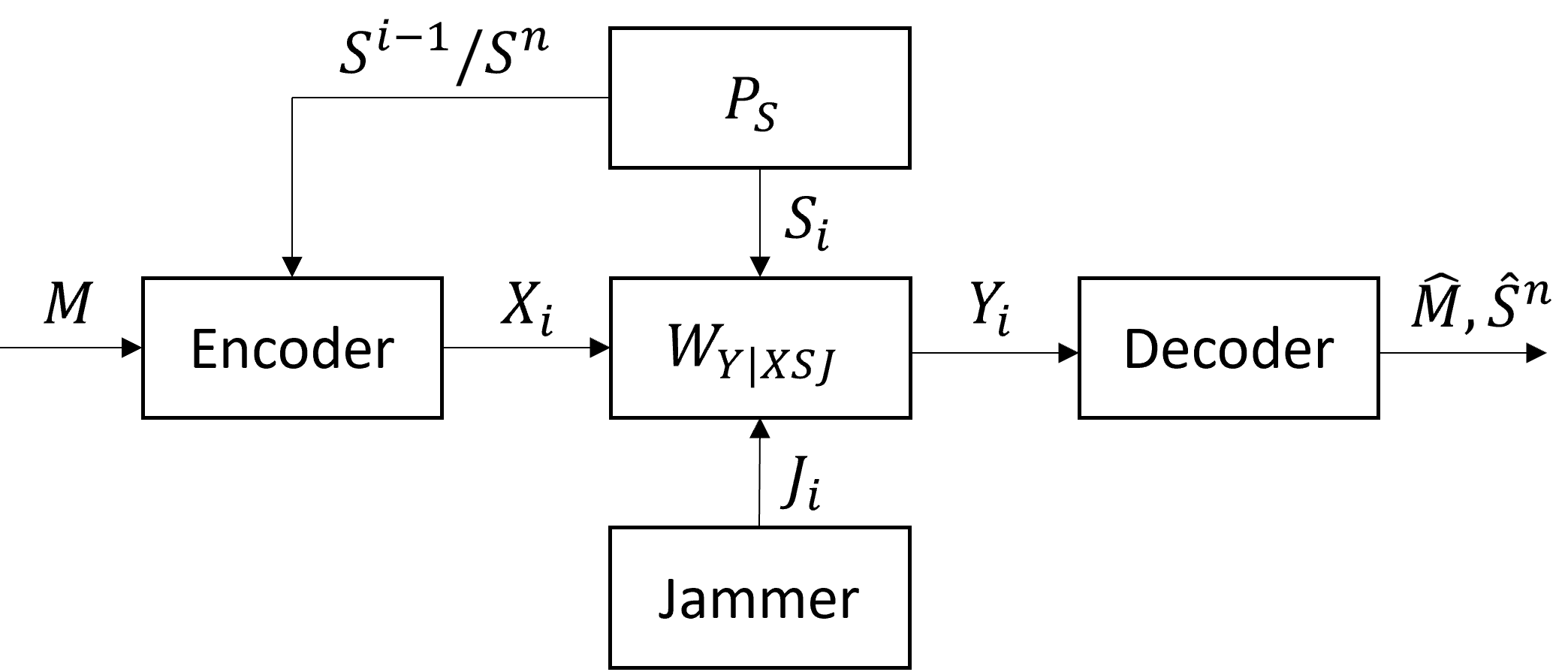}
    \caption{{\footnotesize Joint message and state communication under arbitrarily varying jamming}}
    \label{fig: model}
\end{figure}
As shown in Fig. \ref{fig: model}, the model considered in this paper is a discrete memoryless channel with random state $S$ and a jammer. The random state $S$ follows a fixed distribution $Q_S$, while the jammer produces the jamming sequences $J^n$ with arbitrary distributions $Q_{J^n}$. The components of $J^n$ may not be independent of each other, and the distribution $Q_{J^n}$ is not necessarily a product law.
The encoder tries to send the message $M$ to the receiver and has access to the random state sequence $S^n$ in strictly causal or noncausal manners. The decoder aims to decode the message $M$ and produce a sequence $\hat{S}^n$, which is an estimation of the state sequence $S^n$ based on its observation. The communication quality is measured by the decoding error, and the estimation quality is measured by a given distortion function $d:\mathcal{S}\times\hat{\mathcal{S}}\to [0,+\infty)$ such that $\frac{1}{n}d(S^n,\hat{S}^n)=\frac{1}{n}\sum_{i=1}^n d(S_i,\hat{S}_i).$

Both average and maximal error criteria are considered in this paper and the corresponding definitions will be given in the rest of this section. We assume that the jammer always knows the communication protocol. In practice, this is always the case because we are looking at public communication systems where the communication protocol is standardized. What is important for us is that the jammer does not know the random state of the channel, which is why we can also average over the state when using the maximum error criterion. 

\begin{definition}\label{def: strictly causal code def}
    A code $(n,R)$ used for joint message and state communication with strictly causal observation consists of
    \begin{itemize}
        \item A message set $\mathcal{M}=[1:2^{nR}]$;
        \item A sequence of encoders $f_i:\mathcal{M}\times \mathcal{S}^{i-1} \to \mathcal{X}_i,i=1,...,n$;
        \item A message decoder $g: \mathcal{Y}^n\to \mathcal{M}$;
        \item A state estimator $h:\mathcal{Y}^n\to\hat{\mathcal{S}}^n.$
    \end{itemize}
    The strictly-causal case can be considered as  a special case of coding with delayed state observation with delay $d=1$. For more general cases, the sequence of encoders can be defined by $f_i:\mathcal{M} \to \mathcal{X}_i$ for $i\leq d$ and $f_i:\mathcal{M}\times \mathcal{S}^{i-d} \to \mathcal{X}_i$ for $d < i \leq n.$
\end{definition}

\begin{definition}
    A code $(n,R)$ used for joint message and state communication with noncausal observation consists of
    \begin{itemize}
        \item A message set $\mathcal{M}=[1:2^{nR}]$;
        \item An encoder $f:\mathcal{M}\times \mathcal{S}^n \to \mathcal{X}^n$;
        \item A message decoder $g: \mathcal{Y}^n\to \mathcal{M}$;
        \item A state estimator $h:\mathcal{Y}^n\to\hat{\mathcal{S}}^n.$
    \end{itemize}
\end{definition}

This paper considers the joint message and state communication under both average and maximal error criteria. It is well known that the capacity of arbitrarily varying channels can be different under different error criteria. In the following, we give definitions of the achievable rate--distortion pair and introduce the difference between these two cases. 

\subsection{Average Error Criterion}

\begin{definition}\label{def: achievability of average error}
    A rate--distortion pair $(R,D)$ is achievable with strictly causal state information under the average error criterion if for any $\epsilon>0$, there exists a sufficiently large number $N$ such that for any $n\geq N$ there exists an $(n,R)$ strictly causal code such that $\forall j^n\in\mathcal{J}^n,$
    \begin{align}
        &e_{a,s}(j^n) = \frac{1}{|\mathcal{M}|}\sum_{m=1}^{|\mathcal{M}|}e_s(m,j^n)\leq \epsilon, \\
        &\frac{1}{n}\mathbb{E}[d(S^n,\hat{S}^n)]\leq D.
    \end{align}
    where 
     \begin{align*}
         e_s(m,j^n) = \sum_{y^n\notin g^{-1}(m)}\sum_{s^n}Q^n_S(s^n)\prod_{i=1}^n W(y_i|f_i(m,s^{i-1}),s_i,j_i).
     \end{align*}
    The capacity--distortion function $C_{a,s}(D)$ is the supremum of the rate $R$ such that $(R,D)$ is achievable under the average error criterion. 
    
    Similarly, a rate--distortion pair $(R,D)$ is achievable with noncausal state information under the average error criterion if for any $\epsilon>0$, there exists a sufficiently large number $N$ such that for any $n\geq N$ there exists an $(n,R)$ noncausal code such that $\forall j^n\in\mathcal{J}^n,$
    \begin{align}
        &e_{a,n}(j^n) = \frac{1}{|\mathcal{M}|}\sum_{m=1}^{|\mathcal{M}|}e_n(m,j^n)\leq \epsilon, \\
        &\frac{1}{n}\mathbb{E}[d(S^n,\hat{S}^n)]\leq D.
    \end{align}
    where 
     \begin{align*}
         e_n(m,j^n) = \sum_{y^n\notin g^{-1}(m)}\sum_{s^n}Q^n_S(s^n)\prod_{i=1}^n W(y_i|x_i,s_i,j_i),
     \end{align*}
     where $x^n = f(m,s^n)$.
    The capacity--distortion function $C_{a,n}(D)$ is the supremum of the rate $R$ such that $(R,D)$ is achievable under the average error criterion. 
\end{definition}

The sufficient and necessary condition for an AVC to have a positive average error capacity is the nonsymmetrizability\cite{csiszar1988capacity}\cite{ericson1985exponential}. The following definitions are the symmetrizable conditions of arbitrarily varying multiple access channels (AVMACs) and their extensions.
\begin{definition}[Symmetrizability]\label{def: symmetrizability}
    A discrete memoryless channel with random states and arbitrarily varying jamming is said to be
    \begin{itemize}
        \item \textbf{symmetrizable}-$\mathcal{X}\times \mathcal{S}$ if there exists some distributions $T_{J|XS}$ such that
            \begin{align*}
            &\sum_{j}W_{Y|XSJ}(y|x,s,j)T(j|x',s') \\
            &\quad = \sum_{j}W_{Y|XSJ}(y|x',s',j)T(j|x,s) \;\;\;\text{$\forall y,x,s,x',s'.$}
        \end{align*}
        \item \textbf{symmetrizable}-$\mathcal{X}$ if there exists some distributions $T_{J|X}$ such that
            \begin{align*}
            &\sum_{j}W_{Y|XSJ}(y|x,s,j)T(j|x') \\
            &\quad= \sum_{j}W_{Y|XSJ}(y|x',s,j)T(j|x) \;\;\;\text{$\forall y,x,s,x'.$}
        \end{align*}
        \item \textbf{symmetrizable}-$\mathcal{S}$ if there exists some distributions $T_{J|S}$ such that
            \begin{align*}
            &\sum_{j}W_{Y|XSJ}(y|x,s,j)T(j|s') \\
            &\quad = \sum_{j}W_{Y|XSJ}(y|x,s',j)T(j|s) \;\;\;\text{$\forall y,x,s,s'.$}
        \end{align*}
        \item \textbf{symmetrizable}-$\mathcal{X}|\mathcal{S}$ if there exists some distributions $T_{J|XS}$ such that
            \begin{align*}
            &\sum_{j}W_{Y|XSJ}(y|x,s,j)T(j|x',s) \\
            &\quad = \sum_{j}W_{Y|XSJ}(y|x',s,j)T(j|x,s) \;\;\;\text{$\forall y,x,x',s.$}
        \end{align*}
        \item \textbf{symmetrizable}-$\mathcal{S}|\mathcal{X}$ if there exists some distributions $T_{J|XS}$ such that
            \begin{align*}
            &\sum_{j}W_{Y|XSJ}(y|x,s,j)T(j|x,s') \\
            &\quad = \sum_{j}W_{Y|XSJ}(y|x,s',j)T(j|x,s) \;\;\;\text{$\forall y,x,s,s'.$}
        \end{align*}
    \end{itemize}
\end{definition}
For the point-to-point case, the definition of the symmetrizable-$\mathcal{X}$ AVC follows from removing $S$ from the above symmetrizable-$\mathcal{X}$ channels \cite{csiszar1988capacity}. If we consider $X$ and $S$ as two inputs of the channel $W_{Y|XSJ}$ and $J$ as the jamming symbol,  the first three definitions are exactly the symmetrizability of an arbitrarily varying multiple access channel\cite{gubner1990deterministic}. We will see later that the symmetrizability of AVMAC will be useful for lossless state reconstruction.
\begin{lemma}\label{lem: nonsymmetrizable u|x inequality}
    If the channel $Q_{Y|XSJ}$ is nonsymmetrizable-$\mathcal{S}|\mathcal{X}$, there exists $\zeta>0$ such that for each pair of channels $T_1:\mathcal{X}\times\mathcal{S}\to\mathcal{J}$ and $T_2:\mathcal{X}\times\mathcal{S}\to\mathcal{J}$,
    \begin{align*}
        &\max_{x,s,s',y}|\sum_{j}T_1(j|x,s)Q_{Y|XSJ}(y|x,s',j)\\
        &\quad\quad\quad\quad -\sum_{j}T_2(j|x,s')Q_{Y|XSJ}(y|x,s,j)|\geq \zeta,
    \end{align*}
\end{lemma}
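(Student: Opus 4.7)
The plan is to argue by contradiction, using a standard compactness argument and then an averaging trick that converts the hypothetical minimizing pair $(T_1^{*}, T_2^{*})$ into a single symmetrizing channel $T$.

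First I would suppose, for contradiction, that no such $\zeta>0$ exists. Then there is a sequence of pairs $(T_1^{(k)}, T_2^{(k)})$ of channels $\mathcal{X}\times\mathcal{S}\to\mathcal{J}$ along which the maximum on the left-hand side tends to zero. The set of such channels is a finite Cartesian product of probability simplices, hence compact. By passing to a subsequence I may assume $(T_1^{(k)}, T_2^{(k)})\to (T_1^{*}, T_2^{*})$. Since the map
\begin{equation*}
(T_1,T_2)\mapsto \max_{x,s,s',y}\Bigl|\sum_{j}T_1(j|x,s)Q_{Y|XSJ}(y|x,s',j)-\sum_{j}T_2(j|x,s')Q_{Y|XSJ}(y|x,s,j)\Bigr|
\end{equation*}
is continuous, the pair $(T_1^{*},T_2^{*})$ achieves value $0$. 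Equivalently, for every $x,s,s',y$,
\begin{equation*}
\sum_{j}T_1^{*}(j|x,s)Q_{Y|XSJ}(y|x,s',j)=\sum_{j}T_2^{*}(j|x,s')Q_{Y|XSJ}(y|x,s,j). \tag{$\ast$}
\end{equation*}

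Next, the key step is to produce from $T_1^{*}$ and $T_2^{*}$ a single channel that symmetrizes $\mathcal{S}|\mathcal{X}$, contradicting the hypothesis of Lemma~\ref{lem: nonsymmetrizable u|x inequality}. I define the averaged channel $T(j|x,s)=\tfrac{1}{2}\bigl(T_1^{*}(j|x,s)+T_2^{*}(j|x,s)\bigr)$, which is clearly a valid conditional distribution on $\mathcal{J}$. Swapping the roles of $s$ and $s'$ in $(\ast)$ gives the companion identity
\begin{equation*}
\sum_{j}T_1^{*}(j|x,s')Q_{Y|XSJ}(y|x,s,j)=\sum_{j}T_2^{*}(j|x,s)Q_{Y|XSJ}(y|x,s',j).
\end{equation*}
Substituting these two identities into $\sum_{j}T(j|x,s)Q_{Y|XSJ}(y|x,s',j)$, split according to $T_1^{*}$ and $T_2^{*}$, shows that it equals $\sum_{j}T(j|x,s')Q_{Y|XSJ}(y|x,s,j)$ for every $x,s,s',y$. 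So $T$ symmetrizes $\mathcal{S}|\mathcal{X}$, contradicting nonsymmetrizability and establishing the lemma.

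The compactness half of the argument is routine; the real point, and what I expect to be the only non-obvious step, is the averaging trick. One has to notice that Definition~\ref{def: symmetrizability} uses a \emph{single} channel $T$ while the lemma's hypothesis is a weaker condition involving \emph{two} channels $T_1,T_2$, and the passage between them must exploit the symmetry obtained by swapping $s$ and $s'$ in $(\ast)$. Once that symmetrization is in place, both the contradiction and the extraction of the uniform lower bound $\zeta$ follow immediately from continuity and compactness.
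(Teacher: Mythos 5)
Your proof is correct and uses essentially the same approach as the paper: the heart of both arguments is the averaging trick $T=\tfrac12(T_1+T_2)$, combined with compactness and continuity. The paper packages it as a direct inequality (showing the left-hand side is always $\geq F\bigl(\tfrac{T_1+T_2}{2}\bigr)$ and then taking the positive minimum of $F$ over the compact set of channels), while you phrase it as a proof by contradiction that only extracts the zero case of that same inequality at a limit point; this is a cosmetic rather than a substantive difference.
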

The proof is given in Appendix \ref{sec: proof of lem: nonsymmetrizable u|x inequality}. In the following, we provide an example that satisfies all symmetrizable conditions defined in Definition \ref{def: symmetrizability}.
\begin{example}
    Consider a channel 
    \begin{align}
        Y = X \oplus S \oplus J,
    \end{align}
    where $\mathcal{Y}=\mathcal{X}=\mathcal{S}=\mathcal{J}=\{0,1\}$, $\oplus$ is the modulo 2 addition. This channel satisfies all symmetrizable conditions in Definition \ref{def: symmetrizability}. We first verify that it is symmetrizable-$\mathcal{X}\times\mathcal{S}$. Fix $(x,s)=(0,0)$ and $(x',s')=(0,1)$. By considering the cases that $y=0$ and $y=1$, the channel $T_{J|XS}$ that symmetrizes $W_{Y|XSJ}$ satisfies
    \begin{align}
        &T_{J|XS}(0|0,0) = T_{J|XS}(1|0,1),\\
        &T_{J|XS}(1|0,0) = T_{J|XS}(0|0,1).
    \end{align}
    Choosing $(x',s')=(1,0)$, it satisfies
    \begin{align}
        &T_{J|XS}(0|0,0) = T_{J|XS}(1|1,0),\\
        &T_{J|XS}(1|0,0) = T_{J|XS}(0|1,0).
    \end{align}
    Choosing $(x',s')=(1,1)$, it satisfies
    \begin{align}
        &T_{J|XS}(0|0,0) = T_{J|XS}(0|1,1),\\
        &T_{J|XS}(1|0,0) = T_{J|XS}(1|1,1).
    \end{align}
    Hence, the desired channel
    \begin{align}
        T_{J|XS} = \begin{bmatrix}
            a & 1-a \\
            1-a & a \\
            1-a & a\\
            a & 1-a
        \end{bmatrix}
    \end{align}
    for $a\in [0,1]$. Similarly, we can show that the channel is symmetrizable-$\mathcal{X}$ (symmetrizable-$\mathcal{S}$) by setting $T_{J|X} (T_{J|S})$ to be a binary symmetric channel and the channel is symmetrizable-$\mathcal{X}|\mathcal{S}$ by setting 
    \begin{align}
        T_{J|XS}=\begin{bmatrix}
            a & 1-a \\
            b & 1-b \\
            1-a & a\\
            1-b & b
        \end{bmatrix}
    \end{align}
    for $a\in[0,1],b\in[0,1]$. The symmetrizable-$\mathcal{S}|\mathcal{X}$ case follows similarly.
\end{example}
In Example \ref{example: nonsymmetrizable U|X}, we further provide some examples such that the channel is nonsymmetrizable.
 In \cite{csiszar1988capacity}, the capacity of nonsymmetrizable AVCs under the average error criterion is provided.
 \begin{theorem}{\cite[Theorem 1]{csiszar1988capacity}}
     For the average error criterion, the deterministic coding capacity $C_a>0$ if and only if the AVC is nonsymmetrizable. If $C_a>0$, then
     \begin{align}
         C_a = \max_{Q_X}\min_{Q_J} I(X;Y),
     \end{align}
     with the joint distribution $Q_XQ_JW_{Y|XJ}.$
 \end{theorem}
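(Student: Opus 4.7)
The plan is to prove both directions of the equivalence and derive the capacity formula when $C_a>0$. The converse splits into two pieces: showing that symmetrizability forces $C_a=0$, and establishing the upper bound $C_a\leq \max_{Q_X}\min_{Q_J}I(X;Y)$. The achievability requires constructing a deterministic code matched by a joint-typicality-style decoder.

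For the converse, suppose the AVC is symmetrizable-$\mathcal{X}$, so some $T_{J|X}$ satisfies $\sum_j W(y|x,j)T(j|x')=\sum_j W(y|x',j)T(j|x)$. Given any $(n,R)$ code, I would let the jammer draw $J^n$ by passing an independently uniform dummy message $M'\in\mathcal{M}$ through the codebook and then through the memoryless kernel $T^n$. The induced output distribution is symmetric in the roles of $M$ and $M'$, so no decoder can achieve average error probability below $1/2$, and de-randomizing the auxiliary message $M'$ by an averaging argument produces a deterministic $j^n$ with the same effect, hence $C_a=0$. For the upper bound when $C_a>0$, I would restrict the jammer to i.i.d. $Q_J^*$ where $Q_J^*$ achieves the inner minimum for the input type of the code; this reduces to a standard DMC with transition matrix $\sum_j Q_J^*(j)W(y|x,j)$, and the usual Fano-based weak converse then gives $R\leq I(X;Y)$ up to vanishing terms.

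For achievability, I would use random coding with the method of types. Fix $P_X$ close to the maximizer and draw $2^{nR}$ codewords independently and uniformly from the type class $T_{P_X}^n$. The decoder declares $\hat m=m$ if $x^n(m)$ is the unique codeword for which there exists a jamming type $P_J$ so that $(x^n(m),y^n)$ is consistent with the joint law $P_X Q_J W_{Y|XJ}$, and moreover for every other codeword $x^n(m')$ a mutual-information test of the form $I(X_{m'};Y\mid X_m)\leq \eta$ (evaluated under the joint empirical type) fails. Since the number of joint types is only polynomial in $n$, the union bound over pairs of messages, types of $j^n$, and empirical joint types reduces the analysis to verifying a single-letter exponential decay of each summand.

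The main obstacle is controlling the error probability uniformly over all adversarial $j^n$, which may depend on the realized codebook. This is handled by an expurgation step: with high probability a random codebook has the property that for every type $P_{XJ}$ the number of codewords jointly of type $P_{XJ}$ with any fixed $j^n$ of type $P_J$ concentrates near its expectation $\doteq 2^{n(R-I(X;J))}$, uniformly in $j^n$. Nonsymmetrizability enters precisely at the wrong-message step: when analyzing the event that the decoder favors $m'\neq m$, the point-to-point analog of Lemma~\ref{lem: nonsymmetrizable u|x inequality} produces a constant $\zeta>0$ separating the two competing output distributions in total variation per symbol, which converts into an exponential penalty on the probability that $y^n$ is simultaneously typical with both codewords under any admissible jamming. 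Combining this with the type-counting concentration yields the required exponentially small average error for all $j^n$ and $R<\max_{Q_X}\min_{Q_J}I(X;Y)-o(1)$.
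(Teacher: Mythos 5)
This theorem is a result cited from Csiszár and Narayan (1988); the present paper does not prove it, so there is no internal argument to compare against. Your sketch tracks the standard Csiszár--Narayan proof correctly at the structural level: the symmetrizing-channel attack (pass a uniform dummy message through $T$, then de-randomize by averaging) for the direction symmetrizability $\Rightarrow C_a=0$; the i.i.d.-jammer plus Fano/minimax reduction for the outer bound $C_a\leq\max_{Q_X}\min_{Q_J}I(X;Y)$; and constant-composition random coding with a type-based decoder for achievability, with nonsymmetrizability supplying the uniform total-variation gap $\zeta>0$ that drives the error exponent. The converse parts are sound.

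The genuine defect is in the decoder you describe. You require $x^n(m)$ to be the \emph{unique} codeword admitting a jamming type consistent with $(x^n(m),y^n)$; that is too strong, since spuriously typical competitors occur with non-negligible probability, and the whole Csiszár--Narayan decoder is built to tolerate them. You then say the mutual-information test $I(X_{m'};Y|X_m)\leq\eta$ should \emph{fail} for every competitor; it should \emph{hold}, and the correct test quantity conditions on the jamming type, i.e.\ $I(X,Y;X'|J)\leq\eta$ where $J$ is the jamming type that made $m$ look typical --- this is exactly the decoding rule the paper itself employs in the proof of Theorem~\ref{the: achievability of strictly causal}, extended to the state-augmented channel. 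Taken literally, your decoder would almost never return a decision: for an independently drawn competitor $X_{m'}$, the empirical $I(X_{m'};Y|X_m)$ is with high probability near zero, so demanding it exceed $\eta$ for every $m'$ cannot be met. A smaller inaccuracy is the framing as ``expurgation'': Csiszár--Narayan (and the paper, in the analogous Lemma~\ref{lem: good codebook}) do not delete codewords, but show that a random constant-composition codebook satisfies the required type-counting concentration with doubly-exponentially high probability and then carry out the error analysis on that event.
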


\subsection{Maximal Error Criterion}
\begin{definition}
     A rate--distortion pair $(R,D)$ is achievable with strictly causal state information under the maximal error criterion if for any $\epsilon>0$, there exists a sufficiently large number $N$ such that for any $n\geq N$ there exists an $(n,R)$ strictly causal code such that $\forall j^n\in\mathcal{J}^n,$
    \begin{align}
        &e_{m,s}(j^n) = \max_{m\in\mathcal{M}}e_s(m,j^n)\leq \epsilon, \\
        &\frac{1}{n}\mathbb{E}[d(S^n,\hat{S}^n)]\leq D.
    \end{align}
    where 
     \begin{align*}
         e_s(m,j^n) = \sum_{y^n\notin g^{-1}(m)}\sum_{s^n}Q^n_S(s^n)\prod_{i=1}^n W(y_i|f_i(m,s^{i-1}),s_i,j_i).
     \end{align*}
    The capacity--distortion function $C_{m,s}(D)$ is the supremum of the rate $R$ such that $(R,D)$ is achievable under the maximal error criterion. 
    
    Similarly, a rate--distortion pair $(R,D)$ is achievable with noncausal state information under the maximal error criterion if for any $\epsilon>0$, there exists a sufficiently large number $N$ such that for any $n\geq N$ there exists an $(n,R)$ noncausal code such that $\forall j^n\in\mathcal{J}^n,$
    \begin{align}
        &e_{m,n}(j^n) = \max_{m\in\mathcal{M}}e_n(m,j^n)\leq \epsilon, \\
        &\frac{1}{n}\mathbb{E}[d(S^n,\hat{S}^n)]\leq D.
    \end{align}
    where 
     \begin{align*}
         e_n(m,j^n) = \sum_{y^n\notin g^{-1}(m)}\sum_{s^n}Q^n_S(s^n)\prod_{i=1}^n W(y_i|x_i,s_i,j_i),
     \end{align*}
     where $x^n = f(m,s^n)$.
    The capacity--distortion function $C_{m,n}(D)$ is the supremum of the rate $R$ such that $(R,D)$ is achievable under the maximal error criterion. 
\end{definition}
For the maximal error criterion, capacity results are given in \cite{ahlswede1980method,csiszar1981capacity}. \cite{ahlswede1980method} introduces a graph with the vertex set $\mathcal{X}$ for a point-to-point AVC $\mathcal{W}$, denoted by $G(\mathcal{W}).$ In the graph, $x_1$ and $x_2$ are connected by an edge, denoted by $x_1 \overset{\mathcal{W}}{\sim}x_2$ iff there exist distributions $Q_1$ and $Q_2$ on $\mathcal{J}$ such that
\begin{align}
    \label{def: maximal error symmetrizability}\sum_{j\in\mathcal{J}}W(y|x_1,j)Q_1(j) = \sum_{j\in\mathcal{J}}W(y|x_2,j)Q_2(j) \;\;\text{for all $y\in\mathcal{Y}$.}
\end{align}
The capacity of arbitrarily varying channels under the maximal error criterion is positive if and only if the graph is not a complete graph. 

It should be noted that the zero-capacity condition for AVCs under the maximal error criterion is weaker than that under the average error criterion. For a  symmetrizable-$\mathcal{X}$ AVC $\mathcal{W}$, its graph $\mathcal{G}(\mathcal{W})$ is always complete by setting $Q_1(j)=T_{J|X}(j|x_2),Q_2(j)=T_{J|X}(j|x_1)$. However, the other direction does not always hold since the distributions $Q_1$ and $Q_2$ can be different for the maximal error criterion. Therefore, the capacity of an AVC under the maximal error criterion is always upper bounded by that under the average error criterion.

Furthermore, a pair of input symbols $(x_1,x_2)$ is a pair of isolated vertices in the graph if
\begin{align}
    \label{def: maximal error nonsymmetrizability}\min_{Q_1,Q_2\in\mathcal{P}(\mathcal{J})}\sum_{y}\Big|\sum_{j\in\mathcal{J}}W(y|x_1,j)Q_1(j) -\sum_{j\in\mathcal{J}}W(y|x_2,j)Q_2(j)\Big| >0.
\end{align}
The following example provides an AVC whose graph is not complete.
\begin{example}
    Consider the following AVC with $\mathcal{Y}=\mathcal{X}=\{0,1,2\}$ and $\mathcal{J}=\{0,1\}$:
    \begin{align}
        W_{Y|XJ=0} = 
        \begin{bmatrix}
        0 & 1 & 0 \\
        1 & 0 & 0 \\
        0 & 0 & 1
    \end{bmatrix},\;\;\;
    W_{Y|XJ=1} = 
    \begin{bmatrix}
        0 & 0 & 1 \\
        0 & 1 & 0 \\
        \frac{1}{2} & 0 & \frac{1}{2}
    \end{bmatrix}.
    \end{align}
    It follows that $(x,x')=(0,1)$ and $(x,x')=(0,2)$ are connected but $(x,x')=(1,2)$ are isolated. To see this, by choosing $y=2,$ if $(x,x')=(1,2)$ are connected, there exist distributions $Q_{1}$ and $Q_2$ on $\mathcal{J}$ such that
    \begin{align}
        0\cdot Q_1(0) + 0 \cdot Q_1(1) = 1 \cdot Q_2(0) + \frac{1}{2}Q_2(1).
    \end{align}
    Obviously, there does not exist such a distribution $Q_2$ and hence $x=1$ and $x'=2$ are isolated.
\end{example}
Define $C(Q_X) := \min_{Q_{J|X}} I(X;Y)$, where $Q_X$ is the distribution of $X$. It is proved in \cite{ahlswede1980method} that when \eqref{def: maximal error nonsymmetrizability} holds for all $x,x'\in\mathcal{X}$, the maximal error capacity is
\begin{align}
    C_m = \max_{Q_X}C(Q_X) = \max_{Q_X} \min_{Q_{J|X}}I(X;Y).
\end{align}
Such a condition is relaxed in \cite{csiszar1981capacity}, and a rate $R$ given input distribution $Q_X$ is achievable if 
\begin{align}
    R \leq \min(C(Q_X),D(Q_X)),
\end{align}
where
\begin{align}
    D(Q_X):=\min_{\substack{Q_{X\widetilde{X}}:\\Q_X=Q_{\widetilde{X}},\\Pr\{X \overset{\mathcal{W}}{\sim} \widetilde{X}\}=1}} I(X;\widetilde{X}).
\end{align}

For a channel $\mathcal{W}$ with random states and arbitrarily varying jamming, we define its graph with respect to $\mathcal{X}$ by $\mathcal{G}_{\mathcal{X}}(\mathcal{W})$ with the following definitions.
\begin{definition}
    A pair of input symbols $(x_1,x_2)$ are isolated in the graph if
    \begin{align}
    \label{def: isolated x1 x2}\min_{Q_{1},Q_{2}\in\mathcal{P}(\mathcal{J})}\sum_{y}\Big|\sum_{j\in\mathcal{J}}Q(y|x_1,j)Q_{1}(j) -\sum_{j\in\mathcal{J}}Q(y|x_2,j)Q_{2}(j)\Big| >0,
\end{align}
where $Q_{Y|XJ}=\sum_s W_{Y|XSJ}Q_S$.
It is a complete-$\mathcal{X}$ graph if there does not exist a pair $(x_1,x_2)$ that is isolated.
\end{definition}

Reliable communication is not possible over an AVC $\mathcal{W}$ when $\mathcal{G}(\mathcal{W})$ is a complete graph. The following lemma, although not used in this paper, was first proved in \cite{kiefer1962channels} as the necessary condition for the maximal error of an AVC to be positive. Here, we prove it using the terminology defined in this paper. Suppose there exists a codebook $\{x^n(m):m\in\mathcal{M}\}$ and its index set $\mathcal{M}$, an encoding function $f$ that defines the rule of how $x^n(m)$ is selected, and a decoding function $g: \mathcal{Y}^n\to\mathcal{M}$. Define 
    \begin{align}
        e(m,j^n) = \sum_{y^n\notin g^{-1}(m)}W^n(y^n|x^n(m),j^n).
    \end{align}
    The following lemma demonstrates that when the graph of the channel is a complete graph, the decoder cannot distinguish between any two codewords in the codebook, regardless of the codebook's rate.
\begin{lemma}\label{lem: necessity of complete graph}
    If the graph of an AVC is complete-$\mathcal{X}$, there does not exist an encoding/decoding strategy such that
    \begin{align}
        \max_{j^n}e(m,j^n) \leq \frac{1}{2}.
    \end{align}
    for any $m\in\mathcal{M}$ when $|\mathcal{M}|\geq 2$.
\end{lemma}
The proof is given in Appendix \ref{sec: proof of lem: necessity of complete graph}.

\subsection{Main Results: Average Error Criterion}
Fix a joint distribution $Q_{XSUY|J}=Q_XQ_SQ_{U|XS}W_{Y|XSJ}$ and let
\begin{align}
    &Q_{Y|XJ}=\sum_s W_{Y|XSJ}Q_{S},\\
    &Q_{Y|UXJ}=\sum_{s}W_{Y|XSJ}Q_{S|XU},
\end{align}
where $Q_{S|UX}=Q_{X}Q_SQ_{U|XS}/(\sum_s Q_{X}Q_SQ_{U|XS})$.
Define an indicator function
\begin{align}
    \mathbb{I}_{\mathcal{U}|\mathcal{X}}(Q_{Y|UXJ}) = \left\{
        \begin{aligned}
            &1\;\;\text{if $Q_{Y|UXJ}$ is nonsymmetrizable-$\mathcal{U}|\mathcal{X}$},\\
            &0\;\;\text{otherwise.}
        \end{aligned}
    \right.
\end{align}
\begin{theorem}\label{the: achievability of strictly causal}
   For the deterministic coding under average error criterion and strictly causal observation, the capacity--distortion function of the joint message and state transmission over arbitrarily varying channels with random states $W_{Y|XSJ}$ such that $Q_{Y|XJ}$ is nonsymmetrizable-$\mathcal{X}$ is lower bounded by
    \begin{align*}
        C_{a,s}(D) \geq \max \left(\min_{Q_J}I_{Q_J}(X;Y) + \min_{Q_J} I_{Q_J}(U;Y|X) - I(U;S|X)\right),
    \end{align*}
    where the maximization is taken over all distributions $Q_X,Q_{U|XS}$ and deterministic functions $h:\mathcal{X}\times\mathcal{U}\times\mathcal{Y}\to\hat{\mathcal{S}}$ such that $\mathbb{I}_{\mathcal{U}|\mathcal{X}}(Q_{Y|UXJ})=1$ and $\max_{Q_J}\mathbb{E}[d(S,h(X,U,Y))]\leq D$.
\end{theorem}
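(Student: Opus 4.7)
The plan is to prove this achievability by a block-Markov superposition scheme: an outer codebook $x^n$ carries the fresh message together with a ``disambiguation'' overhead forwarded from the previous block, while an inner satellite codebook $u^n$ describes the realized state of each block at the covering rate $I(U;S|X)$ and is partially decoded on the virtual conditional AVC $Q_{Y|UXJ}$ at the rate $\min_{Q_J} I_{Q_J}(U;Y|X)$, with the remaining $I(U;S|X)-\min_{Q_J} I_{Q_J}(U;Y|X)$ bits per symbol supplied by the overhead forwarded to the next block. Balancing the outer-code budget $\min_{Q_J} I_{Q_J}(X;Y)$ against that overhead gives a fresh-message rate matching the theorem's lower bound, and the two mutual-information terms in the bound correspond precisely to the two AVC-decoding stages.

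Concretely, I would partition the transmission into $B$ blocks of length $n$. In each block, generate $2^{n(R_0+R_{\mathrm{dis}})}$ outer codewords $x^n(m,l)$ i.i.d.\ from $Q_X$, with $R_{\mathrm{dis}} := I(U;S|X)-\min_{Q_J} I_{Q_J}(U;Y|X)+2\delta$ and $R_0+R_{\mathrm{dis}} < \min_{Q_J} I_{Q_J}(X;Y)-\delta$, so $R_0$ approaches the target. For each outer codeword, generate $2^{n(I(U;S|X)+\delta)}$ satellites $u^n$ i.i.d.\ from $Q_{U|X}(\cdot\mid x^n)$, partitioned into $2^{nR_{\mathrm{dis}}}$ bins of size roughly $2^{n\min_{Q_J} I_{Q_J}(U;Y|X)}$. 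In block $b$ the encoder transmits $x^n(m_b, l_{b-1})$; at the end of the block, having fully observed $s^n_b$, it picks a satellite jointly typical with $(x^n, s^n_b)$ under $Q_X Q_S Q_{U|XS}$, which succeeds with probability $1-o(1)$ by the covering lemma since the satellite rate exceeds $I(U;S|X)$ and the marginal $(X^n,S^n)$-type is $Q_XQ_S$ independently of $j^n$, and it records the bin index of that satellite as $l_b$ to be forwarded in block $b+1$. The decoder then runs two AVC stages: stage one invokes nonsymmetrizability-$\mathcal{X}$ of $Q_{Y|XJ}$ and the Csisz\'{a}r--Narayan decoder \cite{csiszar1988capacity} to recover $(\hat m_b, \hat l_{b-1})$ from $y^n_b$ reliably and uniformly in $j^n$; stage two, conditioned on $x^n(\hat m_{b-1}, \hat l_{b-2})$ and the just-decoded $\hat l_{b-1}$, searches the indicated bin for the unique satellite jointly typical with $(x^n, y^n_{b-1})$ on $Q_{Y|UXJ}$, which succeeds at the within-bin rate $\min_{Q_J} I_{Q_J}(U;Y|X)$ by the conditional analog of Lemma \ref{lem: nonsymmetrizable u|x inequality} (available since $\mathbb{I}_{\mathcal{U}|\mathcal{X}}(Q_{Y|UXJ})=1$). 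The estimator finally outputs $\hat s^{(b-1)}_i := h(x_i, u_i, y_i^{(b-1)})$, and a joint-typicality argument bounds $\frac{1}{n}\mathbb{E}[d(S^n,\hat S^n)] \leq \max_{Q_J}\mathbb{E}[d(S, h(X,U,Y))] + o(1) \leq D + o(1)$.

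The main obstacle will be the stage-two within-bin decoding, because $U$ is not a direct channel input: its correlation with $Y$ runs entirely through the random state $S$. Transplanting the Csisz\'{a}r--Narayan unambiguous-decoding argument to the conditional virtual AVC $Q_{Y|UXJ}$ requires a conditional symmetrization-exclusion inequality---the $\mathcal{U}|\mathcal{X}$-analog of Lemma \ref{lem: nonsymmetrizable u|x inequality}---together with careful simultaneous control of the joint type of $(X,U,S,Y,J)$ uniformly over every adversarial $j^n$, so that the encoder's covering choice can be distinguished from the many spurious bin candidates that are jointly typical under the wrong jamming type. Secondary ingredients---covering robustness under arbitrary jamming (only the $Q_X,Q_S$-marginals enter, which are independent of $J^n$), outer-stage AVC decoding, block-Markov error propagation (handled by fixing $B$ and letting $n\to\infty$), and the distortion estimate from the empirical joint type---are standard once the stage-two analysis is in place.
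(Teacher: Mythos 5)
Your overall architecture --- block-Markov coding with outer codewords $x^n(m_c,l_{c-1})$, a satellite codebook $u^n$ describing the previous block's state at rate $I(U;S|X)$, forwarding the Wyner--Ziv bin index via the outer code, and a two-stage decoder on $Q_{Y|XJ}$ and the virtual conditional AVC $Q_{Y|UXJ}$ --- is exactly the paper's scheme, and you correctly located where the difficulty lives (stage-two within-bin decoding, $U$ not being a direct channel input). But the proposal does not actually carry that stage out, and its one concrete suggestion for it, namely to pick ``the unique satellite jointly typical with $(x^n,y^n_{b-1})$,'' is too weak: under an adversarial $j^n$, many spurious candidates will be jointly typical. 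The paper instead uses a Csisz\'ar--Narayan-style type decoder over the $\Psi$-sets: the candidate must have its $(X,U,J,Y)$-type in $\Psi^3_\eta$ \emph{and} there must exist some $s^n$ whose $(X,U,S,J)$-type is in $\Psi^4_\eta$ --- i.e.\ $S^n$ is treated as a second, type-constrained jammer --- together with the competitor condition $I(Y,S;U'|J,U,X)\le\eta$. It is precisely that extra $\Psi^4_\eta$ constraint, plus Pinsker applied to the resulting KL bounds, that makes the $\mathcal{U}|\mathcal{X}$-nonsymmetrizability inequality (Lemma \ref{lem: nonsymmetrizable u|x inequality}, with $\mathcal{S}$ relabeled $\mathcal{U}$) produce a contradiction. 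You correctly identify that the conditional analogue of Lemma \ref{lem: nonsymmetrizable u|x inequality} plus joint-type control is needed, but that is the proof, not an ingredient of it.

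There is a second, more concrete gap you should address. The stage-one CN analysis bounds the average error when $(m,l)$ is uniform over $[1:2^{nR}]\times[1:2^{nR_S}]$, but in the block-Markov chain the forwarded index $l_{b-1}$ is the bin index of the selected satellite, hence a function of $S^n_{b-1}$ and not uniform a priori. Since the jammer knows the code (including the binning), it can in principle concentrate its attack on the likely values of $l$; the uniform-average CN bound alone does not rule this out. One must therefore verify $\Pr\{L=l\}\le 2^{-n(R_S-o(1))}$ for every $l$; the paper does this through \emph{random} binning of the satellites plus the calculation culminating in \eqref{eq: probability of lc}, and your fixed partition into bins does not automatically give this. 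Related but more minor: the codebooks should be constant-composition ($x^n\in\mathcal{T}^n_{Q_X}$, $u^n\in\mathcal{T}^n_{Q_{U|X}}[x^n]$) rather than i.i.d., since the entire CN type-based machinery you invoke presupposes fixed composition; with i.i.d.\ selection you would at least have to restrict to a dominant type class before the type analysis applies.
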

\begin{remark}
    Here we consider some degenerate cases:
    \begin{itemize}
        \item When $\mathcal{S}=1$, by setting $U=\emptyset$, the result reduces to the capacity of a point-to-point AVC.
        \item When $\mathcal{J}=1$, the model becomes strictly causal state communication \cite{choudhuri2013causal} and our result reduces to the optimal capacity--distortion function \cite[Theorem 2]{choudhuri2013causal}. Furthermore, we can also write the inverse of the capacity--distortion function, namely the distortion--capacity function, and recover the pure state reconstruction bound when $R=0$ \cite[Theorem 1]{choudhuri2013causal}.  
    \end{itemize}
\end{remark}
The proof is provided in Appendix \ref{sec: proof of achievability of strictly causal}. The first part $\min_{Q_J}I_{Q_J}(X;Y)$ is the total rate that one can reliably transmit through this channel with a given input distribution under any possible jamming sequence, which is the direct result of \cite{csiszar1988capacity} based on the method of type. The remaining part is the rate of the description of the channel states and is encoded in a Wyner-Ziv coding manner with $(S^n,Y^n)$ being considered as a pair of correlated sources. We use a block Markov coding scheme, and in each block the transmitted codeword is determined by two parts: the message to be sent and the Wyner-Ziv bin index of the description of the channel state sequence from the previous block. The message and the description are decoded based on two different channel output observations, which results in two minimizations over the jamming distributions in the achievability result.

The main difficulty in building the coding scheme is the fact that the description $U^n$ relates to the channel output only through its bin index. However, due to the law of large numbers, the connection of $U^n$ and $Y^n$ can still be simulated by a new AVC $Q_{Y|UXJ}$, and the bin index of $U^n$, after taking the average over all $S^n$, is almost uniformly distributed. Hence, it is sufficient to construct another code for the AVC $Q_{Y|UXJ}$ under the average error criterion, and this completes the proof.

It should be noted that the capacity is positive only when the average channel $Q_{Y|XJ}$ is nonsymmetrizable-$\mathcal{X}$. It turns out that the symmetrizability of $S$ does not affect the decoding of $X$ since the error probability takes the average over all $S^n$ in both average and maximal error cases. Then, it follows from the results of \cite{csiszar1988capacity}\cite{ericson1985exponential} that the capacity of an AVC $Q_{Y|XJ}$ is positive if and only if the channel is nonsymmetrizable-$\mathcal{X}$. When the conditions in the theorem do not hold, the codeword $X^n$ cannot be transmitted reliably, and hence, no message and state description can be conveyed to the receiver. However, the estimator can still estimate the state $S^n$ under some level $D$ since the channel output $Y^n$ is correlated to the state sequence, and the estimation function is solely based on $Y^n$.

On the other hand, the state communication restricts the choice of input distributions such that the resulting channel $Q_{Y|UXJ}$ is nonsymmetrizable-$\mathcal{U}|\mathcal{X}$. It is a constraint more strict than nonsymmetrizable-$\mathcal{U}$ since a nonsymmetrizable-$\mathcal{U}|\mathcal{X}$ is always nonsymmetrizable-$\mathcal{U}$ but the converse does not necessarily hold. It is also stronger than nonsymmetrizable-$\mathcal{X}\times\mathcal{U}$ since a symmetrizable-$\mathcal{X}\times\mathcal{U}$ channel is always symmetrizable-$\mathcal{U}|\mathcal{X}$ by setting $x'=x$ in the definition. When the channel is symmetrizable-$\mathcal{S}|\mathcal{X}$, the resulting channel $Q_{Y|XUJ}$ may not be symmetrizable-$\mathcal{U}|\mathcal{X}$ by carefully choosing the distribution $Q_{U|XS}$. This can be shown by the following example.
\begin{example}\label{example: nonsymmetrizable U|X}
    Suppose the channel satisfies
    \begin{align}
        Y= X + S + J
    \end{align}
    with alphabets $\mathcal{Y}=\{0,1,2,3,4\}, \mathcal{X}=\{0,1,2\},\mathcal{S}=\mathcal{J}=\{0,1\}.$ It is easy to verify that the channel is nonsymmetrizable-$\mathcal{X}\times\mathcal{S}$ and nonsymmetrizable-$\mathcal{X}$. To see this, by Definition \ref{def: symmetrizability} and setting $y=0$, we have
    \begin{align}
        P_{J|XS}(0|x',s') = 0 \;\;\text{$\forall (x',s')\neq (0,0)$}.
    \end{align}
    By setting $y=4$ we then have
    \begin{align}
        P_{J|XS}(1|x',s') = 0 \;\;\text{$\forall (x',s')\neq (2,1)$}.
    \end{align}
    It follows that $P_{J|XS}(0|x,s)=P_{J|XS}(1|x,s)=0$ for all $(x,s)\neq (0,0)$ and $(x,s)\neq (2,1)$. Hence, $P_{J|XS}$ is not a distribution and the channel is nonsymmetrizable-$\mathcal{X}\times\mathcal{S}$. By similar arguments, it follows that the channel is nonsymmetrizable-$\mathcal{X}$ but symmetrizable-$\mathcal{S}$ and symmetrizable-$\mathcal{S}|\mathcal{X}$. Now we choose the distribution $P_{U|XS}$ to be $U=X+S$. Then the alphabet $\mathcal{U}=\{0,1,2,3\}$.

    Again we set $y=0$, and this implies $x=s=j=0$ and hence $u=0$. By Definition \ref{def: symmetrizability}, if the channel $Q_{Y|XUJ}$ is symmetrizable-$\mathcal{U}|\mathcal{X}$,
    \begin{align}
        P_{J|U'X}(J=0|u',X=0) = 0 \;\;\text{$\forall u'\neq 0$}.
    \end{align}
    Setting $y=2,x=0$ implies $s=j=1$ and hence $u=1$, which gives
    \begin{align}
        P_{J|U'X}(J=1|u',X=0) = 0 \;\;\text{$\forall u'\neq 1$}.
    \end{align}
    Combining the above two equations together gives $P_{J|UX}(J=0|u,X=0)=P_{J|UX}(J=1|u,X=0)=0$ for $u=2,3$. Therefore, there does not exist a distribution $P_{J|UX}$ such that the channel is symmetrizable-$\mathcal{U}|\mathcal{X}$.
\end{example}

Considering the strictly causal case as coding with delayed state observation $d=1$ as defined in Definition \ref{def: strictly causal code def}, from the achievability proof, the proposed coding scheme in fact works for any $1\leq d \leq n$.
For the lossless state communication case, we have the following result.
\begin{corollary}\label{coro: strictly causal lossless average error}
    For the channel considered in Theorem \ref{the: achievability of strictly causal}, if it is also nonsymmetrizable-$\mathcal{S}|\mathcal{X}$, the deterministic coding capacity of the joint message and lossless state communication over this channel under the average error criterion is
    \begin{align}
        \label{eq: lossless capacity}C_{a,s} \geq \max_{Q_X} \min_{Q_J,Q_J'} [I_{Q_J}(X;Y) - H_{Q_J'}(S|X,Y)]^+.
    \end{align}
    where $[a]^+ := \max\{a,0\}.$ When the state observation has a delay $d=n$, the bound is tight.
\end{corollary}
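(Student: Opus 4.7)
The plan is to establish the lower bound by specializing Theorem~\ref{the: achievability of strictly causal} and to prove tightness for $d=n$ via a Fano-type converse augmented by jammer time-sharing.

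For achievability, I take $U=S$ in Theorem~\ref{the: achievability of strictly causal} together with the identity estimator $h(x,u,y)=u$. Under this choice $I(U;S|X)=H(S|X)$ (since $S$ determines $U$) and $I_{Q_J}(U;Y|X)=H(S|X)-H_{Q_J}(S|X,Y)$, so the rate formula collapses to $\min_{Q_J}I_{Q_J}(X;Y)-\max_{Q_J}H_{Q_J}(S|X,Y)$, which equals $\min_{Q_J,Q_J'}[I_{Q_J}(X;Y)-H_{Q_J'}(S|X,Y)]$ because the two minimizations are over independent variables. The expected distortion is $\mathbb{E}[d(S,S)]=0$, matching the lossless requirement; the indicator $\mathbb{I}_{\mathcal{U}|\mathcal{X}}(Q_{Y|UXJ})=1$ reduces to the hypothesized nonsymmetrizability-$\mathcal{S}|\mathcal{X}$ since $Q_{Y|UXJ}=W_{Y|XSJ}$ when $U=S$; and the $[\cdot]^+$ operator accounts for the nonnegativity of rates.

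For the converse at $d=n$, the encoder has no state information, so $X^n=f(M)$ is independent of $S^n$. For any IID jammer $Q_J$, I apply Fano's inequality to the pair $(M,S^n)$---both of which must be reliably recovered---to obtain $H(M,S^n\mid Y^n)\leq o(n)$. Combining this with $I(M,S^n;Y^n)\leq\sum_i I(X_i,S_i;Y_i)\leq n\,I_{Q_X,Q_J}(X,S;Y)$ and the identity $I(X,S;Y)=I(X;Y)+H(S)-H(S|X,Y)$ (using $X\perp S$) produces the per-jammer bound $R\leq I_{Q_J}(X;Y)-H_{Q_J}(S|X,Y)+o(1)$. To lift this to the claimed separated form, I let the jammer partition the $n$ channel uses into two disjoint subsets of fractions $\alpha$ and $1-\alpha$, using IID $Q_J^\star\in\arg\min_{Q_J}I_{Q_J}(X;Y)$ on the first and IID $Q_{J'}^\star\in\arg\max_{Q_{J'}}H_{Q_{J'}}(S|X,Y)$ on the second, and then apply the per-jammer bound to each subset; optimizing the partition yields $R\leq[\min_{Q_J}I_{Q_J}(X;Y)-\max_{Q_{J'}}H_{Q_{J'}}(S|X,Y)]^+$, matching the achievability.

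The main technical obstacle is the decoupling step of the converse: a single IID jammer only furnishes the weaker joint bound $\min_{Q_J}[I(X;Y)-H(S|X,Y)]$, and separating this into $\min I$ and $\max H$ attained at possibly distinct jamming distributions requires careful exploitation of the jammer's freedom to vary its strategy across disjoint time-slot subsets, together with the assumed nonsymmetrizability-$\mathcal{S}|\mathcal{X}$ to rule out pathological zero-capacity cases.
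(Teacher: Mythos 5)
Your achievability argument is correct and coincides with the paper's: setting $U=S$ in Theorem~\ref{the: achievability of strictly causal}, noting $Q_{Y|UXJ}=W_{Y|XSJ}$, $I(U;S|X)=H(S|X)$ and $I_{Q_J}(U;Y|X)=H(S|X)-H_{Q_J}(S|X,Y)$ does collapse the bound to the claimed form, and the nonsymmetrizability hypothesis reduces correctly. The paper additionally notes that for $d=n$ the encoder waits one initial block of negligible relative length, but this is minor bookkeeping.

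The converse has a genuine gap. Applying Fano jointly to $(M,S^n)$ under a single compound jammer $Q_J$ gives, after single-letterization with $X\perp S$, the coupled per-jammer bound $R\leq I_{Q_J}(X;Y)-H_{Q_J}(S|X,Y)+o(1)$, and minimizing over IID jammers yields only
\begin{align*}
R\leq \min_{Q_J}\bigl[I_{Q_J}(X;Y)-H_{Q_J}(S|X,Y)\bigr],
\end{align*}
which is generally strictly larger (weaker as a converse) than the target $\min_{Q_J}I_{Q_J}(X;Y)-\max_{Q_{J'}}H_{Q_{J'}}(S|X,Y)$. Your time-sharing repair does not close this gap: letting the jammer use $Q_J^\star$ on a fraction $\alpha$ of the slots and $Q_{J'}^\star$ on the rest produces, through the same joint-Fano argument, a convex combination
\begin{align*}
R\leq \alpha\bigl[I_{Q_J^\star}(X;Y)-H_{Q_J^\star}(S|X,Y)\bigr]+(1-\alpha)\bigl[I_{Q_{J'}^\star}(X;Y)-H_{Q_{J'}^\star}(S|X,Y)\bigr]+o(1),
\end{align*}
whose minimum over $\alpha$ recovers only the single-jammer coupled bound, never the separated one; and one cannot ``apply the per-jammer bound to each subset'' independently, since the code acts on the whole block. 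The paper avoids the coupling altogether by invoking two \emph{separate} compound-channel constraints on the same code: Fano for the message under an arbitrary $Q_J$, giving $H(M)\leq I_{Q_J}(M;Y^n)+n\epsilon$, and the lossless-reconstruction requirement under a possibly \emph{different} arbitrary $Q_{J'}$, giving $H_{Q_{J'}}(S^n|Y^n,X^n)\leq n\epsilon$. Since the latter is at most $n\epsilon$, it can be subtracted from the former essentially for free,
\begin{align*}
nR\leq I_{Q_J}(M;Y^n)-H_{Q_{J'}}(S^n|Y^n,X^n)+2n\epsilon,
\end{align*}
and after single-letterizing each term under its own jammer one obtains the separated bound $R\leq\max_{Q_X}\bigl[I_{Q_J}(X;Y)-H_{Q_{J'}}(S|X,Y)\bigr]+2\epsilon$ for every $(Q_J,Q_{J'})$. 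To fix your converse, replace the joint Fano on $(M,S^n)$ with these two decoupled bounds; the time-sharing device should be dropped.
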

The proof is given in Appendix \ref{sec: proof of coro: strictly causal lossless average error}.
\begin{remark}
    It might be surprising that the symmetrizability of $S$ does not matter as long as one can find an encoding strategy such that $Q_{Y|UXJ}$ is not symmetrizable-$\mathcal{U}|\mathcal{X}$. However, this is not the case for the lossless reconstruction.
\end{remark}

\begin{remark}
    The lower bound expression in \eqref{eq: lossless capacity} is the difference between two terms, in which the first one captures the influence of the jammer on the message transmission task, and the entropy term captures the cost of lossless state reconstruction under the jamming. To make sure the coding scheme fulfills the average decoding error and lossless reconstruction constraints independent of the jamming strategy, we have to consider the worst cases for both tasks independently, which means the minimal communication rate and the maximal lossless reconstruction cost. 
\end{remark}

\begin{corollary}\label{coro: noncausal average error lower bound}
    The capacity--distortion function for the noncausal case using deterministic coding under the average error criterion is lower bounded by
    \begin{align}
        C_{a,n}(D) \geq \max\{\min_{Q_J} I(U;Y) - I(U;S)\},
    \end{align}
    where the maximum is taken over all distributions $Q_{U|S},Q_{X|US}$ and deterministic functions $h: \mathcal{U}\times\mathcal{Y}\to\hat{\mathcal{S}}$ such that $Q_{Y|UJ}$ is nonsymmetrizable-$\mathcal{U}$ and $\max_{Q_J}\mathbb{E}[d(S,h(U,$ $Y))]\leq D$.
\end{corollary}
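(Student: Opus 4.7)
The plan is to combine Gelfand--Pinsker binning with the arbitrarily varying channel coding strategy used in the proof of Theorem \ref{the: achievability of strictly causal}. The key observation is that once one averages over the random choices of $(S^n,X^n)$ conditional on the chosen auxiliary codeword $U^n$, the effective channel from $U^n$ to $Y^n$ is the AVC $Q_{Y|UJ}(y|u,j)=\sum_{x,s}W_{Y|XSJ}(y|x,s,j)\,Q_{X|US}(x|u,s)\,Q_{S|U}(s|u)$. By hypothesis this induced AVC is nonsymmetrizable-$\mathcal{U}$, so the deterministic coding result of \cite{csiszar1988capacity} yields a reliable rate of $\min_{Q_J}I(U;Y)$ for communicating a $U$-codeword index, and subtracting the usual Gelfand--Pinsker binning overhead $I(U;S)$ gives the claimed bound.

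Concretely, I would fix $Q_{U|S}$, $Q_{X|US}$, $h$ and a small $\delta>0$, then generate $2^{n(R+I(U;S)+2\delta)}$ sequences $u^n(m,\ell)$ i.i.d. according to $Q_U$, indexed by $m\in[1{:}2^{nR}]$ and bin index $\ell$. On observing $(m,s^n)$, the encoder picks the smallest $\ell$ such that $(u^n(m,\ell),s^n)$ is jointly typical with respect to $Q_{US}$; by the covering lemma this succeeds with probability approaching one whenever the bin size exceeds $2^{nI(U;S)}$. Given the selected $u^n$, the channel input $x^n$ is drawn symbol-by-symbol from $Q_{X|US}$ and transmitted. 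At the receiver I apply the robust typicality decoder of \cite{csiszar1988capacity} adapted to the induced AVC $Q_{Y|UJ}$: declare $(\hat m,\hat\ell)$ if this is the unique pair whose joint type with $y^n$ is consistent with some jamming type and dominates all alternatives in the appropriate conditional mutual information sense. The estimator then outputs $\hat s_i=h(\hat u_i,y_i)$, and the distortion constraint follows from the assumption $\max_{Q_J}\mathbb{E}[d(S,h(U,Y))]\leq D$ together with joint typicality of $(U^n,S^n,Y^n)$ under the realized jamming.

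The standard random coding argument over the ensemble yields vanishing average error provided $R+I(U;S)+2\delta<\min_{Q_J}I(U;Y)$. To obtain a deterministic code I would invoke the expurgation step of \cite{csiszar1988capacity}, which is precisely where the nonsymmetrizable-$\mathcal{U}$ hypothesis on $Q_{Y|UJ}$ is needed in the form of the analog of Lemma \ref{lem: nonsymmetrizable u|x inequality}. The hard part will be making the reduction to the induced AVC $Q_{Y|UJ}$ rigorous: unlike Theorem \ref{the: achievability of strictly causal}, where the auxiliary description is conveyed through a block-Markov bin index and is therefore essentially independent of the concurrent channel input $X^n$, here $X^n$ is drawn jointly with $S^n$ conditional on $U^n$, so one must show via a method-of-types concentration argument that, uniformly over every jamming sequence $j^n$, the empirical channel from $u^n(m,\ell)$ to $y^n$ is close to $Q_{Y|UJ}$ for the relevant realized $j$-type. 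Once this concentration is in place, the expurgation and packing arguments carry over directly from the strictly causal proof, completing the achievability.
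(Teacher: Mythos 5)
Your high-level plan — Gel'fand--Pinsker binning over a codebook $\{u^n\}$ for the induced AVC $Q_{Y|UJ}$, decoded with a Csisz\'ar--Narayan type-based rule under the nonsymmetrizable-$\mathcal{U}$ hypothesis — is the same route the paper takes, and you correctly identified the rate $\min_{Q_J}I(U;Y)-I(U;S)$. However, you mislocate the delicate step. You flag as the ``hard part'' the concentration of the empirical channel from $u^n$ to $y^n$ around $Q_{Y|UJ}$; that part is in fact routine, since $S^n$ is i.i.d.\ $Q_S$ and $X^n$ is drawn from the product law $Q_{X|US}^n$, so the law of large numbers gives the required joint-type concentration uniformly in $j^n$. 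The step your proposal leaves out entirely is what the paper singles out as the crux: the Csisz\'ar--Narayan average-error bound is an average over a \emph{uniformly} chosen codeword, whereas the Gel'fand--Pinsker encoder selects $u^n(m,\ell)$ as a function of $S^n$, so a priori the distribution over $\ell$ in the subcodebook $\mathcal{C}(m)$ need not be uniform. The paper resolves this by (i) selecting uniformly at random among the jointly typical candidates and (ii) invoking the estimate \eqref{ine: probability of LK} showing that, after averaging over $S^n$, each $u^n$ in the subcodebook is selected with probability $\dot{=}2^{-nR'}$, so the error averaged over $(M,S^n)$ matches, up to a subexponential factor, the average error over the codebook of the induced AVC. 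Your ``smallest $\ell$'' tie-breaking rule actively undermines this: it concentrates the selection on roughly the first $2^{nI(U;S)}$ codewords of each bin, so low-index codewords are selected an exponential factor $2^{n\delta}$ more often than under the uniform average, and this re-weighting must be controlled before the AVC average-error bound can be transferred.

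Two secondary issues. You generate the $u^n$ i.i.d.\ from $Q_U$, but the entire Csisz\'ar--Narayan packing machinery and the decoding rule you describe (which compares joint types and conditional mutual informations) are set up for constant-composition codewords drawn from the fixed type set $\mathcal{T}^n_{Q_U}$, and this is exactly what the paper does; with i.i.d.\ codewords the joint-type counting lemmas no longer apply directly. Also, calling the deterministic-code step ``expurgation'' is a misnomer: in \cite{csiszar1988capacity} one shows that a randomly selected constant-composition codebook has the required packing/type properties simultaneously for all $j^n$ with probability $1-2^{-\Theta(n)}$ (double-exponential concentration), and then the unambiguity of the decoder under nonsymmetrizability (via the analog of Lemma \ref{lem: nonsymmetrizable u|x inequality}) closes the argument — no codewords are thrown away.
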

Taking average over all $s^n$, the probability of each $U^n$ in the codebook being selected is asymptotically equal to the uniform distribution as shown by \eqref{ine: probability of LK}. Hence, coding schemes for normal AVCs under the average error criterion are sufficient. The encoder generates a subcodebook for each message by random selection from a fixed type set $T^n_{Q_U}$, where $Q_U=\sum_s Q_{U|S}Q_S$. The covering error probability can be bounded by the Chernoff bound and is double exponentially small with respect to $n$, and the remaining proof follows the analysis in \cite{csiszar1988capacity} for the channel $Q_{Y|UJ}.$

For the pure lossless state communication, we have the following tight bound.
\begin{corollary}\label{coro: noncausal pure lossless state com}
    For the pure lossless state communication over a channel that is  nonsymmetrizable-$\mathcal{X}\times\mathcal{S}$, the sequence $S^n$ can be constructed losslessly using deterministic coding if and only if 
    \begin{align}
        \label{ine: condition for noncausal pure lossless state communication}H(S) \leq \min_{Q_J}\max_{Q_{X|S}} I(X,S;Y).
    \end{align}
\end{corollary}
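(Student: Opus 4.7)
The plan is to establish both directions independently. The converse follows from a Fano-type argument against an iid jammer, while the achievability is obtained by regarding $W_{Y|XSJ}$ as an AVC with composite input $(X,S)\in\mathcal{X}\times\mathcal{S}$ and applying the classical nonsymmetrizable AVC coding theorem to a joint source--channel code in which the ``messages'' are the typical state sequences.

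For the converse, suppose lossless reconstruction is achievable, so that $H_{Q_J}(S^n\mid Y^n)\le n\epsilon_n$ with $\epsilon_n\to 0$ for every jammer strategy, in particular against any iid choice $J_i\stackrel{\text{iid}}{\sim}Q_J$. Since no message is transmitted, $X^n$ is a function of $S^n$, and
\begin{align*}
nH(S)&=I(S^n;Y^n)+H(S^n\mid Y^n)\\
&\le I(X^n,S^n;Y^n)+n\epsilon_n.
\end{align*}
Under iid jamming, the memoryless channel factorizes so that $I(X^n,S^n;Y^n)\le\sum_{i=1}^n I(X_i,S_i;Y_i)$, and the concavity of $I(X,S;Y)$ in $Q_{X|S}$ for fixed $Q_S$ and fixed induced channel $Q_{Y|XS}$ permits single-letterization to $n\max_{Q_{X|S}}I(X,S;Y)$. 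Minimizing over $Q_J$ yields $H(S)\le\min_{Q_J}\max_{Q_{X|S}}I(X,S;Y)$, as required.

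For the achievability, note that $I(X,S;Y)$ is concave in $Q_{X|S}$ (with $Q_S$ fixed) and convex in $Q_J$, so Sion's minimax theorem provides a saddle point $(Q^*_{X|S},Q^*_J)$ with $C^*:=\max_{Q_{X|S}}\min_{Q_J}I(X,S;Y)=\min_{Q_J}\max_{Q_{X|S}}I(X,S;Y)>H(S)$. I would then construct a joint source--channel code as follows: for each typical sequence $s^n\in T^n_{Q_S,\delta}$, independently draw a codeword $x^n(s^n)$ uniformly from the conditional type class $T^n_{Q^*_{X|S}}(s^n)$, producing a family of virtual codewords $\{(x^n(s^n),s^n):s^n\in T^n_{Q_S,\delta}\}$ of essentially constant joint type $Q_SQ^*_{X|S}$ indexed by the typical state sequences. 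The decoder applies maximum mutual information decoding on the composite input $(X,S)$ to recover $\hat{s}^n$ from $y^n$. Since the number of codewords is at most $2^{n(H(S)+\delta)}<2^{nC^*}$ and the composite-input AVC is nonsymmetrizable-$\mathcal{X}\times\mathcal{S}$ by hypothesis, the classical argument in the spirit of \cite{csiszar1988capacity} yields vanishing average decoding error for every jamming sequence, hence $\Pr(\hat{S}^n\ne S^n)\to 0$ and $H(S^n\mid Y^n)=o(n)$.

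The main obstacle is porting the classical AVC argument to the constrained composite-input setting, where the distribution on $\mathcal{X}\times\mathcal{S}$ is not freely chosen but must have $\mathcal{S}$-marginal equal to $Q_S$, and the ``messages'' are the typical $s^n$ themselves rather than an externally chosen uniform index. One has to verify that the usual method-of-types bound on the symmetrization error still goes through under this constraint---in particular that nonsymmetrizability-$\mathcal{X}\times\mathcal{S}$ furnishes a $\zeta$-separation analogous to Lemma \ref{lem: nonsymmetrizable u|x inequality}---and that atypical $s^n$ events, whose probability vanishes exponentially, do not corrupt the AVC error analysis. Once this is handled, existence of a deterministic good code follows by the standard random-coding plus expurgation argument.
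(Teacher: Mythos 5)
Your converse matches the paper's almost exactly: a Fano argument against a compound (iid) jammer followed by single-letterization, with the minimax duality between $\max_{Q_{X|S}}\min_{Q_J}$ and $\min_{Q_J}\max_{Q_{X|S}}$ bridging achievability and converse. Your overall achievability idea---treat $(X,S)$ as the composite input of an AVC, let the typical $s^n$ play the role of messages, and invoke nonsymmetrizability-$\mathcal{X}\times\mathcal{S}$---is also the paper's idea.

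However, there is a genuine gap in the achievability that you flag but do not close, and it is precisely the obstacle the paper identifies as the nontrivial part of the argument. The Csisz\'{a}r--Narayan method-of-types construction for a nonsymmetrizable AVC requires the entire codebook to lie in a \emph{single} type class $T^n_{P_{XS}}$: the key lemmas that bound the number of confusable codewords and control the symmetrization events are all stated and proved for constant-composition codes. In your construction, the codewords are $(x^n(s^n),s^n)$ for $s^n$ ranging over the $\delta$-typical set, and these sequences do \emph{not} share a joint type---the $\mathcal{S}$-marginal type of $s^n$ varies over roughly $(n+1)^{|\mathcal{S}|}$ types within the typical set, and accordingly the joint type of $(x^n(s^n),s^n)$ varies. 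Calling the types ``essentially constant'' does not rescue the argument, because the type-counting bounds in the AVC analysis are exact combinatorial identities over a fixed type class, not robust to $O(\delta)$ perturbations of the composition. The paper resolves this concretely: it splits $s^n=s^{n_1}s^{n_2}$ with $n_1/n_2\to 0$, builds a separate constant-composition code on $T^{n_2}_{P_S}$ for each type $P_S$ of $S^{n_2}$, and transmits the realized type $P_S$ over the first $n_1$ symbols using a prefix Gel'fand--Pinsker code (of negligible rate, since there are only polynomially many types). Without some device of this kind, your ``standard random-coding plus expurgation argument'' does not literally apply, so the achievability as written is incomplete.
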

\begin{remark}
    When $\mathcal{J}=1$, the problem reduces to the pure state reconstruction on a channel with random states. Corollary \ref{coro: noncausal pure lossless state com} reduces to \cite[Corollary 1]{choudhuri2012non}.  
\end{remark}
The proof is given in Appendix \ref{sec: proof of coro: noncausal pure lossless state com}.
For the pure lossless state communication case, by the law of large numbers, it is sufficient to consider all the typical sequences $S^n\in\mathcal{T}^n_{Q_S,\delta}$. The left-hand side is the rate we need to describe all the typical sequences, and the state communication will succeed if the channel can reliably communicate so much information. At first glance, the result follows by setting $U=(X,S)$ in Corollary \ref{coro: noncausal average error lower bound} with $0$ message rate, which means the encoder generates $X^n$ for each typical $S^n$ and input the pair $(X^n,S^n)$ as the codeword to the channel. Then, the right-hand side of \eqref{ine: condition for noncausal pure lossless state communication} is the channel capacity, and the nonsymmetrizable-$\mathcal{U}$ in Corollary \ref{coro: noncausal average error lower bound} becomes nonsymmetrizable-$\mathcal{X}\times\mathcal{S}$. However, the proof of achievability is not straightforward. This is because in the construction of the coding scheme, all the codewords $(X^n,S^n)$ should have the same type, while the sequence $S^n$ is generated according to $Q_S^n.$ To solve this problem, we construct a code for each type of $S^n$, and communicate the type to the receiver using a prefixed code. Due to the asymptotic equipartition property of the typical sequence, constructing average error codes is sufficient.

Given \eqref{ine: condition for noncausal pure lossless state communication}, the nonsymmetrizable-$\mathcal{X}\times\mathcal{S}$ condition is sufficient and necessary for the lossless communication of the state sequence. When the condition does not hold, $\eqref{ine: condition for noncausal pure lossless state communication}$ can still apply but the lossless reconstruction of the state at the receiver is not possible.

\subsection{Main Results: Maximal Error Criterion}

For distributions $Q_{SU}Q_{X|US}W_{Y|XSJ}$ and $Q_{J|US}$, define a new arbitrarily varying channel $\mathcal{Q}:=\{Q_{Y|UJ}:J\in\mathcal{J}\}$ such that
\begin{align}
    Q_{Y|UJ} = \sum_{s,x}Q_{S|UJ}Q_{X|US}W_{Y|XSJ},
\end{align}
where $Q_{S|UJ}=\frac{Q_{SU}Q_{J|US}}{Q_UQ_{J|U}}, Q_{J|U}=\sum_sQ_{J|US}Q_{S|U}.$
When the random state sequence $S^n$ is available at the encoder noncausally, we have the following result.
\begin{theorem}\label{the: noncausal maximal error}
    For deterministic coding under the maximal error criterion, the capacity of the message and state communication over a DMC with random states noncausally available at the encoder and arbitrary jamming is lower bounded by
    \begin{align}\label{neq: noncausal maximal lower bound}
        C_{m,n}(D)\geq \max\left\{\min_{Q_{J|U}\in\widetilde{\mathcal{P}}(\mathcal{J},Q_{S|U})}\{I(U;Y),D(Q_U)\} -I(U;S)\right\},
    \end{align}
    where
    \begin{align}
        &\widetilde{\mathcal{P}}(\mathcal{J},Q_{S|U}) = \{Q_{J|U}:  Q_{J|U}=\sum_sQ_{J|US}Q_{S|U}\;\text{for $Q_{J|US}\in\mathcal{P}(\mathcal{J})$}\}\\
        &D(Q_U) = \min_{\substack{Q_{U\widetilde{U}}\;s.t.:\\Q_U=Q_{\widetilde{U}},\\Pr\{U\overset{\mathcal{Q}}{\sim} \widetilde{U}\}=1}} I(U;\widetilde{U}),\\
        &Q_U = \sum_s Q_{U|S}Q_S,
    \end{align}
    and the maximum runs over all distributions $Q_{U|S},Q_{X|US}$ and deterministic functions $h:\mathcal{U}\times\mathcal{Y}\to\hat{\mathcal{S}}$ such that the graph of $Q_{Y|UJ}$ contains isolated vertices and $\max_{Q_J}\mathbb{E}[d(S,h(U,Y))]\leq D$.
\end{theorem}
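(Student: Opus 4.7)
The plan is to combine Gel'fand--Pinsker binning of the auxiliary $U^n$ with the Csisz\'ar--Narayan maximal-error AVC construction applied to the induced single-letter channel $\mathcal{Q}=\{Q_{Y|UJ}\}$. Fix distributions $Q_{U|S}$, $Q_{X|US}$ and an estimator $h$ satisfying the hypotheses, and fix $\delta>0$. I would set the binning rate to $R' = I(U;S)+\delta$ and target a message rate
\begin{align*}
    R < \min\bigl\{\min_{Q_{J|U}} I(U;Y),\; D(Q_U)\bigr\} - I(U;S),
\end{align*}
so that $R+R'$ is strictly below $\min(C(Q_U),D(Q_U))$, the rate at which the Csisz\'ar--Narayan theorem guarantees a deterministic maximal-error code exists for $\mathcal{Q}$ provided its graph has isolated vertices.

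The construction proceeds as follows. First, I would build a codebook $\mathcal{C}=\{u^n(\ell):\ell\in[1:2^{n(R+R')}]\}$ inside the type class $T_{Q_U}^n$ and partition it uniformly at random into $2^{nR}$ bins $\mathcal{B}(m)$ of size $2^{nR'}$ each, then apply the Csisz\'ar--Narayan selection and expurgation to the whole codebook $\mathcal{C}$ so that a decoder for $\mathcal{Q}$ with vanishing per-codeword maximal error under any $j^n$ exists. Because $R' > I(U;S)$, a standard covering lemma shows that with high probability every bin contains a $u^n$ jointly $Q_{SU}$-typical with every $s^n \in T^n_{Q_S,\delta'}$. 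The encoder, on input $(m,s^n)$, selects such a $u^n(m,s^n)\in\mathcal{B}(m)$ and transmits $x^n$ generated symbolwise from $Q_{X|US}(\cdot\mid u_i,s_i)$; the decoder runs the Csisz\'ar--Narayan decoder on $y^n$ to get $\hat{u}^n$, outputs $\hat{m}$ as the bin index of $\hat{u}^n$, and estimates $\hat{s}_i=h(\hat{u}_i,y_i)$.

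The error analysis then reduces to two observations. For the distortion, joint typicality of $(s^n,u^n)$ together with the fact that marginalizing $W_{Y|XSJ}$ by $Q_{X|US}$ and $Q_{S|U}$ yields exactly $Q_{Y|UJ}$ gives $\frac{1}{n}\mathbb{E}[d(S^n,\hat{S}^n)] \leq D + o(1)$ uniformly over $Q_J$, by the assumed single-letter distortion bound $\max_{Q_J}\mathbb{E}[d(S,h(U,Y))]\leq D$. For the maximal bin-index error, one fixes $m$ and an arbitrary $j^n$ the jammer may choose knowing only $m$; since the jammer does not see $S^n$, averaging the channel output coordinatewise over $(S_i,X_i)$ induces precisely the AVC $Q_{Y|UJ}$, so the Csisz\'ar--Narayan per-codeword guarantee applies to each realized $u^n\in\mathcal{B}(m)$ and is preserved under the average over $s^n$ that defines $e(m,j^n)$.

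The main obstacle will be reconciling the Csisz\'ar--Narayan maximal-error guarantee, which is stated per codeword of a fixed codebook, with the Gel'fand--Pinsker selection, in which the transmitted $u^n$ is a random function of the unseen state $s^n$ and only the bin $\mathcal{B}(m)$ is determined by the message. The conceptual reason this works is that the jammer's knowledge of $m$ amounts only to knowing the bin, not the codeword, so its worst case is dominated by the per-codeword worst case controlled by $D(Q_U)$ and $\min_{Q_{J|U}}I(U;Y)$. Making this rigorous requires constructing a single codebook $\mathcal{C}$ that simultaneously satisfies the covering property for every $(m,s^n)$, the isolated-vertex pairwise distinguishability needed by the Csisz\'ar--Narayan decoder, and the constant-composition constraint; this is handled by a random-permutation expurgation of the Csisz\'ar--Narayan codebook, checking that each of the three bad events has doubly-exponentially small probability so that their union can be avoided for some realization.
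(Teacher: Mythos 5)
Your high-level plan is the right one, and the central observation — that averaging the jammer's channel over the hidden state $S^n$ and the input $X^n\sim Q_{X|US}^n$ induces the AVC $Q_{Y|UJ}$, so that the Csisz\'ar--K\"orner maximal-error machinery and in particular the quantity $D(Q_U)$ governs the achievable rate, while the jammer knowing only the bin forces it to act against a near-uniform selection within the bin — is exactly the insight the paper builds on. You also correctly use random (rather than deterministic) binning, which the paper explains is needed so that bad codewords are spread uniformly over the subcodebooks. However, your specific codebook construction, namely ``apply the Csisz\'ar--Narayan selection and expurgation to the whole codebook $\mathcal{C}$'' and then argue a union bound over three bad events, is precisely the approach the paper explicitly identifies as not suitable for this setting, and I think the concern is real. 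After expurgation the surviving codewords are no longer an i.i.d.\ sample from $T^n_{Q_U}$: you have conditioned on the event that the retained codewords are pairwise ``good.'' But the Gel'fand--Pinsker side of your argument — the covering lemma for every $(m,s^n)$, and, more importantly, the claim that for each fixed $m$ the encoder's selected $U^n$ is nearly uniform over $\mathcal{B}(m)$ after averaging over $S^n$, which is what converts the per-codeword Csisz\'ar--Narayan guarantee into a bound on the maximal error $e(m,j^n)$ — relies on that i.i.d.\ structure. ``Covering holds for the original random bin'' together with ``only a $2^{-n\delta}$ fraction of codewords are deleted'' does not imply ``covering holds for the expurgated bin,'' because the deleted codewords could concentrate exactly on the set typical with a particular $s^n$; a naive union bound does not break this dependence between your bad events.

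The paper circumvents this by \emph{not deleting anything}. It keeps the full random codebook, proves probabilistic properties of it (Lemma \ref{lem: good codebook}: the codebook satisfies a quantitative version of the Csisz\'ar--Narayan packing bound; Lemma \ref{lem: size of subcodebooks}: every bin is large enough to cover; Lemma \ref{lem: good codeword}: all but an exponentially small fraction of codewords are good; Lemma \ref{lem: number of bad codewords in each subcodebook}: each bin contains only an exponentially small fraction of bad codewords), and then has the \emph{encoder} declare an error if the codeword it selects happens to be bad. The contribution of this event to $e(m,j^n)$ is then bounded by $2^{-n\nu}$ by averaging over $S^n$ and the uniformly random selection within the bin — a probabilistic encoding-error bound rather than a deterministic codebook property. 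This keeps the random-coding structure intact for the covering and uniformity arguments and is what lets the two requirements (covering of every $s^n$ and pairwise distinguishability) coexist without circularity. To repair your proposal you would need to either replace the expurgation with this keep-and-bound strategy, or supply a genuinely new argument showing the expurgated bins still cover — e.g.\ by first proving that within each bin the bad codewords and the $s^n$-typical codewords are both individually rare and jointly almost disjoint, which is more delicate than the union bound you sketch.
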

\begin{remark}
    Here we discuss some degenerate cases.
    \begin{itemize}
        \item When $\mathcal{S}=1$, by setting $U=X$, the result reduces to the lower bound of the maximal error capacity of AVCs \cite{csiszar1981capacity}.
        \item When $\mathcal{J}=1$, the problem reduces to joint message transmission and state reconstruction with noncausal observation at the encoder, and our result reduces to \cite[Theorem 1]{choudhuri2012non}.
        \item When $\mathcal{J}=1$ and there exists a deterministic function $h$ such that $D\geq \max_{Q_{UX|S}}\mathbb{E}[d(S,h(U,Y))],$ the distortion constraint is inactive and the result reduces to the Gel'fand-Pinsker Theorem\cite[Section 7.6]{el2011network}.
    \end{itemize}
\end{remark}
The proof is provided in Appendix \ref{sec: proof of achievability of noncausal case}.
The coding scheme for Theorem \ref{the: noncausal maximal error} is mainly based on the scheme proposed in \cite{csiszar1981capacity} for arbitrarily varying channels under the maximal error criterion. The coding scheme in \cite{csiszar1981capacity} first constructs a codebook by random selection and then chooses the codewords with a specific property (a deterministic selection). Specifically, it defines a pair of codewords $(x^n_i,x^n_j)$ as bad codewords if the mutual information computed based on their joint empirical distribution is higher than the rate of the codebook. One can consider such a pair of codewords highly correlated and, hence, hard to distinguish. These bad codewords are all deleted after the random selection in \cite{csiszar1981capacity}. Such a codebook construction is not suitable for our case. This is because when noncausal information is available at the encoder, the encoder can use the Gel'fand-Pinsker coding to achieve a higher communication rate, and the evaluation of the coding scheme's performance relies on a random coding argument, while choosing codewords with a specific property makes the codebook deterministic. Instead, we enhance the result in \cite{csiszar1981capacity} by proving that most of the codewords in a randomly generated codebook have the desired property, which are called `good' codewords, and leaving those bad codewords in the codebook is negligible when evaluating the decoding error probability. 

When considering the maximal error criterion, the distribution of the message can be arbitrary. However, an argument similar to \eqref{neq: probability within the bin} implies that given each message $m\in\mathcal{M}$ and taking the average over all random states $S^n$, the codewords $U^n$ in subcodebook $\mathcal{C}(M)$ are selected almost uniformly at random. To ensure the probability of a bad codeword being selected is small, we need to uniformly distribute these codewords to all the subcodebooks. To this end, we use a random binning operation on the codebook to construct subcodebooks instead of the deterministic binning as we usually do for the Gel'fand-Pinsker coding.

\begin{corollary}\label{coro: strictly causal maximal error}
    If the graph of the channel $Q_{Y|XJ}$ has isolated vertices, the capacity--distortion function of the strictly causal case using deterministic coding under the maximal error criterion has the following lower bound:
    \begin{align}\label{eq: strictly causal maximal error lower bound}
        C_{m,s}(D) \geq \max\left\{\min\{\min_{Q_{J|X}}I(X;Y),D(Q_X)\} + \min_{Q_{J|X}} I(U;Y|X) - I(U;S|X)\right\},
    \end{align}
    where $D(Q_X):=\min_{Pr\{X \overset{\mathcal{Q}_{Y|XJ}}{\sim} \widetilde{X}\}=1} I(X;\widetilde{X}),Q_{Y|XJ}=\sum_s W_{Y|XSJ}Q_S$, the maximization runs over all distributions $Q_{XU}$ and deterministic functions $h:\mathcal{X}\times\mathcal{U}\times\mathcal{Y}\to\hat{\mathcal{S}}$ such that $\mathbb{I}_{\mathcal{U}|\mathcal{X}}(Q_{Y|UXJ})=1$ and $\max_{Q_J}\mathbb{E}[d(S,h(X,U,Y))]\leq D$.
\end{corollary}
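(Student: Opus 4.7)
The plan is to combine the block Markov Wyner--Ziv scheme from the proof of Theorem~\ref{the: achievability of strictly causal} with the Csisz\'ar--Narayan maximal-error AVC coding result of~\cite{csiszar1981capacity}, modifying only the outer codebook that carries the message and the Wyner--Ziv bin index. I would use $B$ blocks of length $n$ and split the message into sub-messages $m_2,\ldots,m_B$. In block $b\ge 2$ the transmitter sends a pair $(m_b,l_b)$, where $l_b$ is the Wyner--Ziv bin index describing the state sequence $s^n_{b-1}$ observed strictly causally during the previous block. Pick $R_1 = I(U;S|X)-\min_{Q_J}I(U;Y|X)+2\delta$ and $R < \min\{\min_{Q_{J|X}}I(X;Y),D(Q_X)\}-R_1$. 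The first step is to apply the Csisz\'ar--Narayan construction to the averaged AVC $Q_{Y|XJ}=\sum_s Q_S W_{Y|XSJ}$, whose graph has isolated vertices by hypothesis, to obtain a deterministic outer codebook $\{x^n(m,l)\}$ of rate $R+R_1$ with codewords of a fixed type $Q_X$ that is maximal-error decodable from $y^n$ against any jamming sequence.

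For the inner layer, for each outer codeword $x^n(m,l)$ I would generate a subcodebook $\mathcal{C}_U(m,l)=\{u^n(m,l,k)\}_{k=1}^{2^{n(I(U;S|X)+\delta)}}$ by independent uniform draws from the conditional type class $T^n_{Q_{U|X}}(x^n(m,l))$ and partition it uniformly at random into $2^{nR_1}$ bins. The encoder uses joint-typicality covering to pick $u^n\in\mathcal{C}_U(m_{b-1},l_{b-1})$ jointly typical with $(x^n(m_{b-1},l_{b-1}),s^n_{b-1})$, lets $l_b$ be its bin index, and transmits $x^n(m_b,l_b)$. At the decoder I would first run the Csisz\'ar--Narayan maximal-error decoder on $y^n_b$ to recover $(\hat m_b,\hat l_b)$, and then, using $\hat l_b$ and the previously decoded $x^n(m_{b-1},l_{b-1})$, search bin $\hat l_b$ of $\mathcal{C}_U(m_{b-1},l_{b-1})$ for the unique $u^n$ whose joint type with $y^n_{b-1}$ matches the target distribution. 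Because this within-bin step averages over the random state $s^n_{b-1}$, the jammer's knowledge of $m_{b-1}$ does not carry over to the selected $U$-codeword: we are back in the average-error regime for the AVC $Q_{Y|UXJ}$, and the nonsymmetrizable-$\mathcal{U}|\mathcal{X}$ hypothesis together with Lemma~\ref{lem: nonsymmetrizable u|x inequality} allows the maximum-mutual-information decoding argument from Theorem~\ref{the: achievability of strictly causal} to go through at per-bin rate $\min_{Q_J}I(U;Y|X)-\delta$. The estimator $\hat s_i=h(x_i,u_i,y_i)$ then yields expected distortion at most $D+o(1)$ by the hypothesis $\max_{Q_J}\mathbb{E}[d(S,h(X,U,Y))]\le D$, and the rate inequality $R\le \min\{\min_{Q_{J|X}}I(X;Y),D(Q_X)\}-R_1$ gives the claimed lower bound once $\delta\to 0$.

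The main obstacle I anticipate is the interface between the deterministic outer codebook of~\cite{csiszar1981capacity} and the random inner covering and binning. The Csisz\'ar--Narayan construction first restricts to a single input type $Q_X$ and then prunes codewords using the graph of the channel, so the surviving outer codewords share the correct type, which is precisely what lets me instantiate conditional-type subcodebooks and apply the covering lemma. The subtle point is that, although the jammer may choose $j^n$ as a function of $m_b$, the hidden index $l_b$ is revealed to the jammer only through $s^n_{b-1}$ and is therefore effectively random from the jammer's perspective; the outer maximal-error guarantee is uniform over all pairs $(m,l)$ and all jamming sequences, so it withstands any conditional strategy $j^n(m)$, and the state-averaging reduces the $U$-decoding exactly to the setting handled in Theorem~\ref{the: achievability of strictly causal}. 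The remaining bookkeeping --- the union bound over $B$ blocks, the one-block initialization overhead that vanishes as $B\to\infty$, and absorbing $\delta$ via standard convexity arguments --- parallels Theorem~\ref{the: achievability of strictly causal}.
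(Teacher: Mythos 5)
Your proposal is correct and takes essentially the same route as the paper's (terse) proof of this corollary: block Markov coding as in Theorem~\ref{the: achievability of strictly causal}, with the outer $X^n$-layer carrying the pair $(m_b,l_b)$ switched to the Csisz\'ar--Narayan maximal-error construction over the averaged AVC $Q_{Y|XJ}$, and the inner $U^n$-layer decoded exactly as in the average-error proof once $x^n(m_{b-1},l_{b-1})$ and $\hat l_b$ are in hand. You also correctly identify and justify the two points the paper leaves implicit: (i) the outer maximal-error guarantee is uniform over all pairs $(m,l)$, so the jammer's knowledge of $m_b$ (but not of $l_b$, which is a function of the unknown state $s^n_{b-1}$) does not help him; and (ii) since the outer and inner codebooks are decoupled, the deterministic pruning of Csisz\'ar--Narayan does not interfere with the random covering/binning of the inner layer, which lets the average-error argument for $Q_{Y|UXJ}$ under nonsymmetrizability-$\mathcal{U}|\mathcal{X}$ go through unchanged.
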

We still use the block Markov coding scheme as of Theorem \ref{the: achievability of strictly causal} and decode $X^n$ under the maximal error criterion since it carries the message whose distribution can be arbitrary. Once the decoding of $X^n$ is correct, the decoder has the bin index of $U^n$, and the decoding of $U^n$ is almost the same as that of Theorem $\ref{the: achievability of strictly causal}$. The proof is given in Appendix \ref{sec: proof of corollary strictly causal maximal error}.

When the graph of the AVC $Q_{Y|XJ}$ is a complete graph, the value of \eqref{eq: strictly causal maximal error lower bound} is zero since in this case, any pair of $(x,\widetilde{x})$ are connected, which implies any joint distribution $Q_{X\widetilde{X}}$ satisfies $Pr\{X\overset{\mathcal{Q}}{\sim}\widetilde{X}\}=1$. One can choose the joint distribution such that $Q_{X\widetilde{X}}=Q_XQ_{\widetilde{X}}$ and $D(Q_X)=0$. The optimal input distribution in this case is to set $U=\emptyset$ and the value of \eqref{eq: strictly causal maximal error lower bound} is 0.

\section{binary case}\label{sec: binary case}

In this section, we consider channels with a binary output alphabet. Although the capacity of general AVCs under the maximal error criterion is still an open problem, AVCs with binary output have been completely solved \cite{ahlswede1970capacity}. The proof starts with the binary-input case and then extends it to the non-binary case. The graph of a binary-input binary-output AVC is either a complete graph or contains only isolated vertices. Therefore, its capacity can be completely characterized. In addition, for the binary-input case, the zero-capacity conditions for average and maximal criteria are the same by setting $Q_1(j) = T_{J|X}(j | x_2)$ and $Q_2(j) = T_{J|X_2}(j|x_1)$. However, such an equivalence does not hold for non-binary input cases.

\subsection{Noncausal Case}

\begin{corollary}\label{coro: noncausal binary output lower bound}
    When the output alphabet $|\mathcal{Y}|=2$, a deterministic coding lower bound of the maximal error capacity--distortion function $C_{m,n}(D)$ is
    \begin{align}
        C_{m,n}(D) \geq \max \left\{\min_{Q_{J|U}}I(U;Y) - I(U;S)\right\},
    \end{align}
    where the maximum is taken over all $Q_{U|S},Q_{X|US}$ and deterministic functions $h:\mathcal{U}\times\mathcal{Y}\to\hat{\mathcal{S}}$ such that $\max_{Q_{J|U}}\mathbb{E}[d(S,h(U,Y))]\leq D$.
\end{corollary}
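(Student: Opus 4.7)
The plan is to reduce the problem to a Gel'fand--Pinsker construction built on top of a maximal-error code for the auxiliary AVC $\mathcal{Q}=\{Q_{Y|UJ}\}$. The crucial lever specific to the binary-output setting is \cite[Theorem 12.6]{csiszar2011information}: there exists a worst DMC $Q^*\in\overline{\overline{\mathcal{Q}}}$ such that any constant-composition code with the SMD decoder achieving rate close to $\min_{Q_{J|U}}I(U;Y)=C(Q^*)$ on $Q^*$ automatically attains the same error exponent on every channel in $\overline{\overline{\mathcal{Q}}}^n$, and hence by \cite[Lemma 12.3]{csiszar2011information} on $\mathcal{Q}^n$ under the maximal error criterion. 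This lets us bypass the $D(Q_U)$ term that appears in the general Theorem \ref{the: noncausal maximal error}.

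First I fix $Q_{U|S}$, $Q_{X|US}$ satisfying the distortion constraint with $|\mathcal{U}|=2$, set $Q_U=\sum_s Q_{U|S}Q_S$, and build a random constant-composition codebook $\{u^n(k)\}_{k=1}^{2^{nR'}}$ of type $Q_U$ with an SMD decoder $\phi$ whose maximal error on $\mathcal{Q}^n$ is at most $2^{-n\nu}$, for $R'$ arbitrarily close to $\min_{Q_{J|U}}I(U;Y)$. I then apply a random binning operation, distributing the $2^{nR'}$ codewords uniformly at random into $2^{nR}$ sub-codebooks $\mathcal{C}(m)$ with $R=R'-I(U;S)-\delta$, so that each sub-codebook has size roughly $2^{n(I(U;S)+\delta)}$. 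Given $(m,s^n)$, the encoder picks the Gel'fand--Pinsker index $l$ so that $u^n(m,l)$ is jointly typical with $s^n$, and then draws $x^n$ from the product law $Q^n_{X|US}(\cdot|u^n,s^n)$. The decoder applies $\phi$ to $y^n$ to recover $(m,l)$ and computes $\hat s^n$ via $h$ symbol-by-symbol.

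The error analysis proceeds exactly as in the derivation preceding the corollary: for any jamming sequence $j^n$ and any fixed $u^n=u^n(m,l)$, the probability of error averaged over $S^n$ can be upper bounded by pulling the sums over $(x^n,s^n)$ inside the product, using the typical-sequence bound $\Pr\{s^n|u^n\}\leq \prod_i Q_{S|U}(s_i|u_i)\cdot 2^{n\hat\epsilon}$ on $\mathcal{T}^n_{Q_{S|U},\delta}[u^n]$, which collapses the expression to $Q^n_{Y|UJ}((\phi^{-1}(m,l))^c|u^n,j^n)\cdot 2^{n\hat\epsilon}\leq 2^{-n(\nu-\hat\epsilon)}$. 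Choosing $\delta$ (and hence $\hat\epsilon$) small makes this exponentially small in $n$, and the Chernoff-type covering lemma guarantees that the Gel'fand--Pinsker encoding step succeeds with double-exponentially small failure probability, uniformly over $s^n$, so a random-coding union bound over the $2^{nR}$ messages produces a deterministic code whose maximal error vanishes.

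The step I expect to be the main obstacle is handling the maximal error criterion after random binning: I need the ``bad'' fraction in every sub-codebook $\mathcal{C}(m)$ simultaneously to be exponentially small, where ``bad'' covers both SMD decoding failure on $\mathcal{Q}^n$ and GP covering failure against the worst $s^n$. This is precisely why random binning is used in place of deterministic binning as in \cite{csiszar1981capacity}, mirroring the argument used for Theorem \ref{the: noncausal maximal error}; the combination of the double-exponential concentration of the covering event with the single-exponential SMD error bound survives a union bound over both $m$ and the polynomially-many joint types encountered. Once the case $|\mathcal{U}|=2$ is established, extending to arbitrary finite $|\mathcal{U}|$ is routine by an invocation of \cite[Theorem 12.7]{csiszar2011information}.
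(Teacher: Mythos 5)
Your proposal matches the paper's proof essentially step for step: invoking \cite[Corollary 12.3, Theorem 12.6, Lemma 12.3]{csiszar2011information} to obtain a constant-composition SMD code whose exponentially small error on the worst DMC $Q^*\in\overline{\overline{\mathcal{Q}}}$ transfers to a maximal-error guarantee on $\mathcal{Q}^n$, then layering a Gel'fand--Pinsker binning on top of that code, using the sub-optimal decoder that recovers the pair $(m,l)$, bounding $\Pr\{s^n|u^n\}\le \prod_i Q_{S|U}(s_i|u_i)2^{n\hat\epsilon}$ to collapse the averaged error to $Q^n_{Y|UJ}\bigl((\phi^{-1}(m,l))^c\mid u^n,j^n\bigr)2^{n\hat\epsilon}\le 2^{-n(\nu-\hat\epsilon)}$, and finally extending beyond $|\mathcal{U}|=2$ via \cite[Theorem 12.7]{csiszar2011information}. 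The one place you add detail is in spelling out random binning plus a Chernoff covering bound to handle the worst $s^n$ uniformly over sub-codebooks, whereas the paper simply says ``partition the codebook''; this is a reasonable elaboration of the same construction rather than a departure from it.
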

It should be noted that for binary-output cases, there is no need to discuss the graph of the channel $\mathcal{W}$. For the binary-input case, if the graph is complete, by the definition of \eqref{def: maximal error symmetrizability}, one can set $P_{J|X}(j|x_1) = Q_1(j)$ and $P_{J|X}(j|x_2) = Q_2(j)$, which means the row-convex extension $\overline{\overline{W}}$ contains a channel with two same rows, and results in the zero capacity. The argument can be extended to non-binary input cases\cite{csiszar2011information}.

\subsection{Strictly Causal Case}
For the strictly causal case, when the state observation has a delay $d=n$, Corollary \ref{coro: strictly causal maximal error} becomes a tight bound.
\begin{corollary}
    For a binary output channel, if it is nonsymmetrizable-$\mathcal{S}|\mathcal{X}$, the deterministic coding maximal error capacity of the joint message and lossless state communication is
    \begin{align}
        C_{m,s}=\max\left\{\min_{Q_{J|X},Q_{J|X}'} [I_{Q_{J|X}}(X;Y) -  H_{Q_{J|X}'}(S|X,Y)]^+\right\}
    \end{align}
    where the maximum runs over all input distributions $P_X$ and function $h$ such that $\max_{Q_{J|X}}\mathbb{E}[d(S,h(X,U,Y))]\leq D$.
\end{corollary}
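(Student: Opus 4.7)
The plan is to prove achievability and converse separately, leveraging the binary-output structure developed in Section \ref{sec: binary case} together with the strictly causal max-error bound of Corollary \ref{coro: strictly causal maximal error}.

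For achievability I would set $U=S$ in Corollary \ref{coro: strictly causal maximal error}. Because $S$ is independent of $X$, the identities $I(S;Y|X) = H(S|X) - H(S|X,Y)$ and $I(S;S|X) = H(S|X)$ combine to give
\begin{align*}
\min_{Q_J} I(S;Y|X) - I(S;S|X) = -\max_{Q_J} H(S|X,Y),
\end{align*}
using that $H(S|X) = H(S)$ does not depend on the jammer. For the message-rate term $\min\{\min_{Q_{J|X}} I(X;Y), D(Q_X)\}$ from the general corollary, I would adapt the noncausal binary-output argument: the row-convex extension $\overline{\overline{Q_{Y|XJ}}}$ combined with the SMD decoder of \cite[Theorem 12.6]{csiszar2011information} already yields the tight max-error rate $\min_{Q_{J|X}} I(X;Y)$ for binary-output AVCs, so the $D(Q_X)$ constraint drops out. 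Summing the two contributions and taking the positive part gives the claimed lower bound. The nonsymmetrizable-$\mathcal{S}|\mathcal{X}$ assumption is what allows the choice $U=S$ to satisfy the indicator condition $\mathbb{I}_{\mathcal{U}|\mathcal{X}}(Q_{Y|UXJ})=1$.

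For the converse I would follow the compound-channel template from the proof of Corollary \ref{coro: strictly causal lossless average error}, but with two distinct jamming strategies reflecting the different knowledge the jammer has in each task. Under the maximal error criterion the jammer learns $m$, so the worst-case message-transmission bound is $nR \leq I_{Q_{J|X}}(X^n;Y^n) + n\epsilon$ with a conditional jammer distribution $Q_{J|X}$, invoking the binary-output max-error converse from \cite[Section 12]{csiszar2011information}. The jammer is oblivious to $S^n$, hence against the lossless reconstruction task its strategy is an unconditional $Q_J$, and Fano's inequality yields $H_{Q_J}(S^n\mid Y^n, X^n)\leq n\epsilon$. Chaining these, single-letterizing as in Corollary \ref{coro: strictly causal lossless average error}, maximizing over $P_X$, independently minimizing over $(Q_{J|X},Q_J)$, and invoking the standard minimax swap (justified by convexity of mutual information in the jamming distribution and concavity in $P_X$) produces the claimed single-letter bound.

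The main obstacle is justifying the elimination of the $D(Q_X)$ term in the block-Markov strictly causal setting, because $X^n$ now simultaneously carries the current message and the previous block's Wyner-Ziv bin index of the state description. The noncausal binary-output argument applies to a single-layer AVC, whereas here the SMD decoder on $\overline{\overline{Q_{Y|XJ}}}$ must be compatible with a subsequent decoding stage for $U^n$. I would resolve this by two-stage decoding: first recover $X^n$ with the binary-output SMD decoder on the row-convex extension; then, exploiting the near-uniformity of the bin index once averaged over $S^n$ (as in the proof of Theorem \ref{the: achievability of strictly causal}), recover $U^n$ by an average-error joint-typicality decoder on the induced AVC $Q_{Y|UXJ}$. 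Verifying that both decoders succeed against the same worst-case $j^n$ and that the effective $U$-decoding rate matches $-\max_{Q_J} H(S|X,Y)$ upon setting $U=S$ is the core technical step.
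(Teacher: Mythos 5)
Your proposal follows essentially the same route as the paper's proof: achievability by setting $U=S$ in Corollary~\ref{coro: strictly causal maximal error} and invoking the binary-output row-convex-extension argument of Section~\ref{sec: binary case} to replace the $\min\{\cdot, D(Q_X)\}$ term with $\min_{Q_{J|X}}I(X;Y)$; converse via \cite[Corollary~12.3]{csiszar2011information} followed by the compound-channel single-letterization of Corollary~\ref{coro: strictly causal lossless average error}. Your algebra $\min_{Q_J}I(S;Y|X)-I(S;S|X)=-\max_{Q_J}H(S|X,Y)$ is correct, and your identification of the minimax swap as the final step of the converse matches the paper.

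The one place where your narrative is slightly off is the worry that ``both decoders succeed against the same worst-case $j^n$.'' In the block-Markov scheme the $X$-decoder operates on $y^n_c$ while the $U$-decoder for the previous description operates on $y^n_{c-1}$; these see disjoint segments of the jamming sequence, so the two error analyses decouple block by block and there is no common adversarial $j^n$ to reconcile. The genuine technical point you are sensing is a different (and real) one: Corollary~\ref{coro: strictly causal maximal error} as stated uses a joint-typicality decoder for $X^n$ with the $D(Q_X)$ penalty, whereas the binary-output improvement requires swapping in the SMD decoder on $\overline{\overline{Q_{Y|XJ}}}$. One then has to verify that this replacement decoder still recovers the bin index $l_c$ reliably so that the subsequent $U$-decoding proceeds unchanged. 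The paper glosses over this; your two-stage description is the right way to make it explicit, and the near-uniformity of $L_c$ over $[1:2^{nR_S}]$ (inequality~\eqref{eq: probability of lc}) is exactly what makes the maximal-error guarantee for $X^n$ transfer to a guarantee on the decoded bin index. With that correction your proposal is sound and is the same proof as the paper, spelled out in more detail.
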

\begin{proof}
    The achievability follows from a similar argument about binary-output AVCs in Corollary \ref{coro: noncausal binary output lower bound} and setting $U=S$ in Corollary \ref{coro: strictly causal maximal error}. To prove the converse, we first use \cite[Corollary 12.3]{csiszar2011information}, which claims that the capacity of an AVC under the maximal criterion is upper bounded by that of the worst channel in its row-convex extension. Then, a compound channel argument similar to the converse of Corollary \ref{coro: strictly causal lossless average error} completes the proof.
\end{proof}

\begin{example}
    The following example shows that the deterministic coding capacity of the joint message and state communication under maximal error can be strictly smaller than that under the average error. Consider the channel such that $\mathcal{X}=\mathcal{Y}=\mathcal{S}=\mathcal{J}=\{0,1\}$ such that the transition matrices for each pair of $(s,j)$ is as follows:
    \begin{align}
        \begin{bmatrix}
        1 & 0 \\
        0.15 & 0.85
    \end{bmatrix}_{s=0,j=0}\;\;
    \begin{bmatrix}
        1 & 0 \\
        0.65 & 0.35
    \end{bmatrix}_{s=1,j=0}\;\;
    \begin{bmatrix}
        0.85 & 0.15 \\
        0 & 1
    \end{bmatrix}_{s=0,j=1}\;\;
    \begin{bmatrix}
        0.35 & 0.65 \\
        0 & 1
    \end{bmatrix}_{s=1,j=1}
    \end{align}
    where $Q_S(0)=0.9, Q_S(1)=0.1$. We have the following AVC after taking average over $S$:
    \begin{align}
        \begin{bmatrix}
        1 & 0 \\
        0.2 & 0.8
    \end{bmatrix}_{j=0}\;\;
    \begin{bmatrix}
        0.8 & 0.2 \\
        0 & 1
    \end{bmatrix}_{j=1}.
    \end{align}
    The resulting channel is nonsymmetrizable, and its graph is not a complete graph. Since the lossless reconstruction cost is the same for both cases, it is sufficient to show that the capacity for the average case can be positive, and the mutual information $I(X;Y)$ in the average case can be strictly greater than that in the maximal error case. Note that $I(X;Y)$ is concave in $Q_J$ and achieves the minimum when $Pr\{J=0\}=Pr\{J=1\}=\frac{1}{2}$. In this case, the channel becomes a binary symmetric channel with cross-over probability $0.1$. On the other hand, the lossless reconstruction cost is
    \begin{align}
        H(S|X,Y) = H(S) - I(S;Y|X)  \leq H(S) = h_b(0.1) \approx 0.47,
    \end{align}
    where $h_b(\cdot)$ is the binary entropy function.
    By choosing the input distribution $P_X(0)=P_X(1)=0.5$, we have
    \begin{align}
        I_{Q_{J}}(X;Y) - H_{Q_J}(S|X,Y) \geq 1 - h_b(0.1) - h_b(0.1) >0.
    \end{align}
    For the maximal error case, by choosing distribution $P_{J|X}(0|0)=0,P_{J|X}(0|1)=1$, the resulting channel is a binary symmetric channel with cross-over probability 0.2. With uniform input distribution,
    \begin{align}
        I_{Q_{J|X}}(X;Y) = 1 - h_b(0.2) < 1 - h_b(0.1).
    \end{align}
    The proof is completed.
\end{example}

\section{conclusion}\label{sec: conclusion}
Joint message and state communication under arbitrarily varying jamming is investigated in this paper. In addition to knowing the communication protocol, we consider jammers with or without being aware of the transmitted messages. The former case allows the jammer to choose jamming strategies more specifically and hence requires more robustness in the coding procedure. These two cases are modelled as communication with jamming under maximal and average error criteria. We also consider different encoder observations of the channel states under different error criteria. Inner bounds for general cases and tight bounds for special cases are provided, demonstrating the impact of the   jammer's knowledge on the performance of the systems.

The transmission of the state information significantly changes our coding scheme. For the strictly causal case, due to the lack of knowledge of the state information at the jammer, we can always use an average error code to transmit the state information no matter what error criterion is imposed on the message transmission. On the other hand, in the noncausal case with the maximal error criterion, we leave those `bad' codewords defined in \cite{csiszar1981capacity} to keep the randomness of the codebook for the state transmission. We also verified that for joint message and state communication, the capacity under maximal error could still be strictly smaller than that under the average error.

It should also be noted that throughout the paper, we assume the state sequence is unknown to the jammer, which allows us to construct coding schemes and analyse their performance by averaging over all possible state sequences. When the jammer is aware of the state sequence, for the strictly causal case, part of the indices of $X^n$ (indexed by the transmitted message and the bin index of the state description of the state sequence from the previous block) is also known to the jammer. Although a maximal error code always works in this case, it is still not clear to us if it is possible to further improve the coding rate with the average error criterion.
\ifCLASSOPTIONcaptionsoff
  \newpage
\fi

\appendices
\section{Properties of types}
In this section, we provide some properties about types and typical sequences that will be useful in the following sections. The proof can be found in \cite{csiszar1988capacity}\cite{csiszar2011information}.

\begin{applemma}[Type Counting Lemma {\cite[Lemma 2.2]{csiszar2011information}}]\label{lem: type counting lemma}
    The number of different types of sequences in $\mathcal{X}^n$ is less than $(n+1)^{\mathcal{X}}$.
\end{applemma}

\begin{applemma}[{\cite[Corollary 17.9A]{csiszar2011information}}]\label{lem: joint typical lemma}
    Consider $N=2^{nR}$ sequences $x^n(i)$ independently drawn from the distribution $P^n_X$. If $I(X;Y)<R$, to any $\tau>0$ there exists $\zeta>0$ such that
    \begin{align}
        |\frac{1}{n}\log |\{i:x^n(i)\in \mathcal{T}^n_{Q_{XY},\zeta}[y^n]\}|-(R-(X;Y))| <\tau \;\;\text{double exponentially surely}
    \end{align}
    simultaneously for all $y^n\in\mathcal{Y}^n$ with $\mathcal{T}^n_{Q_{XY},\zeta}[y^n]\neq \emptyset$.
\end{applemma}

\begin{applemma}[{\cite[Fact 1]{csiszar1988capacity}}]\label{lem: type fact 1}
    If $\mathcal{T}^n_{P_X}\neq \emptyset$,
    \begin{align}
       (n+1)^{-|\mathcal{X}|}2^{nH(X)} \leq |\mathcal{T}^n_{P_X}| \leq 2^{nH(X)}.
    \end{align}
\end{applemma}

\begin{applemma}[{\cite[Fact 3]{csiszar1988capacity}}]\label{lem: type fact 3}
    For any channel $W:\mathcal{X}\to\mathcal{Y},$
    \begin{align}
        \sum_{y^n\in\mathcal{T}_{P_{XY}}[x^n]}W^n(y^n|x^n) \leq 2^{-n(D(P_{XY}||P_X\cdot W_{Y|X})},
    \end{align}
    where $P_X$ is the type of $x^n,$ $P_{XY}$ is the joint type of $(x^n,y^n)$ and $P_X\cdot W_{Y|X}$ is a joint distribution on $\mathcal{X}\times\mathcal{Y}.$
\end{applemma}

\begin{applemma}[{\cite[Lemma 3]{csiszar1988capacity}}]\label{lem: type lemma 3}
    For any $\epsilon>0,n\geq n_0(\epsilon), N \geq 2^{n\epsilon}$ and type $P_X$, there exists codewords $x^n(1),...,x^n(N)$ in $\mathcal{X}^n$, each of type $P_X$, such that for every $x^n\in\mathcal{X}^n,j^n\in\mathcal{J}^n$ and every joint type $P_{XX'J}$, upon setting $R=\frac{1}{n}\log N$, we have
    \begin{itemize}
        \item $\left|\{l: (x^n,x^n(l),j^n)\in\mathcal{T}^n_{P_{XX'J}}\}\right|\leq \exp\left\{n(\left|R-I(X';X,S)\right|^++\epsilon)\right\}$;
        \item $\frac{1}{N}\left|i: (x^n(i),j^n)\in\mathcal{T}^n_{P_{XJ}} \right|\leq \exp\{-n\epsilon/2\}$ if $I(X;J)>\epsilon$;
        \item $\frac{1}{N}\left| i:(x^n(i),x^n(l),j^n)\in\mathcal{T}^n_{P_{XX'J}} \;\text{for some $l\neq i$} \right|\leq \exp\{-n\epsilon/2\}$, if $I(X;X',J)-\left| R - I(X';J) \right|^+ >\epsilon$.
    \end{itemize}
\end{applemma}

\begin{applemma}[Uniform Continuity of Entropy {\cite[Lemma 2.7]{csiszar2011information}}]\label{lem: uniform continuity of entropy}
    Let $P$ and $Q$ be two distributions on $\mathcal{X}$. If $\sum_x |P(x)-Q(x)|=\theta\leq1/2$, then
    \begin{align}
        |H(P) - H(Q)|\leq -\theta\log \frac{\theta}{|\mathcal{X}|}.
    \end{align}
\end{applemma}

\section{proof of lemma \ref{lem: nonsymmetrizable u|x inequality} and lemma \ref{lem: necessity of complete graph}}

In this section, we provide proofs of Lemma \ref{lem: nonsymmetrizable u|x inequality} and Lemma \ref{lem: necessity of complete graph}.

\subsection{Proof of Lemma \ref{lem: nonsymmetrizable u|x inequality}}\label{sec: proof of lem: nonsymmetrizable u|x inequality}

The proof is almost the same as \cite{csiszar1988capacity}\cite{gubner1990deterministic}\cite{hof2006deterministic}. Define
    \begin{align*}
        F(T) &= \max_{x,s,s',y}|\sum_{j}T(j|x,s)Q_{Y|XSJ}(y|x,s',j)\\
        &\quad\quad\quad\quad -\sum_{j}T(j|x,s')Q_{Y|XSJ}(y|x,s,j)|.
    \end{align*}
    for $T:\mathcal{X}\times\mathcal{S}\to\mathcal{J}.$ We always have
    \begin{align*}
        &\max_{x,s,s',y}|\sum_{j}T_1(j|x,s)Q_{Y|XSJ}(y|x,s',j)\\
        &\quad\quad\quad\quad -\sum_{j}T_2(j|x,s')Q_{Y|XSJ}(y|x,s,j)|\geq F(T),
    \end{align*}
    where $T=\frac{T_1+T_2}{2}$.
    Since the channel is nonsymmetriable-$\mathcal{S}|\mathcal{X}$ we have $F(T)>0$. Now, suppose $F(T)$ attains its minimum at $T^*$. The proof is completed by setting $\zeta = F(T^*).$

\subsection{Proof of Lemma \ref{lem: necessity of complete graph}}\label{sec: proof of lem: necessity of complete graph}
 Suppose there exists a pair of encoding/decoding strategies such that for an arbitrary pair of messages $(m,m')$, $\max_{j^n}e(m,j^n)\leq \epsilon\leq \frac{1}{2},\max_{j^n}e(m',j^n)\leq \epsilon\leq \frac{1}{2},$
    \begin{align}
        \epsilon &\geq\mathbb{E}[e(m,J^n)]\\
        &=\sum_{y^n\notin g^{-1}(m)}\mathbb{E}[W^n(y^n|x^n(m),J^n)]\\
        &=\sum_{y^n\notin g^{-1}(m)}\prod_{i=1}^n \mathbb{E}[W(y_i|x_i(m),J_i)]\\
        &\overset{(a)}{=}\sum_{y^n\notin g^{-1}(m)}\prod_{i=1}^n \sum_j W(y_i|x_i(m),j)Q_{1i}(j)\\
        &\overset{(b)}{=} \sum_{y^n\notin g^{-1}(m)}\prod_{i=1}^n \sum_j W(y_i|x_i(m'),j)Q_{2i}(j)\\
        &=\sum_{y^n\notin g^{-1}(m)}  \mathbb{E}[W^n(y^n|x^n(m'),J^n)]\\
        &\geq \sum_{y^n\in g^{-1}(m')}  \mathbb{E}[W^n(y^n|x^n(m'),J^n)] \geq 1 - \epsilon,
    \end{align}
    where $(a)$ and $(b)$ follow from the fact that the jammer can choose the jamming strategy freely, and the channel is complete-$\mathcal{X}$, so equation \eqref{def: maximal error symmetrizability} holds.
    It then follows that $\epsilon \geq \frac{1}{2}$, which is a contradiction.

\section{Proof of theorem \ref{the: achievability of strictly causal}}\label{sec: proof of achievability of strictly causal}
In this section, we prove Theorem \ref{the: achievability of strictly causal}. The proposed coding scheme is an extension of that in \cite{csiszar1988capacity}, in which we deal with arbitrarily varying channels with an additional random state $S$. To apply the analysis in \cite{csiszar1988capacity} to the codeword $X$, we first show that an average error codebook is sufficient to decode $X$ for the given channel model. Furthermore, the error analysis in \cite{csiszar1988capacity} does not apply to the decoding of $U$ in Theorem \ref{the: achievability of strictly causal} as the lossy description is determined by a channel state sequence $S^n$ and does not produce the channel output directly. To solve this problem, we consider the channel state $S^n$ as a second jamming sequence with type constraint due to the fact that it is generated i.i.d. according to the given distribution $Q_S$. With this consideration, we can build a connection between $U^n$ and $Y^n$ through $(X^n,S^n,J^n)$ by considering the induced channel $Q_{Y|XUJ}$ and bound the decoding error probability. A key step is to show that a property similar to the conditional typical lemma\cite[Section 2.5]{el2011network} still holds for $(U^n,X^n,S^n,Y^n)$ regardless of the type of the jamming sequence using the method of types.


Fix a joint distribution $Q_{YXSU|J}=Q_XQ_SQ_{U|XS}W_{Y|XSJ}.$ For given $\eta \geq 0$, define sets of joint distributions $P_{YXSJ}$ by
\begin{align}
&\Psi^1_{\eta}=\{P_{YXJ}: D(P_{YXJ} || Q_XP_JQ_{Y|XJ})\leq \eta\},\\
&\Psi^2_{\eta}=\{P_{XUS}: D(P_{XUS} || Q_XQ_SQ_{U|XS})\leq \eta\},\\
&\Psi^3_{\eta}=\{P_{YXUJ}: D(P_{YXJU} || Q_{YU|XJ}Q_{X}P_J)\leq \eta\},\\
&\Psi^4_{\eta}=\{P_{XSJU}: D(P_{XSJU} || Q_{XUS}P_{J})\leq \eta\}.
\end{align}

Set
\begin{align}
    &\widetilde{R}_S = I(U;S|X)+2\tau,\\
    &R+R_S = I(X;Y) - \tau\\
    &R_S':=\widetilde{R}_S-R_S \leq I(U;Y|X)-\tau,
\end{align}

\emph{Codebook Generation:} For each block $c$, the encoder picks $2^{n(R+R_S)}$ codewords $X^n$ uniformly at random from a fixed type set $\mathcal{T}^n_{Q_X}$. Each codeword is uniquely indexed by a pair of integers $x^n(m_c,l_c),$ where $m_c\in[1:2^{nR}],l_c\in[1:2^{nR_S}]$. For each pair $(m_c,l_c)$ and the marginal distribution $Q_{U|X}=\sum_s Q_SQ_{U|XS}$, select $2^{n\widetilde{R}_S}$ sequences $\{u^n|m_c,l_c\}$ from the fixed conditonal type set $\mathcal{T}^n_{Q_{U|X}}[x^n(m_c,l_c)]$ uniformly at random, which are lossy descriptions of the channel state sequence $S^n_c$. For each lossy description $u^n$, assign it an index uniformly at random from the integer set $[1:2^{nR_S}]$. Sequences with the same index form a bin. Denote the bin number for a given $u^n$ by $b(u^n)$. Now each lossy description $u^n$ can be uniquely indexed by $u^n(b(u^n),k_c|m_c,l_c)$, where $k_c$ is the index of $u^n$ within bin $b(u^n)$. 

With this random selection and random binning, we have a set of sub-codebooks $\{u^n|m_c,l_c,b\},b\in[1:2^{nR_S}]$, each with size $2^{nR_S'},R_S':=\widetilde{R}_S-R_S.$

\emph{Encoding.} To encode the message and state, we use the union of the sub-codebooks as the codebook. At the beginning of each block $c$, the encoder observes the channel state from the previous block $c-1$, denoted by $S^n_{c-1}.$ It looks for a lossy description $u^n\in\{u^n|m_{c-1},l_{c-1}\}$ such that
\begin{align}
    P_{u^n,x^n(m_{c-1},l_{c-1}),s^n_{c-1}}\in \Psi^2_{\eta}.
\end{align}
If there are multiple such $u^n$, choose one uniformly at random. If there is no such $u^n$, the encoder declares an error. Set $l_c = b(u^n)$. To transmit the message $m_c$, the encoder chooses codeword $x^n(m_c,l_c).$

\emph{Decoding and reconstruction.} At the end of block $c$, the decoder observes the channel output $y^n_c$. We define the decoding rule for the decoder $\phi$ as follows.

Upon observing the channel output $y^n_c$, the decoder outputs $\phi(y^n_c)=(\hat{m}_c,\hat{l}_c,\hat{k}_{c-1})$ if and only if there exists a $j^n\in\mathcal{J}^n$ such that
\begin{itemize}
    \item The joint type $P_{x^n(\hat{m}_c,\hat{l}_c),y^n_c,j^n}\in \Psi^1_{\eta}$.
    \item For each pair of competitor $(\hat{m}_c',\hat{l}_c')\neq (\hat{m}_c,\hat{l}_c)$ such that $P_{x^n(\hat{m}_c',\hat{l}_c'),y^n_c,{j^n}'}\in\Psi^1_{\eta}$ for some ${j^n}'\in\mathcal{J}^n$, we have $I(X,Y;X'|J)\leq \eta$, where $X,X',J,Y$ are dummy random variables such that the joint distribution of $(x^n(\hat{m}_c,\hat{l}_c),x^n(\hat{m}_c',\hat{l}_c'),y^n_c,j^n)$ equals $P_{XX'YJ}$.
    \item There exists an $s^n_{c-1}\in\mathcal{S}^{n}$ such that the joint type
    \begin{align}
        P_{x^n(m_{c-1},l_{c-1}),u^n(\hat{l}_c,\hat{k}_{c-1}|m_{c-1},l_{c-1}),y^n_{c-1},j^n_{c-1}}\in\Psi^3_{\eta},\\
        P_{x^n(m_{c-1},l_{c-1}),u^n(\hat{l}_c,\hat{k}_{c-1}|m_{c-1},l_{c-1}),s^n_{c-1},j^n_{c-1}}\in\Psi^4_{\eta}.
    \end{align}
    \item For each competitor $\hat{k}_{c-1}'\neq \hat{k}_{c-1}$ such that 
    \begin{align}
        &P_{x^n(m_{c-1},l_{c-1}),u^n(\hat{l}_c,\hat{k}_{c-1}'|m_{c-1},l_{c-1}),y^n_{c-1},{j^n_{c-1}}'}\in \Psi_{\eta}^3,\\
        &P_{x^n(m_{c-1},l_{c-1}),u^n(\hat{l}_c,\hat{k}_{c-1}'|m_{c-1},l_{c-1}),{s^n_{c-1}}',{j^n_{c-1}}'}\in \Psi_{\eta}^4
    \end{align}
    for some ${s^n_{c-1}}'\in\mathcal{S}^n,{j^n_{c-1}}'\in\mathcal{J}^n$, we have $I(Y,S,U;U'|J,X)\leq \eta$, where $X,U,U',Y,J,S$ are dummy random variables whose joint distribution $P_{XUU'YSJ}$ is the joint type of $x^n(m_{c-1},l_{c-1}),u^n(\hat{l}_c,\hat{k}_{c-1}|m_{c-1},l_{c-1}),u^n(\hat{l}_c,\hat{k}_{c-1}'|m_{c-1},l_{c-1}),y^n_{c-1},$ $s^n_{c-1},{j^n_{c-1}}$.

    The decoder reconstructs the sequence $\hat{s}^n$ by $\hat{s}_i=h(x_i,u_i,y_i),i=1,...,n.$
    
\end{itemize}
 The first two steps of the decoding rule are to decode the codeword $X^n_c$ for the current block $c$. Since the encoder and decoder are both not aware of the state sequence $S^n_c$ in the current block, they consider the average channel $Q_{Y|XJ}=\sum_s Q_SW_{Y|XSJ}$. The third and fourth steps are to find the lossy description $U^n$ that describes the state sequence $S^n_{c-1}$ from the previous block. To find the description that matches the state sequence, the decoder considers the state sequence $S^n_{c-1}$ as a second jamming sequence. However, since the distribution of $S$, the encoding strategy and the way that $S^n$ is generated are all known to all the participants of the system, this second jamming sequence has additional type constraints defined in $\Psi^4_{\eta}$. With this additional constraint, the joint empirical distribution of $(U^n,Y^n_{c-1},S^n_{c-1})$ is close to their underlying distribution, which ensures the distortion constraint.

In the next subsection, we first show that the decoder we defined above is unambiguous, which implies that there do not exist two codeword pairs $(X^n,U^n)$ and $({X^n}',{U^n}')$ that satisfy the decoding rule simultaneously. This indicates that the decoder always outputs a unique codeword pair based on the decoding rules and it remains to bound the probability that the decoder outputs a wrong codeword pair.

\subsection{Unambiguity Analysis}
In this subsection, we show that the decoding rule defined in the last subsection is unambiguous. 

The unambiguity of the decoding rule is ensured by the following lemma.
\begin{lemma}
    For the given channel $W_{Y|XSJ}$, if the corresponding channel $Q_{Y|XUJ}$ is nonsymmetrizable-$\mathcal{U}| \mathcal{X}$ and $Q_{Y|XJ}$ is nonsymmetrizable-$\mathcal{X}$ and $\beta_1>0,\beta_2>0$, then for a sufficiently small $\eta$, no tuple of random variables $(X_1,X_2,X_2',U_1,U_1',S_1,S_1',$ $J_1,J_1',J_2,J_2',Y_1,Y_2)$ can simultaneously satisfy
    \begin{align}
        \label{eq: average X 1}&P_{X_1}=P_{X_2}=P_{X_2'}=Q_X, \min_{x}Q_X(x) \geq \beta_1,\\
        &P_{U_1|X_1}=P_{U_1'|X_1}=Q_{U|X}, \min_{u,x}Q_{U|X}(u|x) \geq \beta_2,\\
        \label{eq: average X 2}&P_{X_2J_2Y_2}\in \Psi^1_{\eta},\;\; P_{X_2'J_2'Y_2}\in \Psi^1_{\eta},\\
        \label{eq: average X 3}&I(X,Y;X'|J)\leq \eta, I(X',Y;X|J') \leq \eta,\\
        &P_{X_1U_1J_1Y_1}\in \Psi^3_{\eta},\;\; P_{X_1U_1'J_1'Y_1}\in \Psi^3_{\eta},\\
        &P_{X_1U_1S_1J_1}\in \Psi^4_{\eta},\;\; P_{X_1U_1'S_1'J_1'}\in \Psi^4_{\eta},\\
        &I(U_1,Y_1,S_1;U_1'|J_1,X_1)\leq \eta,I(Y_1,U_1',S_1';U_1|J_1',X_1)\leq \eta.
    \end{align} 
    The subscripts $1$ and $2$ are to indicate the order of the blocks from which those random variables come.
\end{lemma}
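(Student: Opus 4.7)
The plan is to prove the lemma by contradiction, proceeding along the lines of Csisz\'ar-Narayan \cite{csiszar1988capacity}: assume that for a sequence $\eta_k\to 0$ there exist tuples satisfying every listed condition, extract a convergent subsequence by compactness of the probability simplex, and observe that in the limit all the divergence constraints become exact identities between joint distributions. The positive-mass conditions $\min_x Q_X(x)\geq\beta_1$ and $\min_{u,x}Q_{U|X}(u|x)\geq\beta_2$ guarantee that every conditioning used below is well-defined both for finite $\eta$ (up to vanishing error) and in the limit. To obtain the quantitative $\eta>0$ statement, I will invoke the uniform gap provided by Lemma~\ref{lem: nonsymmetrizable u|x inequality} (and its point-to-point analogue) together with Pinsker's inequality to translate the small-divergence constraints into an $L_1$ bound smaller than that gap, contradicting the nonsymmetrizability hypothesis.

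For the first half (variables $X_2,X_2',J_2,J_2',Y_2$), the constraint $P_{X_2J_2Y_2}\in\Psi^1_{\eta}$ combined with $P_{X_2}=Q_X$ forces $P_{Y_2|X_2J_2}\to Q_{Y|XJ}$, while $I(X_2,Y_2;X_2'|J_2)\to 0$ yields $X_2'\perp (X_2,Y_2)\mid J_2$ in the limit. Setting $T_1(j|x'):=P_{J_2|X_2'}(j|x')$, marginalizing $J_2$ in $P_{Y_2|X_2X_2'}$ then gives
\begin{align*}
P_{Y_2|X_2X_2'}(y|x,x')=\sum_{j}T_1(j|x')\,Q_{Y|XJ}(y|x,j).
\end{align*}
Running the same reasoning on the primed triple $(X_2',J_2',Y_2)$ produces $T_2(j|x):=P_{J_2'|X_2}(j|x)$ with
\begin{align*}
P_{Y_2|X_2X_2'}(y|x,x')=\sum_{j}T_2(j|x)\,Q_{Y|XJ}(y|x',j).
\end{align*}
Equating the two expressions exhibits $Q_{Y|XJ}$ as symmetrizable-$\mathcal{X}$ via the pair $(T_1,T_2)$; the point-to-point version of Lemma~\ref{lem: nonsymmetrizable u|x inequality} (obtained by dropping $S$ in its statement) then produces a uniform lower bound $\zeta_1>0$ on the $L_1$ discrepancy, contradicting what small-$\eta$ Pinsker control would give.

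The second half (variables $X_1,U_1,U_1',S_1,S_1',J_1,J_1',Y_1$) is handled by the same mechanism with $X_1$ held fixed throughout. The hypotheses $P_{X_1U_1J_1Y_1}\in\Psi^3_{\eta}$ and $P_{X_1U_1S_1J_1}\in\Psi^4_{\eta}$, together with the matching marginals, drive $P_{Y_1|X_1U_1J_1}\to Q_{Y|UXJ}$ and $P_{S_1|X_1U_1}\to Q_{S|UX}$ in the limit, and $I(U_1,Y_1,S_1;U_1'|J_1,X_1)\to 0$ yields conditional independence of $U_1'$ from $(U_1,Y_1,S_1)$ given $(J_1,X_1)$. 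Setting $T_1(j|x,u'):=P_{J_1|X_1U_1'}(j|x,u')$ and its swapped counterpart $T_2(j|x,u):=P_{J_1'|X_1U_1}(j|x,u)$, the same marginalization argument produces
\begin{align*}
\sum_{j}T_1(j|x,u')\,Q_{Y|UXJ}(y|u,x,j)=\sum_{j}T_2(j|x,u)\,Q_{Y|UXJ}(y|u',x,j)\quad\forall\,y,x,u,u',
\end{align*}
which is precisely symmetrizability-$\mathcal{U}|\mathcal{X}$ of $Q_{Y|UXJ}$ witnessed by $(T_1,T_2)$. Lemma~\ref{lem: nonsymmetrizable u|x inequality} is directly applicable here (with $U$ taking the role played by $S$ there) and supplies a uniform $\zeta_2>0$; taking $\eta<\min(\eta_1(\zeta_1),\eta_2(\zeta_2))$ with the thresholds coming from Pinsker, the contradiction is complete.

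The main obstacle is the presence of two distinct auxiliary distributions $T_1\ne T_2$ in the symmetrizing identity: a naive reading of the symmetrizability definition requires a single $T$. This is exactly the point of Lemma~\ref{lem: nonsymmetrizable u|x inequality}, whose convex-combination argument (take $T=(T_1+T_2)/2$ and use $F(T)\ge\zeta>0$ uniformly by compactness) reduces the two-distribution symmetrizing identity to the standard one; verifying that this same convex-combination reduction goes through in the $X$-only setting, and that the Pinsker-to-$L_1$ conversion is uniform over the bounded-support conditionals guaranteed by $\beta_1,\beta_2>0$, is the only technical care required.
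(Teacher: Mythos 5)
Your proof is correct, and it reaches the same conclusion as the paper by the same essential mechanism --- derive a near-symmetrizing identity from the joint-type and small-mutual-information hypotheses, then contradict it using the uniform gap $\zeta>0$ from Lemma~\ref{lem: nonsymmetrizable u|x inequality} (whose two-distribution form, via the convex combination $T=(T_1+T_2)/2$, is exactly what is needed). The presentation differs somewhat from the paper's: you organize the argument around a compactness/limit extraction (exact identities in the limit, then conditional-independence reasoning with $U_1'\perp(U_1,Y_1,S_1)\mid(J_1,X_1)$ to produce $T_1,T_2$ directly as conditional laws of the jammer variable), whereas the paper works quantitatively at finite $\eta$ throughout, chaining Pinsker's inequality through an explicit algebraic rewriting of $Q_XP_JQ_{YU|XJ}$ in terms of $P_{JX},P_{XS},P_{S|UX}$ and invoking the marginal bounds \eqref{neq: distance qxpj},\eqref{neq: distance qxqs}. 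Your conditional-independence derivation is cleaner and, notably, never needs to re-expand $Q_{Y|UXJ}$ through $S$, so the $\Psi^4_\eta$ hypothesis is not actually used to produce the symmetrizing identity (it is still needed elsewhere in the overall coding argument, but not for this lemma's contradiction as you have set it up) --- it would be worth pointing that out rather than listing $P_{S_1|X_1U_1}\to Q_{S|UX}$ as a driver of the limit, since you never invoke it. One further point worth stating explicitly: the assumption $\min_x Q_X(x)\ge\beta_1$ and $\min_{u,x}Q_{U|X}(u|x)\ge\beta_2$ is what guarantees $P_{X_1U_1U_1'}$ has full support in the limit, so the pointwise identity you derive holds for \emph{all} $(x,u,u')$ and not merely on a subsupport, which is required before Lemma~\ref{lem: nonsymmetrizable u|x inequality} can be applied. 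Your explicit treatment of the $X_2,X_2'$ half, which the paper handles by citation to Csisz\'ar--Narayan, is a welcome spelling-out.
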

 As we defined in \emph{Decoding}, random variables with a prime are those defined by the type of some wrong codewords. Conditions \eqref{eq: average X 1}, \eqref{eq: average X 2} and \eqref{eq: average X 3} from \cite{csiszar1988capacity} ensure the unambiguity of the codeword $X^n$. The remaining conditions ensure the unambiguity of $U^n$. They claim that one cannot find two pairs of $(U^n,X^n)$ and $({U^n}',{X^n}')$ that match the output sequence $Y^n$, together with the additional type constraint in $\Psi^4_{\eta}.$

The proofs of the conditions \eqref{eq: average X 1},\eqref{eq: average X 2} and \eqref{eq: average X 3} are provided in \cite{csiszar1988capacity}. Hence, we only analyze the unambiguous requirements of the remaining conditions. By the definition of $\Psi_{\eta}^3,$
\begin{align}
    \label{uman: kl}&D(P_{YXUJ}||Q_XP_JQ_{YU|XJ})\\
    &=\sum_{x,u,y,j}P_{YXUJ}(y,x,u,j)\log \frac{P_{YXUJ}(y,x,u,j)}{Q_X(x)P_J(j)Q_{YU|XJ}(y,u|x,j)}\\
    &=\sum_{x,u,y,j}P_{YXUJ}(y,x,u,j)\log \frac{P_{YXUJ}(y,x,u,j)}{Q_X(x)P_J(j)\sum_s Q_S(s) Q_{U|XS}(u|x,s)W_{Y|XSJ}(y|x,s,j)}\leq \eta,
\end{align}
where the last equality is by the definition $Q_{YU|XJ}(y,u|x,j)=\sum_s Q_S(s) Q_{U|XS}(u|x,s)W_{Y|XSJ}(y|x,s,j)$. By the definition of mutual information,
\begin{align}
    \label{uman: mu}I(U,Y;U'|J,X) = \sum_{x,u,u',j,y}P_{XUU'JY}(x,u,u',j,y)\log \frac{P_{U'|JXUY}(u'|j,x,u,y)}{P_{U'|JX}(u'|j,x)}\leq I(U,Y,S;U'|J,X) \leq \eta.
\end{align}
Note that it is without loss of generality to assume $P_{XJ}(x,j)>0$, otherwise the corresponding terms in \eqref{uman: kl} and \eqref{uman: mu} are zero and has no impact on the analysis.
Adding \eqref{uman: mu} to \eqref{uman: kl} gives
\begin{align}
    2\eta \geq \sum_{x,u,u',j,y}P_{XUU'JY}(x,u,u',j,y)\log \frac{P_{JXUU'Y}(j,x,u,u',y)}{Q_X(x)P_J(j)P_{U'|JX}(u'|j,x)\sum_s Q_S(s) Q_{U|XS}(u|x,s)W_{Y|XSJ}(y|x,s,j)}.
\end{align}
Projecting the above inequality onto $\mathcal{X}\times\mathcal{U}\times\mathcal{U}'\times\mathcal{Y}$ yields
\begin{align}
    &\sum_{x,u,u',y}P_{XUU'Y}(x,u,u',y)\log \frac{P_{YXUU'}(y,x,u,u')}{Q_X(x)\sum_jP_J(j)P_{U'|JX}(u'|j,x)\sum_s Q_S(s) Q_{U|XS}(u|x,s)W_{Y|XSJ}(y|x,s,j)}\\
    &=D(P_{XUU'Y}||Q_X\sum_j P_JP_{U'|JX}Q_{YU|XJ})\leq 2\eta.
\end{align}
By Pinsker's inequality, it follows that
\begin{align}
    &\sum_{x,u,u',y}|P_{XUU'Y}(x,u,u',y) - Q_X(x)\sum_{j}P_J(j)P_{U'|JX}(u'|j,x)\sum_{s}Q_S(s)Q_{U|XS}(u|x,s)W_{Y|XSJ}(y|x,s,j)| \notag\\
    &\overset{(a)}{=}\sum_{x,u,u',y}|P_{XUU'Y}(x,u,u',y) - Q_X(x)\sum_{j}P_J(j)\frac{P_{U'JX}(u',j,x)}{P_{JX}(j,x)}\sum_{s}Q_S(s)Q_{U|XS}(u|x,s)W_{Y|XSJ}(y|x,s,j)|\notag\\
    &=\sum_{x,u,u',y}|P_{XUU'Y}(x,u,u',y) - Q_X(x)\sum_{j}P_J(j)\frac{P_{J|U'X}(j|u',x)P_{U'X}(u',x)}{P_{JX}(j,x)}\sum_{s}Q_S(s)Q_{U|XS}(u|x,s)W_{Y|XSJ}(y|x,s,j)|\notag\\
    &=\sum_{x,u,u',y}|P_{XUU'Y}(x,u,u',y) - Q_X(x)P_{U'|X}(u'|x)\sum_{j}\frac{P_J(j)Q_{X}(x)}{P_{JX}(j,x)}P_{J|U'X}(j|u',x)\sum_{s}Q_S(s)Q_{U|XS}(u|x,s)W_{Y|XSJ}(y|x,s,j)|\notag\\
    &\overset{(b)}{=}\sum_{x,u,u',y}|P_{XUU'Y}(x,u,u',y) \notag\\
    &\quad -Q_X(x)P_{U'|X}(u'|x)\sum_{j}\frac{P_J(j)Q_{X}(x)}{P_{JX}(j,x)}P_{J|U'X}(j|u',x)\sum_{s}Q_S(s)\frac{Q_X(x)P_{U|X}(u|x)Q_{S|UX}(s|u,x)}{Q_{XS}(x,s)}W_{Y|XSJ}(y|x,s,j)|\notag\\
    &\overset{(c)}{=}\sum_{x,u,u',y}|P_{XUU'Y}(x,u,u',y) \notag\\
    &\quad -Q_X(x)P_{U'|X}(u'|x)P_{U|X}(u|x)\sum_{j}\frac{P_J(j)Q_{X}(x)}{P_{JX}(j,x)}P_{J|U'X}(j|u',x)\sum_{s}\frac{Q_X(x)Q_S(s)}{Q_{X}(x)Q_S(s)}Q_{S|UX}(s|u,x)W_{Y|XSJ}(y|x,s,j)|\notag\\
    \label{ine: average error analysis 1}&\leq a\sqrt{2\eta}
\end{align}
for some constant $a>0,$ where $(a)$ and $(b)$ are by applying the Bayes rule to $P_{U'|JX}(u'|j,x)$ and $Q_{U|XS}(u|x,s)$, $(c)$ follows by the definition of the distribution $Q_{XS}(x,s)=Q_X(x)Q_S(s)$.

Similarly, for $D(P_{YXU'J'}||Q_XP_{J'}Q_{YU|XJ'})$ and $I(U',S',Y;U|J',X)$, we replace $(U,U',J)$ in the above inequalities with $(U',U,J')$, which gives
\begin{align}
    &\sum_{x,u,u',y}|P_{XUU'Y}(x,u,u',y) - Q_X(x)\sum_{j}P_{J'}(j)P_{U|J'X}(u|j,x)\sum_{s}Q_S(s)Q_{U|XS}(u'|x,s)W_{Y|XSJ'}(y|x,s,j)|\notag \\
    &=\sum_{x,u,u',y}|P_{XUU'Y}(x,u,u',y)\notag \\
    &\quad -Q_X(x)P_{U'|X}(u'|x)P_{U|X}(u|x)\sum_{j}\frac{P_{J'}(j)Q_{X}(x)}{P_{J'X}(j,x)}P_{J'|UX}(j|u,x)\sum_{s}Q_{S|UX}(s|u',x)W_{Y|XSJ}(y|x,s,j)|\notag\\
    \label{ine: average error analysis 2}&\leq a\sqrt{2\eta}.
\end{align}
Note that $Q_{S|UX}=Q_{S|U'X}$ and $W_{Y|XSJ}=W_{Y|XSJ'}$ since they are given distributions. Applying the triangle inequality to \eqref{ine: average error analysis 1} and \eqref{ine: average error analysis 2} gives
\begin{align*}
    &\sum_{x,u,u',y}Q_X(x)P_{U'|X}(u'|x)P_{U|X}(u|x)\\
    &\cdot\Bigg|\sum_{j}\frac{P_J(j)Q_{X}(x)}{P_{JX}(j,x)}P_{J|U'X}(j|u',x)\sum_{s}Q_{S|UX}(s|u,x)W_{Y|XSJ}(y|x,s,j)\\
    &-\sum_{j}\frac{P_{J'}(j)Q_{X}(x)}{P_{J'X}(j,x)}P_{J'|UX}(j|u,x)\sum_{s}Q_{S|UX}(s|u',x)W_{Y|XSJ}(y|x,s,j)\Bigg|\\
    &\leq 2a\sqrt{2\eta}.
\end{align*}
By our assumptions that $\min_{x}Q_{X}(x) \geq \beta_1,\min_{u,x}Q_{U|X}(u|x) \geq \beta_2$,
\begin{align}
    &\sum_{x,u,u',y}\Bigg|\sum_{j}\frac{P_J(j)Q_{X}(x)}{P_{JX}(j,x)}P_{J|U'X}(j|u',x)\sum_{s}Q_{S|UX}(s|u,x)W_{Y|XSJ}(y|x,s,j)\notag\\
    &-\sum_{j}\frac{P_{J'}(j)Q_{X}(x)}{P_{J'X}(j,x)}P_{J'|UX}(j|u,x)\sum_{s}Q_{S|UX}(s|u',x)W_{Y|XSJ}(y|x,s,j)\Bigg|\notag\\
    \label{neq: bound 1}&\leq \frac{2a\sqrt{2\eta}}{\beta_1\beta_2^2}.
\end{align}
Note that projecting the KL divergence $D(P_{YXUJ}||Q_XP_{J}Q_{YU|XJ})$ and $D(P_{YXU'J'}||Q_XP_{J'}Q_{YU|XJ'})$ to $\mathcal{X}\times\mathcal{J}$ gives
\begin{equation}\label{neq: distance qxpj}
    \begin{aligned}
        &\sum_{x,j}|Q_X(x)P_J(j)-P_{XJ}(x,j)|\leq \widetilde{a}\sqrt{2\eta},\\
    &\sum_{x,j}|Q_X(x)P_{J'}(j)-P_{XJ'}(x,j)|\leq \widetilde{a}\sqrt{2\eta}.
    \end{aligned}
\end{equation}
Substituting \eqref{neq: distance qxpj} back to \eqref{neq: bound 1} gives
\begin{align}
    &\sum_{x,u,u',y}\Bigg|\sum_{j}P_{J|U'X}(j|u',x)\sum_{s}Q_{S|UX}(s|u,x)W_{Y|XSJ}(y|x,s,j)\\
    &-\sum_{j}P_{J'|UX}(j|u,x)\sum_{s}Q_{S|UX}(s|u',x)W_{Y|XSJ}(y|x,s,j)\Bigg|\\
    \label{eq:  channel qyuj upper bound}&=\sum_{x,u,u',y}\Bigg|  \sum_{j}P_{J|U'X}(j|u',x)Q_{Y|XUJ}(y|x,u,j)-\sum_j P_{J'|UX}(j|u,x)Q_{Y|XUJ}(y|x,u',j) \Bigg| \leq f(\eta)
\end{align}
where $f(\eta)\to 0$ as $\eta \to 0$, and 
\begin{align}
    Q_{Y|XUJ}(y|x,u,j) = \sum_{s}Q_{S|UX}(s|u,x)W_{Y|XSJ}(y|x,s,j).
\end{align}

By Lemma \ref{lem: nonsymmetrizable u|x inequality}, if the channel $Q_{Y|XUJ}$ is nonsymmetrizable-$\mathcal{U}|\mathcal{X}$, there exists a positive real number $\zeta$ such that for each pair of distributions $(P_{J|U'X},P_{J'|UX})$,
\begin{align}
    \max_{x,u,u',y}|\sum_{j}P_{J|U'X}(j|u',x)Q_{Y|XUJ}(y|x,u,j)-\sum_{j}P_{J'|UX}(j|u,x)Q_{Y|XUJ}(y|x,u',j)|\geq \zeta.
\end{align}
However, if we choose a sufficiently small $\eta$ in \eqref{eq:  channel qyuj upper bound} such that $f(\eta)<\zeta$, there is a contradiction. This completes the proof.

\subsection{Error Analysis}
After showing the unambiguity of the decoder, it remains to prove that the probability that a wrong codeword pair $({X^n}',{U^n}')$ is the only codeword pair satisfying the decoding rule is small. To this end, for each block, define
\begin{align}
    \label{ine: overall error probability of stricly causal average error}&e_s(j^n) = \frac{1}{2^{nR}}\sum_{m=1}^{2^{nR}}\sum_{s^n} Q_S(s^n)\sum_{l}P_L(l)\sum_{u^n\in\{u^n|m,l\}}Pr\{U^n=u^n(b,k)|x^n(m,l),s^n\}\\
    &\quad\quad\quad \cdot \sum_{\substack{y^n:\phi(y^n)= (\hat{m},\hat{l},\hat{k})\\ s.t. (\hat{m},\hat{l},\hat{k})\neq (m,l,k)}} W^n_{Y|XSJ}(y^n|x^n(m,l),s^n,j^n),
\end{align}
where $P_L$ is the distribution on the second index of $X^n$ that is determined by the lossy description of the channel state sequence in the previous block. Clearly, we have
\begin{align}
    e_s(j^n) = Pr\{(\hat{X}^n,\hat{U}^n)\neq (X^n,U^n)|j^n\},
\end{align}
which is upper bounded by
\begin{align}
    Pr\{\hat{X}^n\neq X^n|j^n\} + Pr\{\hat{U}^n\neq U^n|\hat{X}^n=X^n,j^n\}.
\end{align}

At the beginning of this subsection, we first prove that the decoding error of $X^n$ is small. It is sufficient to show that one can construct a code for AVCs under the average error criterion to reliably decode $X^n$, and the remaining error analysis can be found in \cite{csiszar1988capacity}.
To obtain the error bound on the decoding of $X^n$, it is sufficient to consider
\begin{align}
    \frac{1}{2^{nR}}\sum_{m=1}^{2^{nR}}\sum_{l=1}^{2^{nR_S}}P_L(l)\sum_{s^n}Q_S^n(s^n)\sum_{y^n:\phi(y^n)\neq (m,l)} W^n_{Y|XSJ}(y^n|x^n(m,l),s^n,j^n),
\end{align}
which is also an upper bound of $e_{a,s}(j^n)$ defined in Definition \ref{def: achievability of average error}.
It remains to argue the distribution of $L$. We start with proving the following lemma.
\begin{applemma}\label{lem: usx typicality lemma}
    Let $Q_{XSU}=Q_XQ_SQ_{U|XS}$ be a fixed distribution and $Q_{U|X}$ be its marginal distribution. For any $x^n\in \mathcal{T}^n_{Q_X},s^n$ generated according to $Q_S$ and $U^n\in \mathcal{T}^n_{Q_{U|X}}[x^n]$, we have
    \begin{align}
        Pr\left\{ (x^n,s^n,U^n)\in \mathcal{T}^n_{Q_{XSU},\delta} \right\} \overset{\cdot}{=} 2^{-nI(U;S|X)}.
    \end{align}
\end{applemma}
\begin{proof}
We start the analysis by showing that for any $x^n\in\mathcal{T}^n_{Q_X}$ and $\delta>0$ such that $\delta\to 0$ as $n\to \infty$, we have
\begin{align}
    \label{def: x and s typicality}Pr\left\{\Big| N(x,s|x^n,S^n) - N(x|x^n)Q_{S|X}(s|x)   \Big| \geq n\delta \right\}\to 0.
\end{align}
To see this, it follows that
\begin{align}
    &Pr\left\{\Big| N(x,s|x^n,S^n) - N(x|x^n)Q_{S|X}(s|x)   \Big| \geq n\delta \right\} \notag\\
    &=Pr\left\{\Big| \frac{N(x,s|x^n,S^n)}{N(x|x^n)} - Q_{S}(s)   \Big| \geq \frac{n\delta}{N(x|x^n)} \right\} \notag \\
    &=Pr\left\{\Big| \frac{N(x,s|x^n,S^n)}{N(x|x^n)} - Q_{S}(s)   \Big| \geq \frac{\delta}{Q_X(x)} \right\} \notag\\
    &\leq Pr\left\{\Big| \frac{N(x,s|x^n,S^n)}{N(x|x^n)} - Q_{S}(s)   \Big| \geq \delta \right\}. \notag
\end{align}
Since each component of $S^n$ is generated according to the fixed distribution $Q_S$ identically and independently, we have 
\begin{align}
    \frac{N(x,s|x^n,S^n)}{N(x|x^n)} \to Q_{S}(s)\;\;\text{in probability.}
\end{align}
The remaining proof follows similarly to the proof of \cite[Lemma 2.13]{csiszar2011information}. The set $\mathcal{T}^n_{Q_{U|XS},\delta}$ is a union of at most $(n+1)^{|\mathcal{X}||\mathcal{S}||\mathcal{U}|}$ disjoint $V-$shell $\mathcal{T}_{V_{U|XS}}^n$ such that
\begin{align}
    |H(V|P_{x^n,s^n}) - H(Q_{U|XS}|P_{x^n,s^n})| \leq |\mathcal{X}||\mathcal{S}||\mathcal{U}|\delta'\log \delta'.
\end{align}
By \cite[Lemma 2.5]{csiszar2011information} we have
\begin{align}
    (n+1)^{-|\mathcal{X}||\mathcal{S}||\mathcal{U}|}2^{n(H(Q_{U|XS}|P_{x^n,s^n})-|\mathcal{X}||\mathcal{S}||\mathcal{U}|\delta'\log \delta')}\leq |\mathcal{T}^n_{Q_{U|XS},\delta}[x^n,s^n]|\leq 2^{n(H(Q_{U|XS}|P_{x^n,s^n})+|\mathcal{X}||\mathcal{S}||\mathcal{U}|\delta'\log \delta')}.
\end{align}
Further note that \eqref{def: x and s typicality} implies
\begin{align}
    |P_{x^n,s^n} - Q_XQ_S| \leq \delta
\end{align}
and hence
\begin{align}
    (n+1)^{-|\mathcal{X}||\mathcal{S}||\mathcal{U}|}2^{n(H(U|X,S)-o(\delta,\delta')}\leq |\mathcal{T}^n_{Q_{U|XS},\delta}[x^n,s^n]|\leq 2^{n(H(U|X,S)+o(\delta,\delta')}
\end{align}
To conclude the proof, we have
\begin{align}
    Pr\{U^n\in\mathcal{T}_{Q_{U|XS},\delta}[x^n,s^n]\}\overset{\cdot}{=} 2^{nH(U|XS)}\frac{1}{2^{nH(U|X)}} =2^{-nI(U;S|X)}.
\end{align}
This completes the proof.
\end{proof}

Note that by our coding scheme, for block $c$, $L_c$ is determined by $U^n_{c-1}$, which is the lossy description of the channel state $S^n_{c-1}$ from the previous block $c-1$. The distribution of $L_
c$ satisfies
\begin{align}
    &\sum_{s^n\in\mathcal{S}^n}\sum_{i=1}^{2^{n(R_S+R_S')}}Pr\{S^n_{c-1}=s^n,\text{$U^n_{c-1}(i)$ is selected},L_c=l_c|X^n_{c-1}=x^n\} \\
    &= \sum_{s^n\in\mathcal{S}^n}Q_S^n(s^n)\sum_{i=1}^{2^{n(R_S+R_S')}} Pr\{U^n_{c-1}(i)\;\text{is selected}|s^n,x^n\} Pr\{bin(U^n_{c-1}(i))=l_c|U^n_{c-1}(i)\;\text{is selected}\} \\
    &= \sum_{s^n\in\mathcal{S}^n}Q_S^n(s^n)\sum_{i=1}^{2^{n(R_S+R_S')}} Pr\{(U^n_{c-1}(i),s^n,x^n)\in\mathcal{T}^n_{Q_{XSU},\delta}[s^n,x^n]\}\\
    &\quad\quad\quad\quad\quad Pr\{U^n_{c-1}(i)\;\text{is selected}|(U^n_{c-1}(i),s^n,x^n)\in\mathcal{T}^n_{Q_{XSU},\delta}[s^n,x^n]\} Pr\{bin(U^n_{c-1}(i))=l_c|U^n_{c-1}(i)\;\text{is selected}\} \\
    &\overset{\cdot}{=} \sum_{s^n\in\mathcal{S}^n}Q_S^n(s^n) 2^{n(R_S+R_S')}2^{-n(I(U;S|X))}2^{-n(R_S+R_S'-I(U;S|X))}2^{-nR_S}\\
    \label{eq: probability of lc}&\overset{\cdot}{=}2^{-nR_S},
\end{align}
where the last two equations in exponential scale follow from Lemma \ref{lem: usx typicality lemma} and Lemma \ref{lem: joint typical lemma}. Hence, the reliability analysis of $X^n$ in \cite{csiszar2011information} still holds for our problem with a  sufficiently large $n$. The decoding error is arbitrarily small if
\begin{align}
    R+R_S \leq \min_{Q_J}I(X;Y).
\end{align}
In fact, it further follows from \eqref{eq: probability of lc} that for any $i\in 2^{n(R_S+R_S')}$,
\begin{align}
    &Pr\{U^n(i)\;\text{is selected}|x^n\}\\
    \label{ine: probability of LK}&\overset{\cdot}{=} 2^{-n(R_S+R_S')}=2^{-nR_{\widetilde{S}}}.
\end{align}
for $i\in 2^{n(R_S+R_S')}$ given $x^n.$ Hence, for given $x^n(m,l)$ and the bin index $b$, which determines the bin $\{u^n|m,l,b\}$, the probability that the selected lossy description is indexed within the bin by $k$ satisfies
\begin{align}
    \label{neq: probability within the bin}Pr\{U^n(m,l,b,k) \;\text{is selected}|x^n(m,l),b\} \overset{\cdot}{=} 2^{-nR_S'}.
\end{align}

The probability of each realization $s^n$ is that if $s^n \in \mathcal{T}^n_{Q_{S|UX},\delta}[x^n,u^n]$,
\begin{align}
    Pr\{S^n=s^n|x^n,u^n\} \leq 2^{-n(H(Q_{S|UX}|Q_{UX})-\delta)}
\end{align}
and $0$ otherwise. Now, suppose the codeword $X^n$ is correctly decoded. In the following, we bound the decoding error probability of $U^n$.

It should be noted that when $P_{x^n,u^n,j^n,y^n}\in\Psi^3_{\eta}$ and $P_{x^n,u^n,s^n,j^n}\in\Psi^4_{\eta}$ but the decoder outputs a wrong codeword, the last condition in the decoding rule must be violated. We define the following events for block $c$ (the decoding of $U^n_{c-1}$ is performed in block $c$):
\begin{align}
    &\mathcal{E}_0=\{(U^n,X^n_{c-1},S^n_{c-1})\notin \mathcal{T}^n_{Q_{XSU},\delta}, \;\text{$\forall U^n\in\{U^n|m_{c-1},l_{c-1}\}$}\},\\
    &\mathcal{E}_1=\{(U^n_{c-1},X^n_{c-1},S^n_{c-1},J^n_{c-1})\notin \Psi^4_{\eta}\}\\
    &\mathcal{E}_2=\{(U^n_{c-1},X^n_{c-1},J^n_{c-1},Y^n_{c-1})\notin \Psi^3_{\eta}\},\\
    &\mathcal{E}_3=\{\text{${U^n}'$ is output by the decoder for some ${U^n}'\neq U^n_{c-1}.$}\}.
\end{align}
In the following analysis, we omit the index $c-1$ for simplicity.
\subsubsection{Error event $\mathcal{E}_0$}

 For any $s^n\in\mathcal{T}^n_{Q_S,\delta}$ and $(x^n(m,l),s^n)\in \mathcal{T}^n_{Q_{XS},2\delta}$ (which is ensured by \eqref{def: x and s typicality} with high probability), let $\{u^n|m,l,s^n\}$ be the subset of $\{u^n|m,l\}$ such that  $(u^n,x^n(m,l),s^n)\in\mathcal{T}^n_{Q_{XUS},3\delta}$. 
 By Lemma \ref{lem: usx typicality lemma} and Lemma \ref{lem: joint typical lemma}, it follows that with double exponentially surely,
 \begin{align}
     \Big| \frac{1}{n}\log |\{u^n|m,l,s^n\}| - 2\tau  \Big| \leq \tau.
 \end{align}
 This proves the bound
 \begin{align}
     Pr\{\mathcal{E}_0\} \leq 2^{-n\nu}.
 \end{align}
 for some $\nu>0.$
 \subsubsection{Error event $\mathcal{E}_1$}
For fixed $x^n,j^n$ and $u^n$, define the set of the channel states
\begin{align}
    \mathcal{S}^n_{\epsilon}(x^n,u^n,j^n) = \{s^n \;\text{s.t. $D(P_{SUX}||Q_{SUX})\leq \eta$}: I(P_{s^n|x^n,u^n};P_{j^n|s^n,x^n,u^n}|P_{u^n,x^n}) > \epsilon\}.
\end{align}
Note that given $(x^n,u^n,j^n)$, the value of
\begin{align}
    I(P_{s^n|x^n,u^n};P_{j^n|s^n,x^n,u^n}|P_{u^n,x^n})
\end{align}
is determined by the joint type $P_{s^n,x^n,u^n,j^n}$. Define a set of types
\begin{align}
    \mathcal{V}(x^n,u^n,j^n)=\{P_{SUXJ}: I(S;J|U,X)>\epsilon, P_{UXJ}=P_{u^n,x^n,j^n},D(P_{SUX}||Q_{SUX})\leq \eta\}.
\end{align}
Then, given the event $\mathcal{E}_0^c$, it follows that for $(x^n,u^n,j^n)$ with joint type $P_{XUJ}$,
\begin{align}
    &Pr\{S^n \in \mathcal{S}^n_{\epsilon}(x^n,u^n,j^n)|x^n,u^n,j^n\}\\
    &=\sum_{s^n\in\mathcal{S}^n_{\epsilon}(x^n,u^n,j^n)} Pr\{S^n=s^n|x^n,u^n\}\\
    &\leq \sum_{P_{SUXJ}\in\mathcal{V}(x^n,u^n,j^n)}\sum_{s^n\in \mathcal{T}^n_{P_{S|UXJ}}[u^n,x^n,j^n]}Pr\{S^n=s^n|x^n,u^n\} \\
    &\overset{(a)}{\leq} (n+1)^{|\mathcal{X}||\mathcal{U}||\mathcal{J}|}2^{nH(P_{S|UXJ}|P_{UXJ})}2^{-n(H(P_{S|UX}|P_{UX})-\delta)} \\
    \label{ine: probability of sn that needs to be considered}&= 2^{-n( I(S;J|X,U) - \delta')}\overset{(b)}{\leq} 2^{-n(\epsilon-\delta')},
\end{align}
where $(a)$ follows by Lemma \ref{lem: type counting lemma} (type counting lemma),  $D(P_{SUX}$ $||Q_{SUX})\leq \eta$ defined in $\mathcal{V}(x^n,u^n,j^n)$ and Lemma \ref{lem: uniform continuity of entropy} (uniform continuity of entropy), $(b)$ follows by the definition of $\mathcal{V}(x^n,u^n,j^n)$. 

In \cite{csiszar1988capacity}, it is proved that we only needs to consider codewords $X^n$ such that $I(X;J)\leq \epsilon$ as the percentage of codewords that violate this condition is exponentially small. Similarly, by Lemma \ref{lem: type lemma 3}, for a fixed jamming sequence $j^n$ and codeword $x^n$ we have
\begin{align}
    \label{neq: number of un s.t. I(UJX)>epsilon}\frac{1}{2^{nR_{\widetilde{S}}}}\Big| \left\{ i:(u^n_i,x^n,j^n) \in \cup_{I(U;J|X)>\epsilon} \mathcal{T}^n_{P_{XJU}}\right\}  \Big| \leq 2^{-n\frac{\epsilon}{3}}.
\end{align}
On account of \eqref{ine: probability of LK}, \eqref{ine: probability of sn that needs to be considered}, \eqref{neq: number of un s.t. I(UJX)>epsilon} and \eqref{ine: overall error probability of stricly causal average error}, it is sufficient to consider $u^n$ and $s^n$ such that $I(U;J|X)<\epsilon$ and $I(S;J|X,U)<\epsilon$. Together with $I(X;J)<\epsilon$ given in \cite{csiszar1988capacity} we have 
\begin{align}
    \label{ine: upper bound of i(usx;j)}I(U,S,X;J) < 3\epsilon
\end{align}
with high probability. This shows 
\begin{align}
    Pr\{\mathcal{E}_1|\mathcal{E}_0^c\} \to 0.
\end{align}

 \subsubsection{Error event $\mathcal{E}_2$}
Given $x^n(m,l),b$ (the bin index of $U^n$ that is decoded in the next block) and the jamming sequence, the error probability is upper bounded by
\begin{align}
    e(x^n(m,l),b,j^n)=\sum_{k} Pr\{K=k\} \sum_{s^n}\sum_{y^n:\phi(y^n)\neq k}W^n(y^n|x^n,s^n,j^n)Pr\{S^n=s^n|x^n,u^n(m,l,b,k)\}
\end{align}
For any joint type $P_{y^n,x^n,u^n,j^n}\notin \Psi^3_{\eta}$, it follows that
\begin{align}
    &\sum_{P_{YXSUJ}\notin{\Psi^3_{\eta}}}\sum_{s^n}\sum_{y^n\in\mathcal{T}^n_{P_{Y|XSUJ}}[x^n,s^n,u^n,j^n]}W_{Y|XSJ}(y^n|x^n,s^n,j^n)Pr\{s^n|x^n,u^n\}\\
    \label{ine: joint distribution violates}&\leq (n+1)^{|\mathcal{X}||\mathcal{U}||\mathcal{J}||\mathcal{Y}|}(\sum_{s^n\notin \mathcal{S}^n_{\epsilon}(x^n,u^n,j^n)}2^{-n (D(P_{u^n,x^n,s^n,j^n,y^n}||W_{Y|XSJ}P_{u^n,x^n,s^n,j^n})-\epsilon)}Pr\{s^n|x^n,u^n\}+2^{-n(\epsilon-\delta')}),
\end{align}
where the inequality is due to Lemma \ref{lem: type fact 3} and \eqref{ine: probability of sn that needs to be considered}.
It then follows that for $s^n$ such that $I(U,X,S;J)< 3\epsilon$,
\begin{align}
    &D(P_{u^n,x^n,s^n,j^n,y^n}||W_{Y|XSJ}P_{u^n,x^n,s^n,j^n})\\
    &=D(P_{u^n,x^n,s^n,j^n,y^n}||W_{Y|XSJ}P_{u^n,x^n,s^n}P_{j^n}) - I(U,X,S;J)\\
    &\overset{(a)}{\geq} D(P_{u^n,x^n,s^n,j^n,y^n}||W_{Y|XSJ}P_{u^n,x^n,s^n}P_{j^n}) - 3\epsilon\\
    &=D(P_{u^n,x^n,s^n,j^n,y^n}||W_{Y|XSJ}Q_{UXS}P_{j^n}) - D(P_{u^n,x^n,s^n}||Q_{UXS}) - 3\epsilon\\
    &\overset{(b)}{\geq} D(P_{u^n,x^n,s^n,j^n,y^n}||W_{Y|XSJ}Q_{UXS}P_{j^n}) - 4\epsilon\\
    &\overset{(c)}{\geq} D(P_{u^n,x^n,j^n,y^n}||Q_{YU|XJ}Q_{X}P_{j^n}) - 4\epsilon \overset{(d)}{\geq} \eta-4\epsilon,
\end{align}
where $(a)$ is due to \eqref{ine: upper bound of i(usx;j)}, $(b)$ is by the definition of $\Psi_{\eta}^2$, $(c)$ follows by projecting the KL divergence to space $\mathcal{U}\times\mathcal{X}\times\mathcal{Y}\times\mathcal{J}, (d)$ is by the definition of $\Psi^3_{\eta}$. It then follows that \eqref{ine: joint distribution violates} is upper bounded by
\begin{align}
    (n+1)^{|\mathcal{X}||\mathcal{U}||\mathcal{J}||\mathcal{Y}|}(2^{-n(\eta-4\epsilon)}+2^{-n(\epsilon - \delta')}).
\end{align}
Thus, we have
\begin{align}
    Pr\{\mathcal{E}_2|\mathcal{E}_0^c,\mathcal{E}_1^c\}\to 0.
\end{align}
 \subsubsection{Error event $\mathcal{E}_3$} 

Now suppose the decoding of $X^n(m,l)$ is correct and the decoder finds a wrong description $u^n(m,l,b,t)$ ($u^n_t$ for short), which implies the joint type $P_{x^n,u^n_t,{j^n}',y^n}\in\Psi^3_{\eta}$ for some ${j^n}'$ and $I(U,S,Y;U'|J,X)>\eta$, where $S$ is defined by some $s^n$ such that $P_{x^n,u^n,s^n,j^n}\in\Psi^4_{\eta}$. Furthermore, there exists some ${S^n}'$ such that $P_{x^n,u^n_t,{s^n}',{j^n}'}\in \Psi^4_{\eta}.$ Consider the joint type $P_{u^n,u^n_t,x^n,s^n,j^n,y^n}$ that satisfies the above conditions. Note that it is sufficient to consider the joint type such that
\begin{align}
    \label{ine: constraint uu'jx}I(U;U',J|X) \leq |R_S'-I(U';J|X)|^++\epsilon.
\end{align}
Otherwise, by Point 3 of Lemma \ref{lem: type lemma 3} we have
\begin{align}
    2^{-nR_S'}\Big| t: (u^n(m,l,b,k),u^n(m,l,b,t),x^n(m,l),j^n) \in \mathcal{T}^n_{P_{UU'XJ}} \;\text{for some $t\neq k$} \Big| < 2^{-n\frac{\epsilon}{2}},
\end{align}
which implies the percentage of such wrong codewords is negligible and the corresponding contribution to the error probability is also negligible.
In the following, for simplicity, we denote $u^n(m,l,b,k)$ by $u^n_k$ and $u^n(m,l,b,t)$ by $u^n_t$.

Then, the probability of estimating a wrong index can be upper bounded by
\begin{align}
    &\sum_{t: (u^n_t,u^n_k,x^n,j^n)\in \mathcal{T}^n_{P_{UU'XJ}}}\sum_{s^n\in \mathcal{T}^n_{P_{S|UU'XJ}}[u^n_k,u^n_t,x^n,j^n]}\sum_{y^n\in\mathcal{T}^n_{P_{Y|UU'XSJ}}[u^n_k,u^n_t,x^n,s^n,j^n]}\\
    &\quad\quad\quad\quad\quad \cdot W^n_{Y|XSJ}(y^n|x^n,s^n,j^n)Pr\{s^n|x^n,u^n_k\}\\
    &\overset{(a)}{\leq} 2^{n(|R_S'-I(U';U,J|X)|^+ + 2\epsilon)} 2^{nH(Y,S|U,U',X,J)} 2^{-nH(Y|U,X,S,J))}2^{-n(H(S|X,U,J)-\delta)}\\
    &=2^{-n(I(Y,S;U'|X,U,J)-|R_S'-I(U';U,J|X)|^+ -\delta - 2\epsilon)}
\end{align}
where (a) follows that $W^n_{Y|XSJ}(y^n|x^n,s^n,j^n)$ is a constant for $y^n\in\mathcal{T}^n_{P_{Y|UU'XSJ}}[u^n_k,u^n_t,x^n,s^n,j^n]$, Point 1 of Lemma \ref{lem: type lemma 3} for the first summation, Lemma \ref{lem: type fact 1} for the second and third summations, the definition of $\Psi^4_{\eta}$, uniform continuity of entropy and condition decreases entropy.

When $R_S' \leq  I(U';J|X)\leq I(U';U,J|X)$, we have $I(U;U'|X,J)\leq I(U;U',J|X)\leq \epsilon$ and hence
\begin{align}
    &I(Y,S;U'|X,U,J)-|R_S'-I(U';U,J|X)|^+ \\
    &= I(Y,S;U'|X,U,J)=I(Y,S,U;U'|X,J) - I(U;U'|X,J) \geq \eta - \epsilon.
\end{align}

When $R_S'>I(U';J|X),$ it follows from \eqref{ine: constraint uu'jx} that
\begin{align}
    R_S' &\geq I(U;U',J|X) + I(U';J|X) - \epsilon\\
    &=I(U,J;U'|X) + I(U;J|X)-\epsilon \geq I(U,J;U'|X) - \epsilon, 
\end{align}
and hence
\begin{align}
    |R_S'-I(U';U,J|X)|^+ \leq R_S'-I(U';U,J|X)+\epsilon.
\end{align}
Now, we can bound the error probability by
\begin{align}
    &2^{-n(I(Y,S;U'|X,U,J)-|R_S'-I(U';U,J|X)|^+ -\delta - 2\epsilon)}\\
    &\leq 2^{-n(I(Y,S;U'|X,U,J)-R_S'+I(U';U,J|X) -\delta - 3\epsilon)}\\
    &=2^{-n(I(Y,S,U,J;U'|X)-R_S' - \delta - 3\epsilon)}\\
    &\leq 2^{-n(I(Y;U'|X)-R_S' - \delta - 3\epsilon)},
\end{align}
which vanishes if $R_S' \leq I(Y;U'|X).$ By the definition of the joint type $P_{u^n,u^n_t,x^n,s^n,j^n,y^n}$, there exist $S',J'$ defined by ${s^n}',{j^n}'$ such that 
$P_{x^n,u^n_t,{s^n}',{j^n}'}\in \Psi^4_{\eta}$ and $P_{x^n,u^n_t,y^n,{j^n}'}\in \Psi^3_{\eta}$. By the continuity of mutual information, we have
\begin{align}
    |I(U';Y|X)-I_Q(U;Y|X)|\leq \eta,
\end{align}
where the subscript $Q$ indicates the mutual information is computed according to the joint distribution $Q_{USXY|J}$ we defined for Theorem \ref{the: achievability of strictly causal}. Consider the worst case, the error probability vanishes if
\begin{align}
    R_S' \leq \min_{Q_{J}\in\mathcal{P}(\mathcal{J})}\min_{P_S\in\mathcal{P}(\mathcal{S},Q_S,\delta_n)} I_Q(U;Y|X),
\end{align}
where 
\begin{align}
    \mathcal{P}(\mathcal{S},Q_S,\delta_n)=\{P_S\in\mathcal{P}(\mathcal{S}):D(P_S||Q_S)\leq \delta_n\}
\end{align}
and $\delta_n\to 0$ as $n\to \infty.$
When the block length $n$ is sufficiently large, by the law of large numbers, we have
\begin{align}
    R_S' \leq \min_{Q_{J}\in\mathcal{P}(\mathcal{J})} I_Q(U;Y|X).
\end{align}

Once the receiver decodes $u^n$, it sets $\hat{s}_i = h(x_i,u_i,y_i)$ and the distortion constraint is satisfied by the typical average lemma\cite{el2011network}.

\section{proof of corollary \ref{coro: strictly causal lossless average error} and corollary \ref{coro: noncausal pure lossless state com}}

\subsection{Proof of Corollary \ref{coro: strictly causal lossless average error}}\label{sec: proof of coro: strictly causal lossless average error}
The achievability follows by setting $U=S$ in Theorem \ref{the: achievability of strictly causal}. When the state observation has a delay $d=n$, the encoder just needs to wait for one additional initial block to collect the state information. When the number of blocks is sufficiently large, the additional cost is negligible. For the converse, we first assume that the channel is a compound channel. By the definition of lossless state communication, we have
    \begin{align}
        H_{Q_J}(S^n|Y^n,X^n)\leq H_{Q_J}(S^n|Y^n)\leq \epsilon,\;\;\forall Q_J\in\mathcal{P}(\mathcal{J}).
    \end{align}
    It follows that
    \begin{align}
        nR &= H(M)\\
        &\overset{(a)}{\leq} H(M) - H_{Q_J'}(S^n|Y^n,X^n) + \epsilon\\
            &\overset{(b)}{\leq} I_{Q_J}(M;Y^n) - H_{Q_J'}(S^n|Y^n,X^n) + 2\epsilon\\
            &=\sum_{i=1}^n I_{Q_J}(M;Y_i|Y^{i-1})-  H_{Q_J'}(S^n|Y^n,X^n) + 2\epsilon\\
            &\leq \sum_{i=1}^n I_{Q_J}(M,X_i;Y_i|Y^{i-1})-  H_{Q_J'}(S^n|Y^n,X^n) + 2\epsilon\\
            &\overset{(c)}{\leq}\sum_{i=1}^n I_{Q_J}(X_i;Y_i)-  H_{Q_J'}(S^n|Y^n,X^n) + 2\epsilon,
    \end{align}
    where $(a)$ and $(b)$ follow from Fano's inequality, $(c)$ follows from the fact that conditions reduce entropy and the Markov chain $Y_i-X_i-(M,Y^{i-1})$.
    
    For the lossless communication rate, we have
    \begin{align*}
        &H_{Q_J'}(S^n|Y^n,X^n)\\
        &\overset{(a)}{=}H(S^n) - I_{Q_J'}(S^n;Y^n|X^n)\\
        &=H(S^n) - \sum_{i=1}^n I(S^n;Y_i|Y^{i-1},X^n)\\
        &=\sum_{i=1}^n H(S_i) - H(Y_i|Y^{i-1},X^n) + H(Y_i|X^n,S^n,Y^{i-1})\\
        &\overset{(b)}{\geq} \sum_{i=1}^n H(S_i) - H(Y_i|X_i) + H(Y_i|X_i,S_i)\\
        &=\sum_{i=1}^n H(S_i) - I(S_i;Y_i|X_i) \overset{(c)}{=} \sum_{i=1}^n H(S_i|Y_i,X_i).
    \end{align*}
    where $(a)$ and $(c)$ are due to our observation delay assumption, so that $S^n$ is independent of $X^n$, $(b)$ is by the Markov chain $Y_i-(X_i,S_i)-(X^{n\backslash i},S^{n\backslash i}, Y^{i-1})$.
     Now, we have
    \begin{align}
        nR &\leq \sum_{i=1}^n I_{Q_J}(X_i;Y_i)-  H_{Q_J'}(S_i|Y_i,X_i) + 2\epsilon\\
        &\leq \max_{Q_{X}}I_{Q_J}(X;Y)-  H_{Q_J'}(S|Y,X) + 2\epsilon
    \end{align}
    This completes the proof.
    
\subsection{Proof of Corollary \ref{coro: noncausal pure lossless state com}}\label{sec: proof of coro: noncausal pure lossless state com}
\emph{Achievability:} Fix an input distribution $Q_{X|S}$. The encoder partitions the state observation into two parts $s^{n_1}s^{n_2}$ such that $n_1+n_2=n, n_1/n_2\to 0$ as $n\to\infty$. For each type $P_S$ of $S^{n_2}$ and $S^{n_2}\in T^{n_2}_{P_S}$, construct a codebook such that each $S^{n_2}\in T^{n_2}_{P_S}$ is associated with a codeword $X^{n_2}$. Specifically, for each $s^{n_2}\in T^{n_2}_{P_S}$, select a codeword $x^{n_2}$ from $T^n_{Q_{X|S}}[s^{n_2}]$ uniformly at random. Further, construct a Gel'fand--Pinsker type code $\{u^{n_1}\}$ for the set of all possible types of $S^{n_2}$. Since the number of possible types for $n_2-$length sequences is a polynomial number with respect to $n_2$, the rate of this prefixed code is negligible. It can be reliably transmitted to the receiver if there exist distributions $Q_{U|S},Q_{X|US}$ such that the channel $Q_{Y|UJ}$ is nonsymmetrizable-$\mathcal{U}.$

    After decoding the type of the sequence, the decoder tries to find a unique pair $(S^{n_2},X^{n_2})$. The decoding rule is the same as the arbitrarily varying multiple access channels, and only the sum rate constraint needs to be considered since each $s^{n_2}$ is associated with a fixed codeword $x^{n_2}$. The nonsymmetrizable-$\mathcal{X}\times\mathcal{S}$ condition ensures the reliability of the decoding. On the other hand, if the channel $W_{Y|XSJ}$ is nonsymmetrizable-$\mathcal{X}\times\mathcal{S}$, there always exists at least one $Q_{XU|S}$ such that the channel $Q_{Y|UJ}$ is nonsymmetrizable-$\mathcal{U}$ since the encoder can always set $U=(X,S)$. Hence, the nonsymmetrizability regarding $\mathcal{X}\times\mathcal{S}$ is sufficient. This completes the proof of the achievability.

    \emph{Converse:} To show the converse, we first assume the channel is a compound channel, i.e. the distribution $Q_J$ of the jammer is fixed during each transmission block. By the lossless estimation constraint, we have by Fano's inequality,
    \begin{align}
        H(S^n|Y^n) \leq n\epsilon\;\;\text{for any $Q_J \in \mathcal{P}(\mathcal{J})$}
    \end{align}
    for some $\epsilon > 0.$ It follows that for any $Q_J \in \mathcal{P}(\mathcal{J})$,
    \begin{align*}
        &nH(S) \\
        &= H(S^n) \\
        &= H(S^n|Y^n) + I(S^n;Y^n)\\
        &\leq I(S^n;Y^n) + n\epsilon \\
        &\leq I(X^n,S^n;Y^n)+ n\epsilon\\
        &\leq \sum_{i=1}^n I(X_i,S_i;Y_i)+ n\epsilon\\
        &\leq n \max_{Q_{X|S}} I(X,S;Y)+ n\epsilon.
    \end{align*}
    The proof is completed by the fact that $I(X,S;Y)$ is convex in $Q_J$ and concave in $Q_{X|S}$.

\section{Proof of Theorem \ref{the: noncausal maximal error}}\label{sec: proof of achievability of noncausal case}
Fix a joint distribution $Q_{SU}Q_{X|US}W_{Y|XSJ}$ and define
\begin{align}
    \Psi^1_{\eta} = \{P_{u^n,y^n,j^n}: D(P_{y^n|u^n,j^n}||Q_{Y|UJ})\}\leq \eta,\\
    \Psi^2_{\eta} = \{P_{u^n,s^n,j^n}:D(P_{u^n,s^n,j^n}||Q_{US}P_{J|US})\}.
\end{align}
Further set positive real numbers $\widetilde{R},R$ and $R'$ such that
\begin{align}
    &\widetilde{R} = \min\{\min_{P_J\in\mathcal{P}(\mathcal{J|U})}I(U;Y),D(Q_U)\} - \tau,\\
    &R':=\widetilde{R}-R \geq I(U;S) + \tau.
\end{align}

\subsection{Codebook Generation}
 This section introduces the generation of codebooks and some important properties of randomly generated codebooks. The first property, proved by Lemma \ref{lem: good codebook}, is an auxiliary result and upper bounds the number of codewords in the codebook that fall in the conditional type set of an arbitrary sequence and a condition $V-$shell. The second property, proved by Lemma \ref{lem: size of subcodebooks}, shows that after the random binning operation, most subcodebooks have sufficiently large sizes to cover the typical space of the channel state sequence $S^n$. The last property in Lemma \ref{lem: good codeword} ensures that most codewords in the codebook are `good' codewords whose definition will be given later. Intuitively, `good' codewords are distinguishable codewords that will not cause any ambiguity at the decoder side, using the decoding rule we define later.

Select $2^{n\widetilde{R}}$ codewords $\{u^n\}$ uniformly at random from a fixed type set $\mathcal{T}^n_{Q_U}$, where $Q_U = \sum_{s}Q_SQ_{U|S}$. Assign each codeword an index $m\in[1:2^{nR}]$ such that codewords with the same index form a bin $\mathcal{C}(m)=\{u^n|m\}:=\{u^n(l):l\in [1:2^{n(\widetilde{R}-R)}]\}$. Denote the index assigned to a codeword $u^n$ by $b(u^n)$.
With this random assignment of $m\in [1:2^{nR}]$, we have a set of sub-codebooks $\{u^n|m\},m\in[1:2^{nR}]$, each with size $2^{nR'},R':=\widetilde{R}-R.$  In the following, we prove a property of the codebook.
\begin{applemma}\label{lem: good codebook}
    For given distribution $Q_{U}$ and $R'\geq 0$,  we call any of the above codebook $\{u^n\}$ a `good' codebook if for every $j^n\in\mathcal{J}^n$ and $V:\mathcal{J}\to\mathcal{U}$ we have
    \begin{align}
        |k:u^n(k)\in\mathcal{T}^n_{V}[j^n]| \leq 2^{n(|\widetilde{R}-I(P_{j^n};V)|^+ + \delta)}.
    \end{align}
    Then, it follows that
    \begin{align}
        Pr\{\{U^n\} \;\text{is a good codebook}\} \geq 1-\zeta,
    \end{align}
    where $\zeta$ is a double exponentially small number with respect to $n$.
\end{applemma}
\begin{proof}
    Fix an arbitrary $j^n$ and a conditional shell $V$. Define an indicator function
    \begin{align}
        \mathbb{I}_{j^n}\{k\} = \left\{
            \begin{aligned}
                &1, \;U^n(k)\in \mathcal{T}^n_{V}[j^n],\\
                &0, \;\text{otherwise.}
            \end{aligned}
        \right. 
    \end{align}
    Note that for each sub-codebook, the codewords $u^n$ are all randomly selected from the given conditional type set $\mathcal{T}^n_{Q_{U}}$. Hence, we have
    \begin{align*}
        &\mathbb{E}[\mathbb{I}_{j^n}\{k\}] = Pr\{U^n(k)\in \mathcal{T}^n_{V}[j^n]\}\\
        &=\sum_{u^n\in \mathcal{T}^n_{V}[j^n]} \frac{1}{|\mathcal{T}^n_{Q_{U}}|} = 2^{-nI(P_{j^n};V)}.
    \end{align*}
    Then we have
    \begin{align}
        &Pr\{|k:U^n(k)\in\mathcal{T}^n_{V}[j^n]| > 2^{n(|\widetilde{R}-I(P_{j^n};V)|^+ + \delta)}\}\notag\\
        &=Pr\{\sum_{k}\mathbb{I}_{j^n}\{k\}  > 2^{n(|\widetilde{R}-I(P_{j^n};V)|^+ + \delta)}\}\notag\\
        &\overset{(a)}{\leq} \exp_2(-2^{n(|\widetilde{R}-I(P_{j^n};V)|^+ + \delta)})\prod_{k}\mathbb{E}[\exp_2(\mathbb{I}_{j^n}\{k\})]\notag\\
        &\overset{(b)}{\leq} \exp_2(-2^{n(|\widetilde{R}-I(P_{j^n};V)|^+ + \delta)})\exp_2(2^{n(\widetilde{R}-I(P_{j^n};V))}\log e)\notag\\
        \label{eq: good codebook bound}&=\exp_2(-2^{n(|\widetilde{R}-I(P_{j^n};V)|^+ + \delta)}+2^{n(\widetilde{R}-I(P_{j^n};V))}\log e),
    \end{align}
    where $(a)$ is by the Markov inequality, $(b)$ follows by the inequality $\mathbb{E}[2^T]\leq \mathbb{E}[1+T]=1+\mathbb{E}[T]\leq \exp_e(\mathbb{E}[T])$ for $0\leq T \leq 1.$
    
    When $\widetilde{R}\geq I(P_{j^n};V)$, \eqref{eq: good codebook bound} is upper bounded by
    \begin{align}
        \exp_2(-2^{n(\widetilde{R}-I(P_{j^n};V))}(2^{n\delta}-\log e))
    \end{align}
    and when $\widetilde{R}< I(P_{j^n};V)$,it is upper bounded by
    \begin{align}
        \exp_2(-2^{n\delta}+\log e).
    \end{align}
    It follows that 
    \begin{align}
        &Pr\{|k:U^n(k)\in\mathcal{T}^n_{V}[j^n]| > 2^{n(|\widetilde{R}-I(P_{j^n};V)|^+ + \delta)}\} \leq \exp_2(-2^{n\delta}+\log e):=\zeta
    \end{align}
    and the proof is completed by noting that the number of all possible types is upper-bounded by $(n+1)^{|\mathcal{U}||\mathcal{J}|}$ and the number of all $j^n$ is only exponentially large with respect to $n$.
\end{proof}
As we mentioned before, we use the random binning technique to the codebook instead of deterministic binning like we usually do for the Gel'fand-Pinsker coding. Here, we prove that most of the subcodebooks are still large enough to cover the typical space of the channel state sequences.
\begin{applemma}\label{lem: size of subcodebooks}
    For a random message $M$, the size of the corresponding sub-codebook satisfies
    \begin{align}
        \label{ine: probability of large enough codebook}Pr\{|\{U^n|M\}| \geq 2^{n(I(U;S)+ \tau')} \} \geq 1 - \zeta, 
    \end{align}
    for some $\tau'>0$, where $\zeta$ is a double-exponentially small number.
\end{applemma}
\begin{proof}
    Define an indicator function
    \begin{align*}
        \mathbb{I}_M(U^n)=\left\{
        \begin{aligned}
            &1\;\;\;b(U^n) = M;\\
            &0\;\;\;\text{otherwise.}
        \end{aligned}
        \right.
    \end{align*}
    It follows that
    \begin{align}
        |\{U^n|M\}| = \sum_{i=1}^{2^{n\widetilde{R}}} \mathbb{I}_M(U^n(i))
    \end{align}
    and
    \begin{align}
        \mathbb{E}[ \mathbb{I}_M(U^n(i))] = 2^{-nR}.
    \end{align}
    Therefore, it follows that
    \begin{align}
        \mathbb{E}[|\{U^n|M\}|] = \mathbb{E}[\sum_{i=1}^{2^{n\widetilde{R}}} \mathbb{I}_M(U^n(i))] = 2^{nR'}.
    \end{align}
    Applying the Chernoff bound for the sum of binary random variables gives
    \begin{align}
        Pr\{|\sum_{i=1}^{2^{n\widetilde{R}}} \mathbb{I}_M(U^n(i)) - 2^{nR'}|>2^{-n\epsilon}2^{nR'}\} \leq 2e^{-2^{nR'}2^{-2n\epsilon}/3}:=\zeta,
    \end{align}
    where $\epsilon>0$ is a positive real number such that $\tau > 2\epsilon$. Hence, with a probability greater than $1-\zeta$, we have
    \begin{align}
        |\{U^n|M\}| \geq (1-2^{-n\epsilon})2^{nR'} = 2^{n(I(U;S)+\tau')}
    \end{align}
    by choosing a sufficiently small $\epsilon$.
\end{proof}
In the codebook generation, we partition the codebook into $2^{nR}$ subcodebooks $\{u^n|m\},m\in[1:2^{nR}]$. By Lemma \ref{lem: size of subcodebooks} and Bernoulli's inequality, the probability that all subcodebooks satisfying \eqref{ine: probability of large enough codebook} is lower bounded by
\begin{align}
    (1-\zeta)^{2^{nR}} \geq 1 - \zeta \cdot 2^{nR},
\end{align}
where $\zeta$ is a double exponentially small number. Hence, we can neglect the events where there exist subcodebooks whose sizes are not large enough. 

Before we delve into the error analysis, it remains to show that most codewords in the codebook are `good' codewords, which are defined as follows.

\textbf{\emph{Good codeword verification:}}
For a given codebook $\mathcal{C}$ with size $2^{n\widetilde{R}}$, we define a codeword $u^n(i)$ to be a good codeword if
\begin{align}
    \label{def: good codeword}I(u^n(i);u^n(j)) <\widetilde{R} 
\end{align}
for any $i\neq j, u^n_i,u^n_j\in\mathcal{C}$. In the following, we show that most of the codewords of a randomly generated codebook are good codewords with high probability.

\begin{applemma}\label{lem: good codeword}
    Given finite sets $\mathcal{U},\mathcal{J}$ and the distribution $Q_{U}$, for any random codebook $\textbf{C}$ of size $2^{n\widetilde{R}},\widetilde{R}=\min\{I(Q_{U}),D(Q_{U})\}$, generated by performing random selection over the type set $\mathcal{T}^n_{Q_{U}}$, we have
    \begin{align}
        Pr\{\text{The number of good codewords $\geq (1-2^{-n\delta})|\textbf{C}|$}\} > 1 - \chi,
    \end{align}
    where $\chi>0$ is a double exponentially small number with respect to $n$.
\end{applemma}
By Lemma \ref{lem: good codebook}, with high probability, the codebook has a set of codewords $U^n$ such that
\begin{align}
    \label{eq: refinement 1}&|k:U^n(k)\in \mathcal{T}^n_{V}[u^n,j^n]| \leq 2^{n(|\widetilde{R} - I(P_{u^n,j^n};V)|^++\delta)}\;\text{for every $V:\mathcal{U}\times\mathcal{J}\to\mathcal{U}$},\\
    \label{eq: refinement 2}&|k:U^n(k)\in \mathcal{T}^n_{\Bar{V}}[u^n]| \leq 2^{n(|\widetilde{R} - I(P_{u^n};\Bar{V})|^++\delta)}\;\text{for every $\Bar{V}:\mathcal{U}\to\mathcal{U}$},
\end{align}
where $\delta\to 0$ as $n\to \infty$.
Now by \eqref{eq: refinement 2}, it follows that for every $i\in[1:2^{n\widetilde{R}}]$,
\begin{align}
    |k:U^n(k)\in \mathcal{T}^n_{\Bar{V}}[U^n(i)]| \leq 2^{n\delta},\;\;\text{if $\widetilde{R} \leq I(P_{U^n};\Bar{V})$}
\end{align}
and hence by the type counting lemma,
\begin{align}
    |k:I(U^n(k);U^n(i)) \geq \widetilde{R}|\leq 2^{n\delta}(n+1)^{|\mathcal{U}|^2}=2^{n\hat{\delta}}.
\end{align}
where $\hat{\delta}\to 0$ as $n\to \infty$.
Define 
\begin{align}
    \mathcal{B}(U^n(i))=\{U^n(k)\in\textbf{C}:I(U^n(k);U^n(i)) \geq \widetilde{R}\}.
\end{align}
It follows that
\begin{align}
    |\sum_{i=1}^{|\textbf{C}|}\mathcal{B}(U^n(i))|\leq 2^{n(\widetilde{R}+\hat{\delta})}
\end{align}
and
\begin{align}
    \label{ine: probability upper bound of bui}Pr\left\{\sum_{i=1}^{|\textbf{C}|}\mathcal{B}(U^n(i))\right\}\leq 2^{-nH(U)}2^{n(\widetilde{R}+\hat{\delta})}\overset{(a)}{\leq}2^{-n(\max\{H(U|Y),H(U|U')\}-\hat{\delta})},
\end{align}
where $(a)$ follows by the fact that $\widetilde{R}\leq \min\{I(Q_{U}),D(Q_{U})\}$. Now for a randomly generated codebook $\textbf{C}$, we define an indicator function
\begin{align}
    \mathbb{I}_i = \left\{ 
        \begin{aligned}
            &1 \;\;\text{if $U^n(i)$ is not a good codeword.}\\
            &0 \;\;\text{if $U^n(i)$ is a good codeword.}
        \end{aligned}
    \right.
\end{align}
It follows that
\begin{align}
    Pr\{\text{The number of good codewords} < (1-2^{-n\nu})|\mathcal{C}|\} = Pr\{\sum_{i}\mathbb{I}_i \geq 2^{-n\nu}|\mathcal{C}|\}
\end{align}
Note that the sum $\sum_{i}\mathbb{I}_i$ can be upper bounded by
\begin{align}
    \sum_{i}\mathbb{I}^1_i + \sum_{i}\mathbb{I}^2_i
\end{align}
where
\begin{align}
    \mathbb{I}^1_i = \left\{
        \begin{aligned}
            &1, \;\;\text{if $U^n(i)\in \cup_{j<i}\mathcal{B}(U^n(j))$},\\
            &0  \;\;\text{otherwise}
        \end{aligned}
    \right.
\end{align}
and
\begin{align}
    \mathbb{I}^2_i = \left\{
        \begin{aligned}
            &1, \;\;\text{if $U^n(i)\in \cup_{j>i}\mathcal{B}(U^n(j))$},\\
            &0  \;\;\text{otherwise}
        \end{aligned}
    \right.
\end{align}
Now, we have
\begin{align}
    Pr\{\sum_{i}\mathbb{I}_i \geq 2^{-n\nu}|\mathcal{C}|\} \leq Pr\{\sum_{i}\mathbb{I}^1_i \geq \frac{1}{2}2^{-n\nu}|\mathcal{C}|\} + Pr\{\sum_{i}\mathbb{I}^2_i \geq \frac{1}{2}2^{-n\nu}|\mathcal{C}|\}
\end{align}
and it suffices to bound only $Pr\{\sum_{i}\mathbb{I}^1_i \geq \frac{1}{2}2^{-n\nu}\}.$ Here, we use the following lemma:
\begin{applemma}{\cite[Lemma 1(a)]{ahlswede1980method}}
    Let $T_1,T_2,...,T_K$ be a sequence of discrete random variables, then
    \begin{align}
        Pr\{\frac{1}{K}\sum_{i=1}^K T_i \geq a\} \leq 2^{-\alpha a K}\prod_{i=1}^K \max_{t^{i-1}}\mathbb{E}[\exp (\alpha T_i)|t^{i-1}]
    \end{align}
\end{applemma}
By defining $K=2^{n\widetilde{R}}, T_i = \mathbb{I}_i^1, a = \frac{1}{2}2^{-n\nu}, \alpha = 1$ and \eqref{ine: probability upper bound of bui}, we have
\begin{align}
    Pr\{T_i=1|T_1=t_1,T_2=t_2,...,T_{i-1}=t_{i-1}\} \leq 2^{-n(\max\{H(U|Y),H(U'|U)\}-\hat{\delta})}
\end{align}
and
\begin{align}
    Pr\{\sum_{i}\mathbb{I}^1_i \geq \frac{1}{2}2^{-n\nu}|\textbf{C}|\}&\overset{(a)}{\leq} 2^{-\frac{1}{2}2^{-n\nu}|\textbf{C}|}[1+2^{-n(\max\{H(U|Y),H(U'|U)\}-\hat{\delta})}]^{|\textbf{C}|}\\
    &\overset{(b)}{\leq} 2^{-\frac{1}{2}2^{-n\nu}|\textbf{C}|} e^{2^{-n(\max\{H(U|Y),H(U'|U)\}-\hat{\delta})}|\textbf{C}|}\\
    &=2^{-\frac{1}{2}2^{-n\nu}|\textbf{C}| + 2^{-n(\max\{H(U|Y),H(U'|U)\}-\hat{\delta})}|\textbf{C}|\log e}\\
    &=2^{-|\textbf{C}|(\frac{1}{2}2^{-n\nu}-2^{-n(\max\{H(U|Y),H(U'|U)\}-\hat{\delta})}\log e)}
\end{align}
where $(a)$ and $(b)$ follow by the inequality $2^t\leq 1+t \leq e^t$ for $t\in[0,1]$.

Now we argue the value of $\max\{H(U|Y),H(U|U')\}$. When $\max\{H(U|Y),H(U|U')\}=0$, we have both $I(Q_{U})=D(Q_{U})=H(Q_{U})$. By the definition of $\widetilde{R}$ we have 
\begin{align}
    I(u^n(i),u^n(j)) < \widetilde{R}
\end{align}
always hold for any $i\neq j$, and all the codewords are good codewords. When $\max\{H(U|Y),H(U|U')\}>0$, there must exists a $\zeta>0$ independent of $n$ such that $\max\{H(U|Y),H(U|U')\} > \zeta$. Then we choose a sufficiently large $n$ such that $\nu+\hat{\delta}<\zeta$ and it follows that
\begin{align}
    Pr\{\sum_{i}\mathbb{I}^1_i \geq \frac{1}{2} 2^{-n\nu}|\textbf{C}|\}&\leq 2^{-|\textbf{C}|(\frac{1}{2}2^{-n\nu}-2^{-n(\max\{H(U|Y),H(U'|U)\}-\hat{\delta})}\log e)}\\
    &\leq 2^{-|\textbf{C}|(\frac{1}{2} 2^{-n\nu}-2^{-n(\zeta-\hat{\delta})}\log e)}\\
    &\leq 2^{-|\textbf{C}|2^{-n\hat{\nu}}}:=\chi
\end{align}
for some $\hat{\nu}>0$.
This completes the proof. \hfill \qedsymbol

Let $\mathcal{C}_G\subseteq \mathcal{C}$ be the set of good codewords of the codebook $\mathcal{C}$. 
\begin{applemma}\label{lem: number of bad codewords in each subcodebook}
    For a random message $M$, the number of bad codewords in the subcodebook $\mathcal{C}(M)$ satisfies
    \begin{align}
        Pr\{|(\mathcal{C}\backslash \mathcal{C}_G) \cap \mathcal{C}(M)| \leq 2^{-2n\epsilon}2^{nR'}\} \geq 1 - \chi,
    \end{align}
    for some positive $\epsilon$, where $\chi$ is a double-exponentially small number.
\end{applemma}
Due to the random binning operation to the codebook, Lemma \ref{lem: number of bad codewords in each subcodebook} follows the same as Lemma \ref{lem: size of subcodebooks}.  The lemma ensures that the proportion of bad codewords in each subcodebook is exponentially small.
\subsection{Encoding and Decoding}
\emph{Encoding: } The encoder observes the random state $s^n$ and tries to transmit a message $m$. It looks for a codeword $u^n\in\{u^n|m\}$ such that
\begin{align}
    (u^n,s^n)\in\mathcal{T}^n_{Q_{US},\delta}.
\end{align}
If there are multiple such codewords, the encoder selects one uniformly at random. If there is no such codeword, the encoder declares an error. If the selected codeword is a bad codeword, the encoder declares an error.

\emph{Decoder: }The decoding rule is defined as follows.
Upon observing the channel output $y^n$, the decoder ouputs $\phi(y^n)=(\hat{m},\hat{l})$ if and only if there exists a pair of $(s^n,j^n)\in\mathcal{S}^n \times \mathcal{J}^n$ such that
\begin{itemize}
    \item The joint type $P_{u^n(\hat{m},\hat{l}),y^n,j^n}\in \Psi^1_{\eta}$.
    \item The joint type $P_{u^n(\hat{m},\hat{l}),s^n,j^n}\in\Psi^2_{\eta}$.
    \item For each pair of competitor $(\hat{m}',\hat{l}')\neq (\hat{m},\hat{l})$ such that $P_{u^n(\hat{m}',\hat{l}'),{s^n}',y^n,{j^n}'}\in\Psi^1_{\eta}$ for some ${s^n}'\in\mathcal{S}^n,{j^n}'\in\mathcal{J}^n$, we have $I(Y,S;U'|U,J)\leq \eta$, where $U,U',S,J,Y$ are dummy random variables such that the joint distribution equals $P_{UU'SYJ}$, which is the type of $(u^n(\hat{m},\hat{l}),u^n(\hat{m},\hat{l}),s^n,y^n,j^n)$.
\end{itemize}
The decoder reconstructs the sequence $\hat{s}^n$ by $\hat{s}_i=h(u_i,y_i),i=1,...,n.$
\subsection{Unambiguity Analysis}
Again, we first show that the decoding rule we defined above is unambiguous.
Define a dummy random variable $J$ such that $P_J$ is defined by the type of $j^n$. By the law of large numbers we have with high probability
\begin{align}
    D(P_{s^n,j^n}||Q_SP_J)\leq \eta.
\end{align}
and due to the way that $u^n$ is selected,
\begin{align}
    \label{def: joint type of usj}Pr\{D(P_{u^n,s^n,j^n}||Q_{SU}P_{J|US})\leq \eta\}\to 1.
\end{align}
Since $x^n$ are generated by $(u^n,s^n)$ and $Q_{X|US}$, by the argument for conditional typicality\cite[Appendix 2A]{el2011network} we have
\begin{align}
    \label{def: joint type of xusj}Pr\{D(P_{x^n,u^n,s^n,j^n}||Q_{US}Q_{X|US}P_{J|US})\leq \eta\} \to 1.
\end{align}
Since the channel $W_{Y|XSJ}$ is discrete memoryless, we also have
\begin{align}
    Pr\{D(P_{y^n|x^n,s^n,j^n}||W_{Y|XSJ})\leq\eta\} \to 1
\end{align}
In the following, we show the unambiguity of the decoding rule when a good codeword is selected.
Now we have
\begin{align}
    2\eta&\geq D(P_{y^n|x^n,s^n,j^n}||W_{Y|XSJ}) + D(P_{x^n,u^n,s^n,j^n}||Q_{SU}Q_{X|SU}P_{J|US})\\
    &=\sum_{y,x,s,j} P_{x^n,s^n,j^n,y^n}(x,s,j,y)\log\frac{P_{y^n|x^n,s^n,j^n}(y|x,s,j)}{W_{Y|XSJ}(y|x,s,j)} \\
    &\quad\quad\quad + \sum_{x,u,s,j}P_{x^n,u^n,s^n,j^n}(x,u,s,j)\log\frac{P_{x^n,j^n,u^n,s^n}(x,j,u,s)}{Q_{US}(u,s)Q_{X|US}(x|u,s)P_{j^n|u^n,s^n}(j|u,s)}\\
    &=\sum_{x,s,j,u,y}P_{x^n,s^n,u^n,j^n,y^n}(x,s,j,u,y)\log\frac{P_{y^n|x^n,s^n,j^n}(y|x,s,j)P_{x^n|u^n,s^n,j^n}(x|u,s,j)P_{u^n,s^n,j^n}(u,s,j)}{W_{Y|XSJ}(y|x,s,j)Q_{X|US}(x|u,s)Q_{US}(u,s)P_{j^n|u^n,s^n}(j|u,s)}.
\end{align}
Projecting the above inequality into $\mathcal{S}\times\mathcal{J}\times\mathcal{U}\times\mathcal{Y}$ gives
\begin{align}
    \label{ine: type bound of psjuy}2\eta \geq \sum_{s,j,u,y}P_{s^n,u^n,j^n,y^n}(s,j,u,y)\log\frac{P_{y^n|u^n,s^n,j^n}(y|u,s,j)P_{u^n,s^n,j^n}(u,s,j)}{Q_{Y|USJ}(y|u,s,j)Q_{US}(u,s)P_{j^n|u^n,s^n}(j|u,s)},
\end{align}
where $Q_{Y|USJ}(y|u,s,j) = \sum_{x}W_{Y|XSJ}(y|x,s,j)Q_{X|US}(x|u,s)$. 
Projecting the above expression into $\mathcal{J}\times\mathcal{U}\times\mathcal{Y}$ yields
\begin{align}
    2\eta &\geq \sum_{j,u,y}P_{u^n,j^n,y^n}(j,u,y)\log\frac{P_{y^n,u^n,j^n}(y,u,j)}{\sum_{s}Q_{Y|USJ}(y|u,s,j)Q_{US}(u,s)P_{j^n|u^n,s^n}(j|u,s)}\\
    &=\sum_{j,u,y}P_{u^n,j^n,y^n}(j,u,y)\\
    &\quad\quad\quad\quad \log\frac{P_{y^n,u^n,j^n}(y,u,j)/P_{u^n,j^n}(u,j)}{\sum_{s}Q_{Y|USJ}(y|u,s,j)Q_{US}(u,s)P_{j^n|u^n,s^n}(j|u,s)/\sum_{y,s}Q_{YUSJ}(y,u,s,j)}\frac{P_{u^n,j^n}(u,j)}{\sum_{y,s}Q_{YUSJ}(y,u,s,j)}
\end{align}
where $Q_{YUSJ}(y,u,s,j)=Q_{Y|USJ}(y|u,s,j)Q_{US}(u,s)P_{j^n|u^n,s^n}(j|u,s).$ By \eqref{def: joint type of usj} we have
\begin{align}
    \eta &\geq \sum_{j,u}P_{u^n,j^n}(j,u)\log \frac{P_{u^n,j^n}(u,j)}{\sum_{y,s}Q_{YUSJ}(y,u,s,j)}
\end{align}
and hence 
\begin{align}
    2\eta &\geq \sum_{j,u,y}P_{u^n,j^n,y^n}(j,u,y)\log\frac{P_{y^n,u^n,j^n}(y,u,j)/P_{u^n,j^n}(u,j)}{\sum_{s}Q_{Y|USJ}(y|u,s,j)Q_{US}(u,s)P_{j^n|u^n,s^n}(j|u,s)/\sum_{y,s}Q_{YUSJ}(y,u,s,j)}\\
    \label{neq: pujl qujl}&=\sum_{j,u,y}P_{u^n,j^n,y^n}(j,u,y) \log\frac{P_{y^n|u^n,j^n}(y|u,j)}{Q_{Y|UJ}(y|u,j)},
\end{align}
where $Q_{Y|UJ}(y|u,j)=Q_{YUJ}(y,u,j)/\sum_{y,s}Q_{YUSJ}(y,u,s,j).$ The remaining proof follows similarly to that of the strictly causal case. By the definition of $\Psi^1_{\eta}$
\begin{align}
    D(P_{y^n|u^n,j^n}||Q_{Y|UJ})=\sum_{j,u,y}P_{u^n,j^n,y^n}(j,u,y)\log \frac{P_{y^n|u^n,j^n}(y|u,j)}{Q_{Y|UJ}(y|u,j)}\leq 2\eta 
\end{align}
and
\begin{align}
    I(Y;U'|U,J) = \sum_{u,u',j,y}P_{u^n,{u^n}',j^n,y^n}(u,u',j,y)\log \frac{P_{y^n|u^n,{u^n}',j^n}(y|u,u',j)}{P_{y^n|u^n,j^n}(y|u,j)}\leq I(Y,S;U'|J,U)\leq \eta.
\end{align}
It follows that
\begin{align}
    3\eta \geq &\sum_{u,u',j,y}P_{u^n,{u^n}',j^n,y^n}(u,u',j,y)\log \frac{P_{y^n|u^n,{u^n}',j^n}(y|u,u',j)}{Q_{Y|UJ}(y|u,j)}\\
    &=\sum_{u,u',j,y}P_{u^n,{u^n}',j^n,y^n}(u,u',j,y)\log \frac{P_{y^n|u^n,{u^n}',j^n}(y|u,u',j)P_{u^n,{u^n}',j^n}(u,u',j)}{Q_{Y|UJ}(y|u,j)P_{u^n,{u^n}',j^n}(u,u',j)}.
\end{align}
Projecting the above equality onto $\mathcal{U}\times\mathcal{U}'\times\mathcal{Y}$ and by Pinsker's inequality, it follows that
\begin{align}
    \sum_{u,u',y}|\sum_{j}P_{y^n|u^n,{u^n}',j^n}(y|u,u',j)P_{u^n,{u^n}',j^n}(u,u',j) - \sum_{j}Q_{Y|UJ}(y|u,j)P_{u^n,{u^n}',j^n}(u,u',j)| \leq c\sqrt{3\eta}
\end{align}
Applying the same argument to $D(P_{y^n|u^{n'},j^{n'}}||Q_{Y|UJ})$ and $I(Y,S';U|J',U')$, we have
\begin{align}
    \sum_{u,u',y}|\sum_{j'}P_{y^n|u^n,{u^n}',{j^n}'}(y|u,u',j)P_{u^n,{u^n}',j^n}(u,u',j) - \sum_{j'}Q_{Y|UJ}(y|u',j')P_{u^n,{u^n}',{j^n}'}(u,u',j')| \leq c\sqrt{3\eta}
\end{align}
Using the triangle inequality, 
\begin{align}
    \sum_{u,u',y}P_{UU'}(u,u')|\sum_{j} Q_{Y|UJ}(y|u,j)P_{J|UU'}(j|u,u') - \sum_{j'}Q_{Y|UJ}(y|u',j')P_{J'|UU'}(j'|u,u')|\leq 2c\sqrt{3\eta}.
\end{align}
Now by the definition of good codewords and $\widetilde{R}$, any two good codewords $U^n$ and $U^{n'}$ satisfy
\begin{align}
    I(U^n;U^{n'}) \leq \widetilde{R} \leq D(Q_U) - \tau = \min_{\substack{Q_{UU'} s.t.:\\Q_U=Q_{U'},Pr\{U\overset{\mathcal{Q}}{\sim} U'=1\}}} I(U;U') - \tau.
\end{align}
This implies $Pr\{U\overset{\mathcal{Q}}{\sim} U'\} < 1$ and there exists some pairs $(u,u')$ that are not connected. For those pairs, by  \eqref{def: maximal error nonsymmetrizability},
\begin{align}
    \max_{y}|\sum_{j} Q_{Y|UJ}(y|u,j)P_{J|UU'}(j|u,u') - \sum_{j'}Q_{Y|UJ}(y|u',j')P_{J'|UU'}(j'|u,u')| > \varepsilon
\end{align}
for some positive $\varepsilon$. With sufficiently large $n$, one can choose a sufficiently small $\eta$ such that $\varepsilon > 2c\sqrt{4\eta}$, which leads to a contradiction.
This completes the proof of the unambiguity of the decoding rule.

\subsection{Error Analysis}
In the following we denote the selected codeword by $U^n$.
Define the following error events:
\begin{align}
    &\mathcal{E}_0 = \{P_{U^n(m,l),s^n,j^n}\notin \Psi^2_{\eta} \text{for all $l\in \mathcal{C}(m)$}\},\\
    &\mathcal{E}_1 = \{U^n \;\text{is not a good codeword}\},\\
    &\mathcal{E}_2 = \{P_{u^n,j^n,y^n}\notin \Psi^1_{\eta}\},\\
    &\mathcal{E}_3 = \{{U^n}' \;\text{is output by the decoder for some ${U^n}'\neq U^n.$}\}
\end{align}
\subsubsection{Error event $\mathcal{E}_0$}
By Lemma \ref{lem: joint typical lemma}, we have $Pr\{\mathcal{E}_0\}\to 0$ double exponentially fast with a sufficiently large $n$. 
\subsubsection{Error event $\mathcal{E}_1$}
The second error event is the case in which the encoder finds a codeword that is not good. By Lemma \ref{lem: good codeword} and Lemma \ref{lem: number of bad codewords in each subcodebook}, the number of bad codewords is upper bounded by $2^{-n\delta}2^{n\widetilde{R}}$. By a similar argument as \eqref{neq: probability within the bin} we have
\begin{align}
    Pr\{\text{$U^n$ is not a good codeword}\} \leq 2^{-n\nu}
\end{align}
for some positive $\nu>0$ and sufficiently large $n.$
\subsubsection{Error event $\mathcal{E}_2$} The probability $Pr\{\mathcal{E}_2|\mathcal{E}_0^c,\mathcal{E}_1^c\}\to 0$ directly follows from \eqref{neq: pujl qujl}.
\subsubsection{Error event $\mathcal{E}_3$}

Now suppose a codeword $u^n$ is selected with the jamming sequence $j^n$ and an error codeword ${u^n}'$ such that the joint type is $P_{UJU'}$. By Lemma \ref{lem: good codebook}, the number of such codeword ${u^n}'$ in the codebook is upper bounded by
\begin{align}\label{ine: number of wrong codewords maximal error}
    2^{n(|\widetilde{R}-I(U';U,J)|^++\delta)}.
\end{align}
The number of $(s^n,y^n)$ with joint type $P_{UU'SJY}$ such that
\begin{align}
    &D(P_{US}||Q_{US})\leq \eta,\\
    &D(P_{Y|USJ}||Q_{Y|USJ})\leq \eta,\\
    &\sum_{s,y}P_{UU'SJY} =P_{UU'J},\\
    \label{ine: maximal error mutual information lower bound for wrong codeword}&I(Y,S;U'|U,J) \geq\eta
\end{align}
is upper bounded by
\begin{align}\label{ine: number of yn sn maximal error}
    2^{nH(Y,S|U,U',J)}=2^{n(H(Y,S|U,J)-I(Y,S;U'|U,J))}.
\end{align}
Define the set of the corresponding conditional types $P_{YS|UU'J}$ by $\mathcal{D}_{YS|UU'J}$.
The error probability is upper bounded by
\begin{align*}
    &\sum_{P_{U'UJ}}\sum_{{u^n}'\in T_{P_{U'UJ}}^n[u^n,j^n]}\sum_{P_{YS|UU'J}\in \mathcal{D}_{YS|UU'J}}\sum_{(s^n,y^n)\in P_{YSUU'J}^n[u^n,{u^n}',j^n]} Pr\{S^n=s^n|u^n,j^n\} \sum_{x^n}W^n_{Y|XSJ}(y^n|x^n,s^n,j^n)Q^n_{X|US}(x^n|u^n,s^n)\\
    &\overset{(a)}{\leq}(n+1)^{|\mathcal{U}|^2|\mathcal{J}||\mathcal{S}||\mathcal{Y}|}2^{n(|\widetilde{R}-I(U';U,J)|^+ + \delta)}2^{n(H(Y,S|U,J)-I(Y,S;U'|U,J))}2^{-n(H(S|U, J)-\delta)}2^{-n(D(P_{Y|USJ}||Q_{Y|USJ})+H(Y|U,S,J))}\\
    &\overset{(b)}{\leq} 2^{-n(H(Y|U,S,J)+H(S|U,J)-H(Y,S|U,J)+I(Y,S;U'|U,J)-|\widetilde{R}-I(U';U,J)|^+ - \hat{\delta}(\delta))}\\
    &=2^{-n(I(Y,S;U'|U,J)-|\widetilde{R}-I(U';U,J)|^+ - \hat{\delta}(\delta))}
\end{align*}
where $(a)$ follows by Lemma \ref{lem: type counting lemma} (type counting lemma), \eqref{ine: number of wrong codewords maximal error}, \eqref{ine: number of yn sn maximal error} and the fact that for given $(u^n,j^n)$, the probability of any $s^n\in \mathcal{T}^n_{P_{SUJ}}[u^n,j^n]$ is the same and is upper bounded by $1/|\mathcal{T}^n_{P_{SUJ}}[u^n,j^n]|$, $(b)$ follows by the fact that the KL divergence is non-negative. When $R<I(U';U,J)$, by \eqref{ine: maximal error mutual information lower bound for wrong codeword} the error is bounded by
\begin{align}
 (n+1)^{|\mathcal{U}|^2|\mathcal{J}||\mathcal{S}||\mathcal{Y}|}2^{-n(\eta-\hat{\delta}(\delta))}.
\end{align}
When $\widetilde{R} > I(U';U,J)$, it follows that
\begin{align}
    &I(Y,S;U'|U,J)-\widetilde{R}+I(U';U,J) - \hat{\delta}(\delta)\\
    &=I(Y,S,U,J;U')-\hat{\delta}(\delta)-\widetilde{R}\\
    &\geq I(U';Y)-\hat{\delta}(\delta)-\widetilde{R}.
\end{align}
By the definition of the decoding rule there exists an ${s^n}'$ and a ${j^n}'$ such that 
\begin{align}
    &D(P_{y^n|{u^n}',{j^n}'}||Q_{Y|UJ}) \leq \eta,\\
    &D(P_{{u^n}',{s^n}',{j^n}'}||Q_{US}P_{J|US})\leq \eta\\
    &D(P_{{s^n}'}||Q_S)\leq \eta.
\end{align}
One can find a ${Y}'$ such that $P_{Y'U'J'}=\sum_{s}P_{Y'U'S'J'}\overset{\cdot}{=}P_{YU'S'J'}$
and we have $Pr\{\mathcal{E}_3|\mathcal{E}_0^c,\mathcal{E}_1^c,\mathcal{E}_2^c\} \leq 2^{-n\tau} $ for some $\tau >0$ if 
\begin{align}
    \widetilde{R} < I(U';Y')-\hat{\delta}(\delta).
\end{align}
 Considering the worst case, we have shown the achievability of 
\begin{align}
    \min_{P_{J|U}\in\mathcal{P}(\mathcal{J})}\min_{P_S\in\mathcal{P}(\mathcal{S},Q_S,\delta_n)} I(U;Y),
\end{align}
where 
\begin{align}
    \mathcal{P}(\mathcal{S},Q_S,\delta_n)=\{P_S\in\mathcal{P}(\mathcal{S}):D(P_S||Q_S)\leq \delta_n\}
\end{align}
and $\delta_n\to 0$ as $n\to \infty$.
With sufficiently large block length, we have $\delta_n \to 0$, and the proof is completed by combining the bound on $\widetilde{R}-R$. The distortion constraint is ensured by \eqref{ine: type bound of psjuy} and the typical average lemma.

\section{proof of Corollary \ref{coro: strictly causal maximal error}}\label{sec: proof of corollary strictly causal maximal error}

The decoding of $X^n$ only needs a normal maximal error code for the AVC $\{Q_{Y|XJ}:J\in\mathcal{J}\}$. Here we discuss the decoding of $U^n$. As we mentioned in the comments after Corollary \ref{coro: strictly causal maximal error}, the decoding of $U^n$ uses an average error code and is almost the same as the proof of Theorem \ref{the: achievability of strictly causal}. The difference is that we no longer have \eqref{ine: upper bound of i(usx;j)}. We can still prove that it is sufficient to consider $u^n$ and $s^n$ such that $I(U;J|X)< \epsilon$ and $I(S;J|X,U)< \epsilon$. However, for the maximal error we should not only consider $x^n$ such that $I(X;J)<\epsilon$, as those $x^n$ such that $I(X;S)\geq \epsilon$ almost do not contribute to the error probability and can be ignored only when we take the average over all $X^n$.

To solve this problem, we modify the definitions of $\Psi^3_{\eta}$ and $\Psi^4_{\eta}$ as
\begin{align}
    &\Psi^3_{\eta}=\{P_{YXUJ}: D(P_{YXJU} || Q_{YU|XJ}P_{XJ})\leq \eta\},\\
&\Psi^4_{\eta}=\{P_{XSJU}: D(P_{XSJU} || Q_{XUS}P_{J|X})\leq \eta\}.
\end{align}
To show the unambiguity of the decoding rule, we have
\begin{align}
    \eta \geq D(P_{YXUJ}||Q_{YU|XJ}P_{XJ}) = \sum_{x,u,j,u}P_{YXUJ}(y,x,u,j)\log \frac{P_{YXUJ}(y,x,u,j)}{Q_{YU|XJ}(y,u|x,j)P_{XJ}(x,j)}
\end{align}
and
\begin{align}
    \eta \geq I(U,Y;U'|J,X) = \sum_{u,u',x,j,y}P_{XUU'JY}(x,u,u',j,y)\log\frac{P_{U'|UXJY}(u'|u,x,j,y)}{P_{U'|XJ}(u'|x,j)}
\end{align}
Taking the sum of the above two inequalities gives
\begin{align}
    2\eta \geq \sum_{u,u',x,j,y}P_{XUU'JY}(x,u,u',j,y)\log\frac{P_{U'UXJY}(u',u,x,j,y)}{Q_{YU|XJ}(y,u|x,j)P_{U'|XJ}(u'|x,j)P_{XJ}(x,j)}.
\end{align}
Projecting it onto $\mathcal{U}\times\mathcal{U}'\times\mathcal{X}\times\mathcal{Y}$ and applying Pinsker's inequality,
\begin{align}
    &a\sqrt{2\eta}\\
    &\geq\sum_{u,u',x,y} |P_{U'UXY}(u',u,x,y) - \sum_j Q_{YU|XJ}(y,u|x,j)P_{U'|XJ}(u'|x,j)P_{XJ}(x,j)|\\
    &=\sum_{u,u',x,y} |P_{U'UXY}(u',u,x,y) - \sum_j \sum_s W_{Y|XSJ}(y|x,s,j)Q_S(s)Q_{U|XS}(u|x,s)P_{U'|XJ}(u'|x,j)P_{XJ}(x,j)|\\
    &\overset{(a)}{=}\sum_{u,u',x,y} |P_{U'UXY}(u',u,x,y) \\
    &\quad\quad\quad\quad - \sum_j \sum_s W_{Y|XSJ}(y|x,s,j)Q_S(s)\frac{Q_{XU}(x,u)Q_{S|XU}(s|x,u)}{Q_{X}(x)Q_{S}(s)}\frac{P_{J|U'X}(j|u',x)P_{U'X}(u',x)}{P_{XJ}(x,j)}P_{XJ}(x,j)|\\
    &=\sum_{u,u',x,y} |P_{U'UXY}(u',u,x,y) - Q_X(x)P_{U'|X}(u'|x)P_{U|X}(u|x)\sum_{j,s} W_{Y|XSJ}(y|x,s,j)Q_{S|XU}(s|x,u)P_{J|U'X}(j|u',x)|\\
    \label{eq: strictly causal maximal error unam 1}&=\sum_{u,u',x,y} |P_{U'UXJY}(u',u,x,j,y) - Q_X(x)P_{U'|X}(u'|x)P_{U|X}(u|x)\sum_{j} Q_{Y|XUJ}(y|x,u,j)P_{J|U'X}(j|u',x)|,
\end{align}
where $(a)$ follows by applying the Bayes rule to $Q_{U|XS}$ and $P_{U'|XJ},$ the last equality follows by defining $Q_{Y|XUJ}=\sum_{s} W_{Y|XSJ}Q_{S|XU}$. Similarly, for the wrong codeword $U^{n'}$ we have
\begin{align}
    &a\sqrt{2\eta}\\
    \label{eq: strictly causal maximal error unam 2}&\geq \sum_{u,u',x,y} |P_{U'UXY}(u',u,x,y) - Q_X(x)P_{U'|X}(u'|x)P_{U|X}(u|x)\sum_{j} Q_{Y|XUJ}(y|x,u',j')P_{J'|UX}(j'|u,x)|.
\end{align}
Applying the triangle inequality to \eqref{eq: strictly causal maximal error unam 1} and \eqref{eq: strictly causal maximal error unam 2} yields
\begin{align*}
    &2a\sqrt{2\eta}\geq\\
    &\sum_{x,u,u',y}Q_X(x)P_{U'|X}(u'|x)P_{U|X}(u|x)|\sum_{j} Q_{Y|XUJ}(y|x,u,j)P_{J|U'X}(j|u',x) - \sum_{j} Q_{Y|XUJ}(y|x,u',j')P_{J'|UX}(j'|u,x)|.
\end{align*}
By the assumption that $\min_x Q_X(x)\geq \beta_1$ and $\min_{x,u} Q_{U|X}(u|x)\geq \beta_2,$ we have
\begin{align}
    &\frac{2a\sqrt{2\eta}}{\beta_1\beta_2^2}\geq\sum_{x,u,u',y}|\sum_{j} Q_{Y|XUJ}(y|x,u,j)P_{J|U'X}(j|u',x)| - \sum_{j} Q_{Y|XUJ}(y|x,u',j')P_{J'|UX}(j'|u,x)|,
\end{align}
which is the same as \eqref{eq:  channel qyuj upper bound}. Hence, given the channel $Q_{Y|XUJ}$ is nonsymmetrizable-$\mathcal{U}|\mathcal{X}$, the remaining proof is the same as that in Appendix \ref{sec: proof of achievability of strictly causal}. The only difference is that when considering the worst case regarding the jammer, the jamming strategy is $Q_{J|X}$ instead of $Q_J$ as we replace $Q_XP_J$ with $P_{JX}=Q_X P_{J|X}$ in the definitions of $\Psi_{\eta}^3$ and $\Psi_{\eta}^4$.
\section{proof of corollary \ref{coro: noncausal binary output lower bound}}

Given the channel $\mathcal{W}=\{W_{Y|XSJ}\}$ indexed by $j\in\mathcal{J}$ and $Q_S$ with input distributions $Q_{U|S},Q_{X|US}$, we define a new arbitrarily varying channel $\mathcal{Q}=\{Q_{Y|UJ}\}$ such that $Q_{Y|UJ}=\sum_{x,s}W_{Y|XSJ}Q_{X|US}Q_{S|U}$ and restrict the input alphabet $|\mathcal{U}|=2$. Consider the row-convex extension of $\mathcal{Q}$ defined by
\begin{align}
    \overline{\overline{\mathcal{Q}}}=\{Q_{Y|U}\},
\end{align}
where
\begin{align}
    Q_{Y|U}(y|u) = \sum_{j} Q_{Y|UJ}(y|u,j)Q_{J|U}(j|u)
\end{align}
where $Q_{J|U}$ can be any stochastic matrix. 

By Corollary 12.3 and Theorem 12.6 in \cite{csiszar2011information}, it follows that
\begin{align}
    C_m(\mathcal{Q}) = C_m(\overline{\overline{\mathcal{Q}}}) = \min_{Q\in\overline{\overline{Q}}}C(Q),
\end{align}
where $Q\in \overline{\overline{\mathcal{Q}}}$ are normal discrete memoryless channels.
Furthermore, there exists a DMC $Q^*\in \overline{\overline{\mathcal{Q}}}$ such that for any code with the standard minimum distance (SMD) decoder, the error probability of applying this code to any channel in $\overline{\overline{\mathcal{Q}}}^n$ is upper bounded by those over $Q^*$\cite[Theorem 12.6]{csiszar2011information}. Suppose there exists a random constant-composition code $(n,f,\phi)$ with an SMD decoder $\phi$ such that
\begin{align}
    e({Q^*}^n,f,\phi) \leq 2^{-n\nu}.
\end{align}
We have
\begin{align}
    e(\overline{\overline{\mathcal{Q}}}^n,f,\phi)\leq e({Q^*}^n,f,\phi) \leq  2^{-n\nu}.
\end{align}
On account of Lemma 12.3\cite{csiszar2011information}, we also have
\begin{align}
    e(\mathcal{Q}^n,f,\phi) = e(\overline{\overline{\mathcal{Q}}}^n,f,\phi)\leq 2^{-n\nu}.
\end{align}
Hence, it is also a maximum error code for the arbitrarily varying channel $\mathcal{Q}$. Note that as implied by \cite[Lemma 12.5]{csiszar2011information}, the SMD decoder for the channel $Q^*$ is equivalent to a maximum likelihood decoder, and we can always use a random coding argument to construct such a code. We show that one can construct a Gel'fand-Pinsker code for the corresponding AVC $\mathcal{W}$ based on the code for $\mathcal{Q}$. To this end, we partition the codebook into $2^{nR}$ sub-codebooks $\mathcal{C}(m)$ such that the size of each subcodebook satisfies $\frac{1}{n}\log|\mathcal{C}(m)| > I(U;S)$. Each codeword $u^n$ now can be indexed by a unique pair $(m,l)$, where $l$ is the index of $u^n$ within the subcodebook $\mathcal{C}(m).$ We use a sub-optimal decoding strategy that requires the decoder to estimate both $m$ and $l$. For each pair of $(u^n,s^n)$, the encoder generates the channel input $x^n$ by the product law $Q^n_{X|US}=\prod_{i=1}^n Q_{X|US}$. The error probability of a given $u^n$ averaged over all 
random state sequences $s^n$ is 
\begin{align}
    \sum_{x^n,s^n}W^n((\phi^{-1}(m,l))^c|x^n,s^n,j^n)Q^n_{X|US}(x^n|u^n,s^n)Pr\{s^n|u^n\}
\end{align}
The probability $Pr\{s^n|u^n\}$ satisfies
\begin{align}
    Pr\{s^n|u^n\} \leq 2^{-n(H(S|U)-\epsilon)}
\end{align}
for $s^n\in\mathcal{T}^n_{Q_{S|U},\delta}[u^n]$ and $0$ otherwise. It follows that
\begin{align}
    Pr\{s^n|u^n\} \leq \prod_{i=1}^n Q_{S|U}(s_i|u_i) 2^{n\hat{\epsilon}}
\end{align}
for some $\hat{\epsilon}>0.$
To prove this, for $s^n\notin \mathcal{T}^n_{Q_{S|U}}[u^n]$ the inequality always holds. When $s^n\in \mathcal{T}^n_{Q_{S|U},\delta}[u^n]$, we have
\begin{align}
    \prod_{i=1}^n Q_{S|U}(s_i|u_i) \leq 2^{-n(H(S|U) - \epsilon(\delta))}.
\end{align}
By setting $\hat{\epsilon}= [\epsilon(\delta)-\epsilon]^+$, the proof is completed.

Now, we have
\begin{align}
    &\sum_{x^n,s^n}W^n((\phi^{-1}(m,l))^c|x^n,s^n,j^n)Q^n_{X|US}(x^n|u^n,s^n)Pr\{s^n|u^n\} \\
    &\leq \sum_{x^n,s^n}\sum_{y^n\in(\phi^{-1}(m,l))^c } \prod_{i=1}^n W(y_i|x_i,s_i,j_i) Q_{X|SU}(x_i|s_i,u_i) Q_{S|U}(s_i|u_i)2^{n\hat{\epsilon}}\\
    &=\sum_{y^n\in(\phi^{-1}(m,l))^c } \prod_{i=1}^n Q_{Y|UJ}(y_i|u_i,j_i)2^{n\hat{\epsilon}}\\
    &=Q^n_{Y|UJ}((\phi^{-1}(m,l))^c|u^n(m,l),j^n)2^{n\hat{\epsilon}}\\
    &\leq 2^{-n(\nu-[\epsilon(\delta)-\epsilon]^+)}.
\end{align}
The result can be extended to a larger $|\mathcal{U}|$ by \cite[Theorem 12.7]{csiszar2011information}, and this completes the proof.

\bibliographystyle{ieeetr} 
\bibliography{ref}
\end{document}